\newtheorem{theorem}{Theorem}
\newtheorem{proposition}[theorem]{Proposition}
\newtheorem{lemma}[theorem]{Lemma}
\newtheorem{assumption}{H\hspace{-3pt}}
\newcommand{\rset}{\mathbb{R}}
\newcommand{\nset}{\mathbb{N}}
\newcommand{\obs}{\mathsf{Y}}
\newcommand{\hid}{Z}
\newcommand{\Xset}{\mathsf{Z}}
\newcommand{\Xsigma}{\mathcal{Z}}
\newcommand{\pscal}[2]{\left<#1,#2\right>}
\newcommand{\rmd}{\mathrm{d}}
\newcommand{\eqdef}{:=}
\newcommand{\mc}{\mathrm{mc}}
\newcommand{\sa}{\mathrm{sa}}
\newcommand{\barS}{\overline{S}}
\newcommand{\hatS}{\widehat{S}}
\newcommand{\Q}{\mathcal{Q}}
\newcommand{\calL}{\mathcal{L}}
\newcommand{\A}{\mathsf{A}}
\newcommand{\F}{\mathcal{F}}
\newcommand{\PE}{\mathbb{E}}
\newcommand{\PP}{\mathbb{P}}
\newcommand{\1}{\mathbbm{1}}
\newcommand{\Wnorm}[1]{\left| #1 \right|_W}
\newcommand{\sqrtWnorm}[1]{\left| #1 \right|_{\sqrt{W}}}
\newcommand{\WnormM}[1]{\left\| #1 \right\|_W}
\newcommand{\sqrtWnormM}[1]{\left\| #1 \right\|_{\sqrt{W}}}
\newcommand{\D}{\mathsf{D}}
\newcommand{\coint}[1]{\left[#1\right)}
\newcommand{\ocint}[1]{\left(#1\right]}
\newcommand{\ooint}[1]{\left(#1\right)}
\newcommand{\ccint}[1]{\left[#1\right]}
\newcommand{\jacob}[2]{\operatorname{J} #1(#2)}
\def\argmin{\operatorname{argmin}}
\def\argmax{\operatorname{argmax}}
\def\Prox{\operatorname{Prox}}
\begin{document}

\title{Stochastic Proximal Gradient Algorithms  \\ for Penalized Mixed Models}


\author[1]{Gersende Fort}
\author[2,3]{Edouard Ollier}
\author[4]{Adeline Samson}
\affil[1]{IMT UMR5219, Universit\'e de Toulouse, CNRS; F-31062 Toulouse Cedex 9, France.}
\affil[2]{INSERM, U1059, Dysfonction Vasculaire et H\'emostase, Saint Etienne, France.}
\affil[3]{U.M.P.A., Ecole Normale Sup\'erieure de Lyon, CNRS UMR 5669; INRIA, Project-team NUMED. 46 All\'ee d'Italie, 69364 Lyon Cedex 07, France}
\affil[4]{Universit\'e Grenoble-Alpes, Laboratoire Jean Kuntzmann, UMR CNRS 5224}

\maketitle

\begin{abstract}
  Motivated by penalized likelihood maximization in complex models, we
  study optimization problems where neither the function to optimize
  nor its gradient have an explicit expression, but its gradient can
  be approximated by a Monte Carlo technique.  We propose a new
  algorithm based on a stochastic approximation of the
  Proximal-Gradient (PG) algorithm. This new algorithm, named
  Stochastic Approximation PG (SAPG) is the combination of a
  stochastic gradient descent step which - roughly speaking - computes
  a smoothed approximation of the gradient along the iterations, and a
  proximal step.  The choice of the step size and of the Monte Carlo
  batch size for the stochastic gradient descent step in SAPG are
  discussed. Our convergence results cover the cases of biased and
  unbiased Monte Carlo approximations. While the convergence analysis
  of some classical Monte Carlo approximation of the gradient is
  already addressed in the literature
  \citep[see][]{atchade2014stochastic}, the convergence analysis of SAPG
  is new.  Practical implementation is discussed and guidelines to
  tune the algorithm are given. The two algorithms are compared on a
  linear mixed effect model as a toy example. A more challenging
  application is proposed on non-linear mixed effect models in high
  dimension with a pharmacokinetic data set including genomic
  covariates. To our best knowledge, our work provides the first
  convergence result of a numerical method designed to solve penalized
  Maximum Likelihood in a non-linear mixed effect model.

    \textbf{Keywords:} Proximal-Gradient algorithm; Stochastic Gradient;
    Stochastic EM algorithm; Stochastic Approximation; Non-linear mixed effect
    models.
\end{abstract}

\section{Introduction}
Many problems in computational statistics reduce to the maximization of a criterion
\begin{equation}\label{eq:problem}
\argmax_{\theta \in \rset^d} F(\theta), \qquad \text{where} \ F \eqdef \ell -
g,
\end{equation}
and the functions $\ell, g$ satisfy
\begin{assumption}
  \label{A1} the function $g:\;\rset^d\to \ccint{0,+\infty}$ is
  convex, not identically $+\infty$, and lower semi-continuous.
\end{assumption}
\begin{assumption}
  \label{A1bis}  the function $\ell: \rset^d \to \rset \cup \{-\infty\}$ is continuously differentiable
  on $\Theta \eqdef\{\theta \in \rset^d: g(\theta) + |\ell(\theta)| < \infty
  \}$ and its gradient is of the form 
\begin{equation}\label{eq:H1bis:gdt}
\begin{split}
  \nabla \ell(\theta) &= \nabla \phi(\theta) + \Psi(\theta) \barS(\theta),\\ &\qquad \text{with} \ \barS(\theta) \eqdef \int_\Xset S(z)
  \pi_\theta(z) \nu(\rmd z);
  \end{split}
\end{equation}
$\nabla$ denotes the gradient operator and $\pi_\theta \rmd \nu$ is a
probability distribution on a measurable subset $(\Xset, \Xsigma)$ of $\,
\rset^p$. The measurable functions $\nabla \phi: \rset^d \to \rset^d$ and
$\Psi: \rset^d\to \rset^{d \times q}$ are known but the expectation $\barS$ of
the function $S: \Xset \to \rset^q$ with respect to $\pi_\theta \rmd \nu$ may
be intractable.  Furthermore, there exists a finite non-negative constant $L$
such that for all $\theta, \theta' \in \Theta$,
    \begin{equation}\label{eq:H1bis:gdtLip}
    \| \nabla \ell(\theta) - \nabla \ell(\theta')\| \leq L \|\theta - \theta
    '\|;
    \end{equation}
    $\| \cdot\|$ is the Euclidean norm.
\end{assumption}
Examples of functions $\ell$ satisfying Eq.~(\ref{eq:H1bis:gdt}) are given below.
We are interested in numerical methods for solving Eq.~\eqref{eq:problem}, robust to the case when neither $\ell$
nor its gradient have an explicit expression.

\bigskip

Such an optimization problem occurs for example when computing a
penalized maximum likelihood estimator in some parametric model
indexed by $\theta \in \rset^d$: $\ell$ denotes the log-likelihood of
the observations $\obs$ (the dependence upon $\obs$ is omitted) and
$g$ is the penalty term.

The optimization problem Eq.~\eqref{eq:problem} covers the computation of the
maximum when the parameter $\theta$ is restricted to a closed convex subset
$\Theta$ of $\rset^d$; in that case, $g$ is the characteristic function of  $\Theta$ \textit{i.e.}  $g(\theta) =0$ for any $\theta \in \Theta$ and $g(\theta) =+
\infty$ otherwise.  It also covers the case when $g$ is the ridge, the lasso or
the elastic net penalty; and more generally, the case when $g$ is the sum of
lower semi-continuous non-negative convex functions.

A first example of such a function $\ell$ is given by the
log-likelihood in a latent variable model with complete likelihood
from the $q$-parameter exponential family (see
\textit{e.g.}~\cite{bickel:doksum:2015} and \cite{bartholomew:2011} and the
references therein).  In that case, $\ell$ is of the form
\begin{equation} \label{eq:intro:latent}
\theta \mapsto \ell(\theta) \eqdef \log \int_\Xset \exp\left( \phi(\theta) +
  \pscal{S(z)}{\psi(\theta)} \right) \nu(\rmd z),
\end{equation}
where $\pscal{a}{b}$ denotes the scalar of two vectors $a,b \in \rset^l$,
$\phi: \rset^d \to \rset$, $\psi: \rset^d\to \rset^q$ and $S: \Xset \to
\rset^q$ are measurable functions, and $\nu$ is a $\sigma$-finite positive
measure on $(\Xset,\Xsigma)$.  The quantity $\theta \mapsto \phi(\theta) +
\pscal{S(\hid)}{\psi(\theta)}$ is known as the complete log-likelihood, and
$\hid$ is the latent data vector.  Under regularity conditions, we have
\begin{equation}\label{eq:intro:latent:gdt}
\begin{split}
\nabla  \ell(\theta) &=  \nabla \phi(\theta) + \jacob{\psi}{\theta}  \int_\Xset S(z) \ 
\pi_{\theta}(z) \nu(\rmd z), \\ & \text{with} \ \pi_{\theta}(z) \eqdef
\frac{\exp(\pscal{S(z)}{\psi(\theta)})}{\int_\Xset \exp(\pscal{S(u)}{\psi(\theta)}) \nu(\rmd u)},
\end{split}
\end{equation}
where $\jacob{\psi}{\theta}$ denotes the transpose of the jacobian matrix of the function $\psi$ at $\theta$.

A second example is given by the
log-likelihood of $N$ independent observations
$(\obs_{1}, \cdots, \obs_{N})$ from a log-linear model for Markov
random fields. 
In this model, $\ell$ is given by
\begin{equation}\label{eq:intro:gibbs}
\begin{split}
\theta \mapsto &\ell(\theta)\eqdef  \\ &\sum_{k=1}^N \pscal{S(\obs_{k})}{\theta} - N \log \int_\Xset \exp\left( \pscal{S(z)}{\theta} \right) \nu(\rmd z).
\end{split}
\end{equation}
The function $\theta \mapsto \int_\Xset \exp\left( \pscal{S(z)}{\theta} \right)
\nu(\rmd z)$ is known as the partition function.  Under regularity
conditions, we have
\begin{equation}\label{eq:intro:gibbs:gdt}
\begin{split}
\nabla \ell(\theta) &= \sum_{k=1}^N S(\obs_{k} ) - N \int_\Xset S(z) \ \pi_\theta(z) \nu(\rmd z),
\\ &\qquad \text{with} \, \pi_\theta(z) \eqdef \frac{\exp\left(
    \pscal{S(z)}{\theta} \right)}{\int \exp\left( \pscal{S(u)}{\theta} \right)
  \, \nu(\rmd u)}.
  \end{split}
\end{equation}

\bigskip 

In these two examples, the integrals in Eqs.~\eqref{eq:intro:latent}
to \eqref{eq:intro:gibbs:gdt} are intractable except for toy examples:
neither the function $\ell$ nor its gradient are available.
Nevertheless, all the integrals in Eqs.~
\eqref{eq:intro:latent}-\eqref{eq:intro:gibbs:gdt} can be approximated
by a Monte Carlo sum  \citep[see \textit{e.g.}][]{casella:robert:2004}.  In the
first example, this Monte Carlo approximation consists in imputing the
missing variables $z$; it is known that such an imputation is far more
efficient when the Monte Carlo samples are drawn under
$\pi_\theta \rmd \nu$, \textit{i.e.} the \textit{a posteriori} distribution of the
missing variables given the observations (see
Eq.~\eqref{eq:intro:latent:gdt}) than when they are drawn under the \textit{a
priori} distribution. This remark is the essence of the Expectation
Maximization (EM) algorithm \cite[introduced in][]{dempster:laird:rubin:1977}, a popular iterative procedure for
maximizing  the log-likelihood $\ell$ in latent variable models.

In this paper, we are interested in first order optimization methods to solve
Eq.~\eqref{eq:problem}, that is methods based on the gradient. In
Section~\ref{sec:algos}, we describe two stochastic first-order descent
methods, which are stochastic perturbations of the Proximal-Gradient (PG)
algorithm (introduced in ~\cite{combettes:pesquet:2011}; see also
~\cite{Beck2009,Parikh2013} for literature reviews on Proximal-Gradient
algorithms). The two algorithms are the Monte Carlo Proximal-Gradient algorithm
(MCPG) and the Stochastic Approximation Proximal-Gradient algorithm (SAPG),
which differ in the approximation of the gradient $\nabla \ell$ and more
precisely, of the intractable integral $\barS(\theta)$ (see
Eq.~\eqref{eq:H1bis:gdt}).  In MCPG, at each iteration $n$ of the algorithm,
this expectation evaluated at the current point $\theta_n$ is approximated by a
Monte Carlo sum computed from samples $\{\hid_{1,n}, \cdots,
\hid_{m_{n+1},n}\}$ approximating $\pi_{\theta_n} \rmd \nu$.  In SAPG, the
approximation is computed as a Monte Carlo sum based on all the points drawn
during all the previous iterations of the algorithm $\{\hid_{i,j}, i \leq m_{j+1},
j \leq n\}$.

When $\ell$ is the log-likelihood of a latent variable model, we prove in
Section~\ref{sec:algos:gdtproxEM} that our algorithms are Generalized EM
algorithms \cite[see \textit{e.g.}][]{McLachlan:Krishnan:2008,ng:krishnan:2012}
combined with a stochastic E-step: in MCPG and SAPG, the stochastic E-step
mimics respectively the E-step of the Monte Carlo EM
\citep{wei:tanner:1990,levine:fan:2004} and the E-step of the Stochastic
Approximation EM \cite[see \textit{e.g.}][]{delyon1999convergence}.

Section~\ref{sec:theorie} is devoted to the convergence analysis of MCPG and
SAPG.  These algorithms can be seen as perturbed Proximal-Gradient algorithms
when the perturbation comes from replacing the exact quantity $\barS(\theta_n)$
by a Monte Carlo approximation $S_{n+1}$ at each iteration of the algorithm.
Our convergence analysis covers the case when the points $\{\hid_{1,n}, \cdots,
\hid_{m_{n+1},n}\}$ are sampled from a Markov chain Monte Carlo sampler (MCMC)
with target distribution $\pi_{\theta_n} \rmd \nu$ - and therefore, it also
covers the case of i.i.d. draws. This implies that the estimator $S_{n+1}$ of
$\barS(\theta_n)$ may be biased. There exist many contributions in the
literature on the convergence of perturbed Proximal-Gradient algorithms when
$\ell$ is concave, but except in the works by~\cite{atchade2014stochastic} and
\cite{combettes:pesquet:2015:biased}, most of them assume that the error
$S_{n+1} - \barS(\theta_n)$ is unbiased and gets small when $n \to \infty$ \citep[see
\textit{e.g.}][]{Rosasco:Villa:Vu:2014,combettes:pesquet:2015,rosasco:villa:vu:2015,lin:rosasco:villa:zhou:2015}.
In this paper, we provide sufficient conditions for the almost-sure convergence
of MCPG and SAPG under the assumption that $\ell$ is concave and with no
assumptions on the bias of $S_{n+1} - \barS(\theta_n)$. The convergence
analysis of MCPG is a special case of \cite[Section 4]{atchade2014stochastic};
to our best knowledge, the convergence of SAPG is a new result.

Practical implementation is discussed in Section \ref{sec:practical}. Some
guidelines are given in
Section~\ref{sec:guidelines} to choose the sequences involved in the stochastic
approximation procedures.  Then, MCPG and SAPG are compared through a toy
example in Section~\ref{sec:simuLongTime}. A more challenging application to penalized
inference in a mixed effect model is detailed in
Section~\ref{sec:applications}. Mixed models are applied to analyze repeated
data in a population of subjects.  The $N$ independent vectors of observations $( \obs_k, k=1, \ldots,
N)$ of the $N$ subjects are modeled by
\begin{equation}\label{eq:ModeleMixte}
  \obs_k = f(t_k, \hid^{(k)}) + \varepsilon_k,
  \end{equation}
  with individual latent variable $\hid^{(k)}$ independent of the
  measurement error vector $\varepsilon_k$ and $f$ the regression
  function that depends on the vector of observation times $t_k$.
  Mixed models thus enter the class of models given by Eq.~(\ref{eq:intro:latent})
  with latent variables $\hid = (\hid^{(1)}, \ldots, \hid^{(N)})$.
  When a covariate model is introduced, the number of covariates can be large,
  but with only a few of them being influential. This is a sparse estimation
  problem and the selection problem can be treated through the optimization of
  a penalized version of the log-likelihood Eq.~(\ref{eq:intro:latent}).  In
  non-linear mixed models, the optimization problem is not explicit and
   stochastic penalized versions of EM \citep{bertrand2013multiple,
    ollier2016saem, chen2017NLMEM} have been proposed. To our best knowledge,
  stochastic Proximal-Gradient algorithms have not been proposed for mixed
  models.

\section{Stochastic Proximal-Gradient based algorithms}
In this section, we describe first-order based algorithms for solving
Eq.~\eqref{eq:problem} under the assumptions H\ref{A1} and H\ref{A1bis},
when the expectation $\barS(\theta)$ in Eq.~\eqref{eq:H1bis:gdt} is intractable.

\subsection{The MCPG and SAPG algorithms}
\label{sec:algos}
Both MCPG and SAPG are iterative algorithms, each update relies on the
combination of a gradient step and a proximal operator.  The proximal map
(\cite{Moreau1962}, see also \cite{bauschke:combettes:2011,Parikh2013})
associated to a convex function $g$ is defined for any $\gamma>0$ and $\theta
\in \rset^d$ by
\begin{equation}\label{eq:def:Proximal}
\Prox_{\gamma,g}(\theta)  \eqdef \argmin_{\tau \in \Theta} \left\{ g(\tau) + \frac{1}{2\gamma}
  \| \theta - \tau \|^2 \right\}.
\end{equation}
Note that under H\ref{A1}, for any $\gamma >0$ and $\theta \in \rset^d$, there
exists an unique point $\tau$ minimizing the RHS of
Eq.~\eqref{eq:def:Proximal}. This proximal operator may have an explicit
expression. When $g$ is the characteristic function
\[
g(\theta) \eqdef \left\{
  \begin{array}{ll}
    0 &   \text{if $ \theta \in \Theta$} \\
    + \infty & \text{otherwise},
  \end{array}
\right.
\]
for some closed convex set $\Theta \subseteq \rset^d$, then $g(\theta)$ is the
projection of $\theta$ on $\Theta$. This projection is explicit for example
when $\Theta$ is an hyper-rectangle. Another example of explicit proximal
operator is the case associated to the so-called elastic net penalty
\textit{i.e.}  $g_{\lambda,\alpha}(\theta) \eqdef \lambda \left(
  \frac{1-\alpha}{2} \sum_{i=1}^d \theta_i^2\right.$ $ \left.+ \alpha
  \sum_{i=1}^d |\theta_i| \right)$ with $\theta=(\theta_1, \cdots, \theta_d)$,
$\lambda >0$ and $\alpha \in \ocint{0,1}$, then for any component $i \in \{1,
\cdots, d \}$,
\begin{multline*}
  \left( \Prox_{\gamma,g_{\lambda, \alpha}}(\theta) \right)_i  \\
  = \frac{1}{1+\gamma \lambda (1-\alpha)} \left\{
  \begin{array}{ll}
    0 & \text{if $ |\theta_i| \leq \gamma \lambda \alpha$,} \\
    \theta_i - \gamma \lambda \alpha & \text{if $\theta_i \geq \gamma \lambda \alpha$,} \\
\theta_i + \gamma \lambda \alpha & \text{if $\theta_i \leq -\gamma \lambda \alpha$.}
  \end{array}
\right.
\end{multline*}

\bigskip

The Proximal-Gradient algorithm for solving the optimization problem
Eq.~\eqref{eq:problem} produces a sequence $\{\theta_n, n \geq 0 \}$ as
follows: given a $\ocint{0,1/L}$-valued sequence $\{\gamma_n, n \geq 0 \}$,
\begin{equation}\label{eq:PG:iteration}
\begin{split}
\theta_{n+1} &= \Prox_{\gamma_{n+1},g} \left( \theta_n + \gamma_{n+1}
  \nabla \ell(\theta_n) \right) \\ &= \Prox_{\gamma_{n+1},g} \left(
  \theta_n + \gamma_{n+1} \{ \nabla \phi(\theta_n) + \Psi(\theta_n)
  \barS(\theta_n) \} \right).
 \end{split}
\end{equation}
This update scheme can be explained as follows: by H\ref{A1bis}, we have for
any $L \leq \gamma_{n+1}^{-1}$,
\[
\begin{split}
&F(\theta) = \ell(\theta) -g(\theta) \\ &\geq \ell(\theta_n) -
\pscal{\nabla \ell(\theta_n)}{\theta-\theta_n} - \frac{1}{2
  \gamma_{n+1}} \|\theta - \theta_n \|^2 -g(\theta).
\end{split}
\]
This minorizing function is equal to $F(\theta_n)$ at the point $\theta_n$; the
maximization (w.r.t. $\theta$) of the RHS yields $\theta_{n+1}$ given by Eq.~
\eqref{eq:PG:iteration}. The Proximal-Gradient algorithm is therefore a
Minorize - Majorization (MM) algorithm and the ascent property holds:
$F(\theta_{n+1}) \geq F(\theta_n)$ for all $n$. Sufficient conditions for the
convergence of the Proximal-Gradient algorithm Eq.~\eqref{eq:PG:iteration} can be
derived from the results by \cite{combettes:wajs:2005,Parikh2013} or from
convergence analysis of MM algorithms 
\cite[see \textit{e.g.}][]{zangwill:1969,meyer:1976}.

\bigskip

In the case $\barS(\theta)$ can not be computed, we describe two strategies for
a Monte Carlo approximation. At iteration $n+1$, given the current value of the
parameter $\theta_n$, $m_{n+1}$ points $\{\hid_{1,n}, \cdots, \hid_{m_{n+1},n}
\}$ from the path of a Markov chain with target distribution $\pi_{\theta_n}
\rmd \nu$ are sampled. A first strategy consists in replacing $\barS(\theta_n)$
by a Monte Carlo mean:
\begin{equation}
  \label{eq:MonteCarloApprox}
  S_{n+1}^{\mc} \eqdef \frac{1}{m_{n+1}} \sum_{j=1}^{m_{n+1}} S(\hid_{j,n}).
\end{equation}
A second strategy, inspired by stochastic approximation methods \cite[see \textit{e.g.}][]{benveniste:metivier:priouret:1990,kushner:yin:2003} consists in
replacing $\barS(\theta_n)$ by a stochastic approximation
\begin{equation}
  \label{eq:SAApprox}
  S_{n+1}^{\sa} \eqdef  (1-\delta_{n+1}) S_n^{\sa} + \frac{\delta_{n+1}}{m_{n+1}} \sum_{j=1}^{m_{n+1}} S(\hid_{j,n}),
\end{equation}
where $\{\delta_n, n \geq 0 \}$ is a deterministic
$\ccint{0,1}$-valued sequence. These two strategies yield respectively
the Monte Carlo Proximal-Gradient (MCPG) algorithm (see
Algorithm~\ref{algo:MCPG}) and the Stochastic Approximation
Proximal-
Gradient (SAPG) algorithm (see Algorithm~\ref{algo:SAPG}).

\noindent\begin{minipage}{\linewidth}
\begin{algorithm}[H]
  \SetKwInOut{Input}{input} \SetKwInOut{Output}{output} \Input{The
    initial values $\theta_0 \in \Theta$ and  $\hid_{m_0,-1} \eqdef z_\star$, a $\ocint{0,1/L}$-valued
    sequence $\{\gamma_n, n \geq 0\}$ and an integer valued sequence $\{m_n, n\geq 0\}$} \Output{The sequence
    $\{\theta_n, n\geq 0 \}$} \For{$n \geq 0$}{ \emph{Simulation-step}
    \; sample a path $\hid_{1,n}, \cdots, \hid_{m_{n+1},n}$ of a Markov chain
    with invariant distribution $\pi_{\theta_n} \rmd \nu$ and started from $\hid_{m_{n},n-1}$\;
    \emph{Expectation step} \; Compute $S_{n+1}^\mc$ as in Eq.~\eqref{eq:MonteCarloApprox}
    \label{line:algo:EM:Estep} \; \emph{Maximization step} \;
    Set $\theta_{n+1} = \Prox_{\gamma_{n+1},g}\left( \theta_n +
      \gamma_{n+1} \{\nabla \phi(\theta_n) + \Psi(\theta_n) S_{n+1}^\mc
      \}\right)$
\label{line:algo:EM:Mstep} }
\caption{The Monte Carlo Proximal-Gradient algorithm for the
  maximization of $\ell -g$ \label{algo:MCPG}}
\end{algorithm}
\end{minipage}

\noindent\begin{minipage}{\linewidth}
\begin{algorithm}[H]
  \SetKwInOut{Input}{input} \SetKwInOut{Output}{output} \Input{The initial
    values $\theta_0 \in \Theta$,  $\hid_{m_0,-1} \eqdef z_\star$ and $S_0^\sa \eqdef s_\star$, a $\ocint{0,1/L}$-valued sequence $\{\gamma_n,
    n \geq 0\}$, a $\ccint{0,1}$-valued sequence $\{\delta_n, n\geq 0\}$ and an
    integer valued sequence $\{m_n, n\geq 0\}$} \Output{The sequence
    $\{\theta_n, n\geq 0 \}$} \For{$n \geq 0$}{ \emph{Simulation-step} \;
    sample a path $\hid_{1,n}, \cdots, \hid_{m_{n+1},n}$ of a Markov chain with
    invariant distribution $\pi_{\theta_n} \rmd \nu$ and started from $\hid_{m_{n},n-1}$\; \emph{Expectation step}
    \; Compute $S_{n+1}^\sa$ as in Eq.~\eqref{eq:SAApprox}
    \label{line:algo:SAPG:Estep} \; \emph{Maximization step} \;
    Set $\theta_{n+1} = \Prox_{\gamma_{n+1},g}\left( \theta_n + \gamma_{n+1}
      \{\nabla \phi(\theta_n) + \Psi(\theta_n) S_{n+1}^\sa \}\right)$
\label{line:algo:SAPG:Mstep} }
\caption{The Stochastic Approximation Proximal-Gradient algorithm for
  the maximization of $\ell -g$ \label{algo:SAPG}}
\end{algorithm}
\end{minipage}

In Section~\ref{sec:theorie}, we prove the convergence of MCPG to the
maximum points of $F$ when $\ell$ is concave, for different choices of
the sequences $\{\gamma_n, m_n, n\geq 0\}$ including decreasing or
constant step sizes $\{\gamma_n,n \geq 0 \}$ and respectively, constant
or increasing batch size $\{m_n, n \geq 0\}$.  We also establish the
convergence of SAPG to the maximum points (in the concave case); only
the case of a constant batch size $\{m_n,n \geq 0\}$ and a decreasing
step size $\{\gamma_n, n \geq 0\}$ is studied, since this framework
corresponds to the {\em Stochastic Approximation} one from which the
update rule Eq.~\eqref{eq:SAApprox} is inherited \cite[see details
in][]{delyon1999convergence}.  From a numerical point of view, the
choice of the sequences $\{\gamma_n, n\geq 0\}$,
$\{\delta_n, n \geq 0 \}$ and $\{m_n, n \geq 0\}$ is discussed in
Section~\ref{sec:practical}: guidelines are given in
Section~\ref{sec:guidelines} and the behavior of the algorithm is
illustrated through a toy example in
Section~\ref{sec:simuLongTime}.

\subsection{Case of latent variable models from the exponential family}
\label{sec:algos:gdtproxEM}
In this section, we consider the case when $\ell$ is given by
Eq.~\eqref{eq:intro:latent}. A classical approach to solve penalized maximum
likelihood problems in latent variables models with complete likelihood from
the exponential family is the Expectation-Maximization (EM) algorithm or a
generalization called the Generalized EM (GEM) algorithm
\citep{dempster:laird:rubin:1977,McLachlan:Krishnan:2008,ng:krishnan:2012}.  Our
goal here, is to show that MCPG and SAPG are stochastic perturbations of a GEM
algorithm.

\bigskip

The EM algorithm is an iterative algorithm: at each iteration, given
the current parameter $\theta_n$, the quantity
$\Q(\theta \vert \theta_n)$, defined as the conditional expectation of
the complete log-likelihood under the \textit{a posteriori} distribution for
the current fit of the parameters, is computed:
\begin{equation}\label{eq:def:QdeEM}
  \Q(\theta \vert \theta') \eqdef \phi(\theta) + \pscal{\barS(\theta')}{\psi(\theta)}.
\end{equation}
The EM sequence $\{\theta_n, n\geq 0\}$ for the maximization of the penalized
log-likelihood $\ell -g$ is given by   \citep[see][Section
1.6.1.]{McLachlan:Krishnan:2008}
\begin{equation} \label{eq:EM:iteration}
\theta_{n+1} = \argmax_{\theta \in \Theta} \left\{ \phi(\theta) +
  \pscal{\barS(\theta_n)}{\psi(\theta)} -g(\theta) \right\}.
\end{equation}
When $\barS(\theta)$ is intractable, it was proposed to replace
$\barS(\theta_n)$ in this EM-penalized algorithm by an approximation $S_{n+1}$
- see Algorithm~\ref{algo:EM}. When $S_{n+1} = S_{n+1}^{mc}$ (see
Eq.~\eqref{eq:MonteCarloApprox}), this yields the so-called Monte Carlo-EM
penalized algorithm (MCEM-pen), trivially adapted from MCEM proposed
by~\cite{wei:tanner:1990,levine:fan:2004}.  Another popular strategy is to
replace $\barS(\theta_n)$ by $S_{n+1}^\sa$ (see Eq.~\eqref{eq:SAApprox})
yielding to the so-called Stochastic Approximation-EM penalized algorithm
(SAEM-pen) - (see~\cite{delyon1999convergence} for the unpenalized version).

\noindent\begin{minipage}{\linewidth}
\begin{algorithm}[H]
  \SetKwInOut{Input}{input} \SetKwInOut{Output}{output} \Input{The
    initial value $\theta_0 \in \Theta$} \Output{The sequence
    $\{\theta_n, n\geq 0 \}$} \For{$n \geq 0$}{\emph{E-step} \;
    Compute an approximation $S_{n+1}$ of $\barS(\theta_n)$
    \label{line:algo:EM:Estep} \; \emph{M-step} \; Set $\theta_{n+1} = \argmax_{\theta} \{ \phi(\theta) + \pscal{S_{n+1}}{\psi(\theta)} -g(\theta) \}$ 
\label{line:algo:EM:Mstep} }
\caption{Perturbed EM-penalized algorithms for the maximization of $\ell -g$ \label{algo:EM}}
\end{algorithm}
\end{minipage}

When the maximization of Eq.~\eqref{eq:def:QdeEM} is not explicit, the update
of the parameter is modified as follows, yielding the Generalized EM-penalized
algorithm~(GEM-pen):

\begin{equation}\label{eq:GEM:iteration}
\begin{split}
  \theta_{n+1} \ \text{s.t.} \ &\phi(\theta_{n+1}) +
  \pscal{\barS(\theta_n)}{\psi(\theta_{n+1})} -g(\theta_{n+1}) \\ &\geq
  \phi(\theta_n) + \pscal{\barS(\theta_n)}{\psi(\theta_n)} -g(\theta_n).
  \end{split}
\end{equation}
This update rule still produces a sequence $\{\theta_n, n\geq 0\}$ satisfying
the ascent property $F(\theta_{n+1}) \geq F(\theta_n)$ which is the key
property for the convergence of EM  \citep[see \textit{e.g.}][]{wu:1983}.  Here
again, the approximations defined in Eq.~(\ref{eq:MonteCarloApprox}) and
Eq.~(\ref{eq:SAApprox}) can be plugged in the GEM-pen update
Eq.~\eqref{eq:GEM:iteration} when $\barS$ is not explicit.

We show in the following proposition that the sequence $\{\theta_n, n \geq 0\}$
produced by the Proximal-Gradient algorithm Eq.~\eqref{eq:PG:iteration} is a
GEM-pen sequence since it satisfies the inequality Eq.~\eqref{eq:GEM:iteration}. As
a consequence, MCPG and SAPG are stochastic GEM-pen algorithms.
\begin{proposition}
  \label{prop:monotonicity:EMprox}
  Let $g$ satisfying H\ref{A1} and $\ell$ be of the form
  Eq.~\eqref{eq:intro:latent} with continuously differentiable functions
  $\phi: \rset^d \to \rset$, $\psi:\rset^d \to \rset^q$ and
  $S: \Xset \to \rset^q$.  Set
  $\Theta \eqdef \{ g +|\ell| < \infty \}$. Define
  $\barS: \Theta \to \rset^q$ by
  $\barS(\theta) \eqdef \int_\Xset S(z) \, \pi_\theta(z) \, \nu(\rmd
  z)$
  where $\pi_\theta$ is given by Eq.~\eqref{eq:intro:latent:gdt}. Assume
  that there exists a constant $L>0$ such that for any
  $s \in \barS(\Theta)$, and any $\theta, \theta' \in \Theta$,
\begin{multline} \label{eq:HYP}
 \| \nabla \phi(\theta) - \nabla \phi(\theta') + \left(
  \jacob{\psi}{\theta} - \jacob{\psi}{\theta'}\right) s \|  \\ \leq L \|
\theta - \theta' \|.
\end{multline}
Let $\{\gamma_n, n \geq 0 \}$ be a (deterministic) positive sequence
such that $\gamma_n \in \ocint{0, 1/L}$ for all $n \geq 0$.  

Then the Proximal-Gradient algorithm Eq.~\eqref{eq:PG:iteration} is a GEM-pen
algorithm for the maximization of $\ell-g$.
\end{proposition}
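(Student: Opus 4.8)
The plan is to recognize the Proximal-Gradient update Eq.~\eqref{eq:PG:iteration} as exactly one step of a proximal-gradient scheme applied to the surrogate $\theta\mapsto\Q(\theta\vert\theta_n)-g(\theta)$, and then to run the classical Minorize-Maximize argument to obtain the ascent inequality Eq.~\eqref{eq:GEM:iteration} that defines a GEM-pen sequence. Fix $n\ge 0$. First I would observe that, for the model Eq.~\eqref{eq:intro:latent}, comparing Eq.~\eqref{eq:H1bis:gdt} with Eq.~\eqref{eq:intro:latent:gdt} forces $\Psi(\theta)=\jacob{\psi}{\theta}$; hence, with $\Q$ defined by Eq.~\eqref{eq:def:QdeEM}, one has $\nabla_\theta\Q(\theta\vert\theta')=\nabla\phi(\theta)+\jacob{\psi}{\theta}\barS(\theta')$, and at $\theta=\theta'=\theta_n$ this equals $\nabla\phi(\theta_n)+\Psi(\theta_n)\barS(\theta_n)=\nabla\ell(\theta_n)$. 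Therefore the update Eq.~\eqref{eq:PG:iteration} reads $\theta_{n+1}=\Prox_{\gamma_{n+1},g}\bigl(\theta_n+\gamma_{n+1}\,g_n\bigr)$ with $g_n\eqdef\nabla\phi(\theta_n)+\jacob{\psi}{\theta_n}\barS(\theta_n)$.

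Next I would set $h\eqdef\Q(\cdot\vert\theta_n)$, so $\nabla h(\theta)=\nabla\phi(\theta)+\jacob{\psi}{\theta}\barS(\theta_n)$ and $g_n=\nabla h(\theta_n)$. Since $\barS(\theta_n)\in\barS(\Theta)$, the hypothesis Eq.~\eqref{eq:HYP} says precisely that $\nabla h$ is $L$-Lipschitz; the descent lemma then gives, for every $\theta\in\Theta$ and $\gamma\in\ocint{0,1/L}$,
\[
h(\theta)-g(\theta)\ \ge\ h(\theta_n)+\pscal{\nabla h(\theta_n)}{\theta-\theta_n}-\frac{1}{2\gamma}\|\theta-\theta_n\|^2-g(\theta)\ =:\ m_n(\theta),
\]
with equality at $\theta=\theta_n$. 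Completing the square, $m_n(\theta)=h(\theta_n)+\frac{\gamma}{2}\|g_n\|^2-\frac{1}{2\gamma}\|\theta-(\theta_n+\gamma g_n)\|^2-g(\theta)$, so by Eq.~\eqref{eq:def:Proximal} the maximizer of $m_n$ over $\Theta$ (unique by H\ref{A1}) is $\Prox_{\gamma,g}(\theta_n+\gamma g_n)$; for $\gamma=\gamma_{n+1}$ this is exactly $\theta_{n+1}$, so in particular $\theta_{n+1}\in\Theta$. Combining the minorization, the maximization property $m_n(\theta_{n+1})\ge m_n(\theta_n)$, and the equality at $\theta_n$,
\[
\phi(\theta_{n+1})+\pscal{\barS(\theta_n)}{\psi(\theta_{n+1})}-g(\theta_{n+1})\ \ge\ m_n(\theta_{n+1})\ \ge\ m_n(\theta_n)\ =\ \phi(\theta_n)+\pscal{\barS(\theta_n)}{\psi(\theta_n)}-g(\theta_n),
\]
which is the GEM-pen inequality Eq.~\eqref{eq:GEM:iteration}; since $n$ is arbitrary, the sequence Eq.~\eqref{eq:PG:iteration} is a GEM-pen sequence.

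The step I expect to require the most care is the descent-lemma inequality: obtaining the quadratic minorant of $h$ from the $L$-Lipschitz property of $\nabla h$ goes through integrating $\nabla h$ along the segment $[\theta_n,\theta]$, so one needs that segment to lie in the region where Eq.~\eqref{eq:HYP} applies --- precisely the implicit ingredient already used just above in the excerpt to present the Proximal-Gradient algorithm as an MM scheme for $F=\ell-g$ (there via the global bound Eq.~\eqref{eq:H1bis:gdtLip}). Assumption Eq.~\eqref{eq:HYP} is designed exactly to furnish, uniformly in $\theta'\in\Theta$, the Lipschitz constant for the maps $\theta\mapsto\nabla_\theta\Q(\theta\vert\theta')$; once that is in hand, the identification of $\theta_{n+1}$ with $\argmax_{\Theta}m_n$ and the displayed chain of inequalities are routine. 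The very same computation, with $\barS(\theta_n)$ replaced by the Monte Carlo surrogate $S_{n+1}$, shows that MCPG and SAPG are stochastic GEM-pen algorithms.
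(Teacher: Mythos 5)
Your proof is correct and follows essentially the same route as the paper: the paper's Lemma~\ref{lem:Taylor1:Qem} is exactly your descent-lemma minorization of $\theta\mapsto\phi(\theta)+\pscal{s}{\psi(\theta)}$ at $s=\barS(\theta_n)$ followed by the same completion of the square, and the paper then identifies the maximizer of the resulting surrogate with $\theta_{n+1}$ via the proximal map just as you do. Your remark that the minorization requires the segment $[\theta_n,\theta]$ to stay where Eq.~\eqref{eq:HYP} applies is a fair point of care that the paper's Taylor-expansion argument shares implicitly, but it does not change the argument.
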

The proof is postponed in Appendix~\ref{sec:proof:EMprox}. The assumption
Eq.~\eqref{eq:HYP} holds when $\Theta$ is compact and $\barS$ (resp.  $\phi$
and $\psi$) are continuous (resp. twice continuously differentiable). Note also
that for any $\theta$, $\theta^{'} \in \Theta$ and $s \in \barS(\Theta)$, we
have $\left(\jacob{\psi}{\theta} - \jacob{\psi}{\theta'}\right) s =0$ if
$\barS(\theta) \in \mathrm{Ker}( \jacob{\psi}{\theta'})$ for any $\theta$,
$\theta^{'} \in \Theta$.

\section{Convergence of MCPG and SAPG}
\label{sec:theorie}
The convergence of MCPG and SAPG is established by applying recent results
from~\cite{atchade2014stochastic} on the convergence of perturbed
Proximal-Gradient algorithms.  \cite[Theorem 2]{atchade2014stochastic} applied
to the case $\nabla \ell(\theta)$ is of the form $\nabla \phi(\theta) +
\Psi(\theta) \barS(\theta)$, where $\barS(\theta)$ is an intractable
expectation and $\nabla \phi, \Psi$ are explicit, yields
\begin{theorem} \label{theo:Cvg:perturbedEMprox} Assume H\ref{A1},
  H\ref{A1bis}, $\theta \mapsto \ell(\theta)$ is concave, and the set $\calL
  \eqdef \argmax_{\theta \in \Theta} F(\theta)$ is a non empty subset of
  $\Theta$.  Let $\{\theta_n, n \geq 0 \}$ be given by
  \[
\theta_{n+1} = \Prox_{\gamma_{n+1},g}\left( \theta_n + \gamma_{n+1}\left\{\nabla \phi(\theta_n) + \Psi(\theta_n) S_{n+1} \right\} \right),
\]
with a $\ocint{0,1/L}$-valued stepsize sequence
$\{\gamma_n, n \geq 0 \}$ satisfying $ \sum_n \gamma_n = + \infty$. If
the  series 
  \begin{align*}
    & \sum_n \gamma_{n+1}   \pscal{ \Psi(\theta_n) \left(S_{n+1} - \barS(\theta_n) \right)}{ T_{\gamma_{n+1}}(\theta_n) }, \\
    & \sum_n \gamma_{n+1}  \Psi(\theta_n) \left(S_{n+1} - \barS(\theta_n) \right), \\
    & \sum_n \gamma_{n+1}^2 \| \Psi(\theta_n) \left(S_{n+1} - \barS(\theta_n) \right) \|^2,
  \end{align*}
  converge,  where
\[
T_\gamma(\theta) \eqdef \Prox_{\gamma,g}(\theta +\gamma\left\{\nabla
  \phi(\theta) + \Psi(\theta) \barS(\theta)\right\} ),
\]
then there exists $\theta_\infty \in \calL$ such that $\lim_n
  \theta_n = \theta_\infty$.
\end{theorem}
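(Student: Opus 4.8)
The plan is to obtain Theorem~\ref{theo:Cvg:perturbedEMprox} as the specialization of the deterministic convergence result for perturbed Proximal-Gradient recursions established in \cite[Theorem~2]{atchade2014stochastic}. First I would put the iteration in perturbed form: set
\[
\eta_{n+1} \eqdef \Psi(\theta_n)\left(S_{n+1} - \barS(\theta_n)\right),
\]
so that, by H\ref{A1bis} and the identity $\nabla\ell(\theta_n) = \nabla\phi(\theta_n) + \Psi(\theta_n)\barS(\theta_n)$, the recursion defining $\{\theta_n, n\ge0\}$ reads $\theta_{n+1} = \Prox_{\gamma_{n+1},g}\!\left(\theta_n + \gamma_{n+1}\{\nabla\ell(\theta_n) + \eta_{n+1}\}\right)$, which is exactly the generic perturbed Proximal-Gradient scheme analysed in \cite{atchade2014stochastic}. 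With this notation, the three series appearing in the statement are, respectively, $\sum_n \gamma_{n+1}\pscal{\eta_{n+1}}{T_{\gamma_{n+1}}(\theta_n)}$, $\sum_n \gamma_{n+1}\eta_{n+1}$ and $\sum_n \gamma_{n+1}^2\|\eta_{n+1}\|^2$, i.e. precisely the summability conditions imposed on the perturbation there.

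Next I would check that the standing structural hypotheses of \cite[Theorem~2]{atchade2014stochastic} hold in the present setting: $g$ is proper, convex, lower semi-continuous and non-negative (H\ref{A1}); $\ell$ is concave and continuously differentiable on $\Theta$ with an $L$-Lipschitz gradient (H\ref{A1bis} together with the concavity assumption); the set of maximizers $\calL = \argmax_{\Theta} F$ is non-empty and contained in $\Theta$ (assumed); and the step sizes satisfy $\gamma_n\in\ocint{0,1/L}$ with $\sum_n\gamma_n = +\infty$ (assumed). Granting these, the only remaining input required to invoke that theorem is the convergence of the three perturbation series above, which is exactly the hypothesis of Theorem~\ref{theo:Cvg:perturbedEMprox}; the conclusion that $\lim_n \theta_n = \theta_\infty$ for some $\theta_\infty\in\calL$ then follows directly.

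For completeness I would recall the skeleton of the argument of \cite[Theorem~2]{atchade2014stochastic}, which also explains why those three series are the natural ones. Using firm non-expansiveness of $\Prox_{\gamma,g}$, the descent lemma (a consequence of the $L$-Lipschitz gradient) and concavity of $\ell$, one derives a deterministic inequality of the form
\[
\|\theta_{n+1}-\theta_\star\|^2 \le \|\theta_n-\theta_\star\|^2 - 2\gamma_{n+1}\left(F(\theta_\star)-F(\theta_n)\right) - c\,\|T_{\gamma_{n+1}}(\theta_n)-\theta_n\|^2 + r_{n+1},
\]
valid for every $\theta_\star\in\calL$, where the remainder $r_{n+1}$ is controlled by the cross-term $\gamma_{n+1}\pscal{\eta_{n+1}}{T_{\gamma_{n+1}}(\theta_n)-\theta_\star}$ — which reduces to the first and second series since $\sum_n\gamma_{n+1}\pscal{\eta_{n+1}}{\theta_\star}$ converges — and the quadratic term $\gamma_{n+1}^2\|\eta_{n+1}\|^2$. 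A deterministic Robbins--Siegmund type lemma then gives that $\|\theta_n-\theta_\star\|$ converges for every $\theta_\star\in\calL$ (so $\{\theta_n\}$ is bounded), that $\sum_n\gamma_{n+1}(F(\theta_\star)-F(\theta_n))<\infty$ and that $\sum_n\|T_{\gamma_{n+1}}(\theta_n)-\theta_n\|^2<\infty$. Since $\sum_n\gamma_n=+\infty$, there is a subsequence along which $F(\theta_n)\to\max_{\Theta}F$; by upper semi-continuity of $F$ and concavity of $\ell$, any cluster point $\theta'$ of that subsequence lies in $\calL$, and since $\|\theta_n-\theta'\|$ converges, an Opial/Fej\'er-monotonicity argument upgrades this to convergence of the whole sequence to $\theta'$.

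The main obstacle is not located in the present excerpt: all the analytic work — the drift inequality, the quasi-supermartingale lemma and, crucially, the passage from ``cluster points are maximizers'' to ``the whole sequence converges'' — is carried out in \cite{atchade2014stochastic}. What remains to be done carefully here is only the faithful translation: verifying that the decomposition $\nabla\ell = \nabla\phi + \Psi\barS$ with $\nabla\phi$ and $\Psi$ explicit isolates the error exactly into $\eta_{n+1} = \Psi(\theta_n)(S_{n+1}-\barS(\theta_n))$, that the operator $T_\gamma$ defined in the statement coincides with the exact Proximal-Gradient map used in \cite{atchade2014stochastic}, and that the first series is written with $T_{\gamma_{n+1}}(\theta_n)$ in the inner product (rather than $\theta_n$ or $\theta_{n+1}$) precisely because that is the point appearing in their drift inequality. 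No genuinely new estimate is needed.
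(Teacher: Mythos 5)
Your proposal is correct and follows exactly the route the paper takes: the paper gives no independent proof of this theorem but obtains it as a direct specialization of \cite[Theorem~2]{atchade2014stochastic}, with the perturbation identified as $\eta_{n+1}=\Psi(\theta_n)(S_{n+1}-\barS(\theta_n))$ and the three series being precisely the summability hypotheses of that result. Your additional sketch of the underlying drift-inequality/Fej\'er-monotonicity argument is a faithful account of the cited proof, so nothing is missing.
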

We check the conditions of Theorem~\ref{theo:Cvg:perturbedEMprox} in
the case $S_{n+1}$ is resp. given by Eq.~\eqref{eq:MonteCarloApprox} for
the proof of MCPG and by Eq.~\eqref{eq:SAApprox} for the proof of
SAPG. Our convergence analysis is restricted to the case $\ell$ is
concave; to our best knowledge, the convergence of the perturbed
Proximal-Gradient algorithms when $\ell$ is not concave is an open
question. 

The novelty in this section is Proposition~\ref{prop:L2norm:DeltaS} and
Theorem~\ref{theo:Cvg:SAEMprox} which provide resp. a control of the $L_2$-norm
of the error $S_{n+1}^\sa - \barS(\theta_n)$ and the convergence of SAPG. These
results rely on a rewriting of $\left( S_{n+1}^\sa - \barS(\theta_n) \right)$
taking into account that $S_{n+1}^\sa$ is a weighted sum of the function $S$
evaluated at all the samples $\{\hid_{i,j}, i \leq m_{j+1}, j \leq n \}$ drawn
from the initialization of the algorithm. This approximation differs from a
more classical Monte Carlo approximation (see
Theorems~\ref{theo:MCPG:fixedbatch} and \ref{theo:MCPG:increasingbatch} for the
convergence of MCPG, which are special cases of the results
in~\cite{atchade2014stochastic}).

\bigskip

We allow the simulation step of MCPG and SAPG to rely on a Markov chain
Monte Carlo sampling: at iteration $(n+1)$, the conditional
distribution of $\hid_{j+1,n}$ given the past is
$P_{\theta_n}(\hid_{j,n}, \cdot)$ where $P_\theta$ is a Markov
transition kernel having $\pi_\theta \rmd \nu$ as its unique invariant
distribution.  The control of the quantities
$S_{n+1} - \barS(\theta_n)$ requires some ergodic properties on the
kernels $\{P_{\theta_n}, n \geq 0\}$ along the path
$\{\theta_n, n \geq 0 \}$ produced by the algorithm. These properties
have to be uniform in $\theta$, a property often called  the ``containment condition'' (see
\textit{e.g.} the literature on the convergence of adaptive MCMC samplers, for
example
\cite{andrieu:moulines:2006,roberts:rosenthal:2007,fort:moulines:priouret:2012}).
There are therefore three main strategies to prove the containment
condition.  In the first strategy, $\Theta$ is assumed to be bounded,
and a uniform ergodic assumption on the kernels
$\{P_\theta, \theta \in \Theta \}$ is assumed.  In the second one,
there is no boundedness assumption on $\Theta$ but the property
$\PP(\limsup_n \| \theta_n \| < \infty)=1$ has to be established \textit{prior}
the proof of convergence; a kind of local boundedness condition on the
sequence $\{\theta_n, n\geq 0\}$ is then applied - see \textit{e.g.}
\cite{andrieu:moulines:2006,fort:moulines:priouret:2012}.  The last
strategy consists in showing that
$\PP( \sup_n \rho_n \|\theta_n \| < \infty) =1$ for some deterministic
sequence $\{\rho_n, n \geq 0\}$ vanishing to zero when $n \to \infty$
at a rate compatible with the decaying ergodicity rate - see \textit{e.g.}
\cite{saksman:vihola:2010}.  The last two strategies are really
technical and require from the reader a strong background on
controlled Markov chain theory; for pedagogical purposes, we therefore
decided to state our results in the first context: we will assume that
$\Theta$ is bounded.

By allowing MCMC approximations, we propose a theory which covers the
case of a biased approximation, called below the {\em biased case}:
conditionally to the past
\begin{equation}\label{eq:filtration}
\F_n \eqdef \sigma\left(  \hid_{i,j}, i \leq m_{j+1}, j \leq n-1\right),
\end{equation}
the expectation of $S_{n+1}$ is not $\barS(\theta_n)$:
$\PE\left[ S_{n+1} \vert \F_n \right] \neq \barS(\theta_n)$.  As soon
as the samplers $\{P_\theta, \theta \in \Theta \}$ are ergodic enough
(for example, under H\ref{A4}\ref{A4:item1}) and
H\ref{A4}\ref{A4:item2})), the bias vanishes when the number of Monte
Carlo points $m_n$ tends to infinity.  Therefore, the proof for the
biased case when the sequence $\{m_n,n \geq 0 \}$ is constant is the
most technical situation since the bias does not decay.  It relies on
a specific decomposition of the error $S_{n+1} - \barS(\theta_n)$ into
a martingale increment with bounded $L^2$-moments, and a remainder
term which vanishes when $n \to \infty$ even when the batch size $m_n$
is constant. Such a behavior of the remainder term is a consequence of
regularity properties on the functions $\nabla \phi$, $\Psi$, $\barS$
(see H\ref{A3}\ref{A3:item3})), on the proximity operator (see
H\ref{A3}\ref{A3:item4})) and on the kernels
$\{P_\theta, \theta \in \Theta \}$ (see H\ref{A4}\ref{A4:item3})).

Our theory also
covers the {\em unbiased case} \textit{i.e.} when
$$\PE\left[ S_{n+1} \vert \F_n \right] = \barS(\theta_n).$$

\bigskip

We therefore establish the convergence of MCPG and SAPG by
strengthening the conditions H\ref{A1} and H\ref{A1bis} with
\begin{assumption}
  \label{A3}
  \begin{enumerate}[a)]
  \item \label{A3:item1} $\ell$ is concave and the set $\calL
    \eqdef \argmax_{\Theta} F$ is a non-empty subset of $\Theta$.
\item \label{A3:item2} $\Theta$ is bounded.
\item \label{A3:item3} There exists a constant $L$ such that for any
  $\theta,\theta' \in \Theta$,
\begin{multline*}
  \| \nabla \phi(\theta) - \nabla \phi(\theta') \| + \| \Psi(\theta) -
  \Psi(\theta') \| + \| \barS(\theta) - \barS(\theta')\|  \\
  \leq L \| \theta - \theta' \|,
\end{multline*}
where for a matrix $A$, $\|A\|$ denotes the operator norm associated
with the Euclidean vector norm.
\item \label{A3:item4} $\sup_{\gamma \in \ocint{0,1/L}} \sup_{\theta \in \Theta} \gamma^{-1} \| \Prox_{\gamma,g}(\theta) - \theta \| < \infty$.
  \end{enumerate}
\end{assumption}
Note that the assumptions
H\ref{A3}\ref{A3:item2})-H\ref{A3}\ref{A3:item3}) imply
Eq.~\eqref{eq:H1bis:gdtLip} and
$\sup_{\theta \in \Theta} \left( \| \nabla \phi(\theta) \| + \|
  \Psi(\theta) \| + \| \barS(\theta) \| \right) < \infty$.
When $\Theta$ is a compact convex set, then H\ref{A3}\ref{A3:item4})
holds for the elastic net penalty, the Lasso or the fused Lasso
penalty.  \cite[Proposition 11]{atchade2014stochastic} gives general
conditions for H\ref{A3}\ref{A3:item4}) to hold.

 \bigskip

 Before stating the ergodicity conditions on the kernels
 $\{P_\theta, \theta \in \Theta \}$, let us recall some basic
 properties on Markov kernels. A Markov kernel $P$ on the measurable
 set $(\Xset, \Xsigma)$ is an application on $\Xset \times \Xsigma$,
 taking values in $\ccint{0,1}$ such that for any $x \in \Xset$,
 $P(x,\cdot)$ is a probability measure on $\Xsigma$; and for any
 $A \in \Xsigma$, $x \mapsto P(x,A)$ is measurable.  Furthermore, if
 $P$ is a Markov kernel, $P^k$ denotes the $k$-th iterate of $P$
 defined by induction as 
 \begin{align*}
   & P^0(x,A) \eqdef \1_A(x), \\
& P^k(x,A) \eqdef \int P^{k-1}(x,\rmd z)P(z,A), \qquad k \geq 1.
 \end{align*}
 Finally, the kernel $P$ acts on the probability measures: for any
 probability measure $\xi$ on $\Xsigma$, $\xi P$ is a probability
 measure defined by
  \[
\xi P(A) \eqdef \int \xi(\rmd z) P(z,A), \qquad A \in \Xsigma;
\]
and $P$ acts on the positive measurable functions: for a measurable
function $f:\Xset \to \rset_+$, $Pf$ is a measurable function defined by
\[
Pf(z) \eqdef \int f(y) \, P(z, \rmd y).
\]
We refer the reader to \cite{meyn:tweedie:2009} for the definitions
and basic properties on Markov chains. Given a measurable function
$W: \Xset \to \coint{1,+\infty}$, define the $W$-norm of a signed measure $\nu$ on $\Xsigma$
and the $W$-norm of a function $f: \Xset \to \rset^d$:
\[
\Wnorm{f} \eqdef  \sup_{\Xset} \frac{\|f\|}{W}, \qquad \WnormM{\nu} \eqdef \sup_{f: \Wnorm{f} \leq 1} \left| \int f \rmd \nu \right|;
\]
these norms generalize resp. the supremum norm of a function and the
total variation norm of a measure.

Our results are derived under the following conditions on the kernels:
\begin{assumption}\label{A4}
\begin{enumerate}[a)]
\item \label{A4:item1} There exist $\lambda \in \ocint{0,1}$, $b < \infty$ and a measurable function $W: \Xset \to \coint{1,+\infty}$ such that
\[
\sqrtWnorm{S} < \infty, \qquad \sup_{\theta \in \Theta} P_\theta W \leq \lambda W + b.
\]
\item \label{A4:item2} There exist constants $C< \infty$ and
  $\rho \in \ooint{0,1}$ such that for any $z \in \Xset$ and
  $n \geq 0$,
\[
\sup_{\theta \in \Theta} \WnormM{ P_\theta^n(z, \cdot) - \pi_\theta } \leq C \, \rho^n \, W(z).
\]
\item \label{A4:item3} There exists a constant $C$ such that for any $\theta, \theta' \in \Theta$,
\begin{multline*}
\sqrtWnormM{\pi_\theta - \pi_{\theta'}} + \sup_{z \in \Xset} \frac{\sqrtWnormM{ P_\theta(z,\cdot) - P_{\theta'}(z,\cdot)}}{\sqrt{W}(z)}  \\
\leq C \, \|\theta - \theta' \|.
\end{multline*}
\end{enumerate}
\end{assumption}

Sufficient conditions for the uniform-in-$\theta$ ergodic behavior
H\ref{A4}\ref{A4:item2}) are given \textit{e.g.} in ~\cite[Lemma
2.3.]{fort:moulines:priouret:2011}: this lemma shows how to deduce
such a control from a minorization condition and a drift inequality on
the Markov kernels.  Examples of MCMC kernels $P_\theta$ satisfying
these assumptions can be found in \cite[Proposition
12]{andrieu:moulines:2006} and \cite[Proposition
15]{saksman:vihola:2010} for the adaptive Hastings-Metropolis
algorithm, in \cite[Proposition 3.1.]{fort:moulines:priouret:2011} for
an interactive tempering sampler, in \cite[Proposition
3.2.]{schreck:fort:moulines:2013} for the equi-energy sampler, and in
\cite[Proposition 3.1.]{fort:jourdain:kuhn:lelievre:stoltz:2015} for a
Wang-Landau type sampler.

\bigskip

Theorem~\ref{theo:MCPG:fixedbatch} establishes the convergence of MCPG
when the number of points in the Monte Carlo sum $S_{n+1}^\mc$ is
constant over iterations and the step size sequence
$\{\gamma_n, n \geq 0\}$ vanishes at a convenient rate. It is proved
in \cite[Theorem 4]{atchade2014stochastic}.
\begin{theorem}
  \label{theo:MCPG:fixedbatch}
  Assume H\ref{A1}, H\ref{A1bis}, H\ref{A3}\ref{A3:item1}-\ref{A3:item3}) and
  H\ref{A4}\ref{A4:item1}-\ref{A4:item2}). Let $\{\theta_n, n \geq 0 \}$ be the
  sequence given by Algorithm~\ref{algo:MCPG} with a $\ocint{0,1/L}$-valued
  sequence $\{\gamma_n, n \geq 0 \}$ such that $\sum_n \gamma_n = + \infty$ and
  $\sum_n \gamma_n^2<\infty$, and with a constant sequence $\{m_n, n \geq 0
  \}$.
  
  In the biased case, assume also H\ref{A3}\ref{A3:item4}) and
  H\ref{A4}\ref{A4:item3}) and $\sum_n |\gamma_{n+1} - \gamma_n |< \infty$.
  
  Then, with probability one, there exists $\theta_\infty \in \calL$ such that
  $\lim_n \theta_n = \theta_\infty$.
\end{theorem}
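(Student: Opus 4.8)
The plan is to invoke Theorem~\ref{theo:Cvg:perturbedEMprox}: it suffices to verify that, with probability one, the three series built from the error terms $e_{n+1} \eqdef \Psi(\theta_n)\left(S_{n+1}^\mc - \barS(\theta_n)\right)$ converge, namely $\sum_n \gamma_{n+1}\pscal{e_{n+1}}{T_{\gamma_{n+1}}(\theta_n)}$, $\sum_n \gamma_{n+1} e_{n+1}$, and $\sum_n \gamma_{n+1}^2\|e_{n+1}\|^2$. Since $\|\Psi(\theta_n)\|$ and $\|T_{\gamma_{n+1}}(\theta_n)\|$ are bounded on the bounded set $\Theta$ (using H\ref{A3}\ref{A3:item2}--\ref{A3:item3}) and H\ref{A3}\ref{A3:item4}) for the $T$-term), the whole problem reduces to controlling the scalar error $S_{n+1}^\mc - \barS(\theta_n)$: the third series is handled as soon as $\sup_n \PE\|S_{n+1}^\mc - \barS(\theta_n)\|^2 < \infty$ (then $\sum_n \gamma_{n+1}^2\|e_{n+1}\|^2$ has finite expectation, since $\sum_n \gamma_n^2<\infty$, hence converges a.s.), and the first two series are handled by a martingale-plus-remainder decomposition.

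The first technical step is a uniform $L^2$ bound on $S_{n+1}^\mc - \barS(\theta_n)$. Conditioning on $\F_n$, write $S_{n+1}^\mc - \barS(\theta_n) = \frac1{m}\sum_{j=1}^{m}\left(S(\hid_{j,n}) - \barS(\theta_n)\right)$; the drift condition $\sup_\theta P_\theta W \le \lambda W + b$ together with $\sqrtWnorm{S}<\infty$ (H\ref{A4}\ref{A4:item1})) gives uniform control of $\PE[W(\hid_{j,n})\mid\F_n]$ along the chain started from $\hid_{m_n,n-1}$, and the geometric ergodicity H\ref{A4}\ref{A4:item2}) controls $\|P_{\theta_n}^k(\hid_{m_n,n-1},\cdot) - \pi_{\theta_n}\|_{\sqrt W}$; combining these yields $\sup_n\PE\|S_{n+1}^\mc - \barS(\theta_n)\|^2 < \infty$. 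In the \emph{unbiased case} ($\PE[S_{n+1}^\mc\mid\F_n] = \barS(\theta_n)$), $\{e_{n+1}\}$ is a martingale-increment sequence, $\sum_n\gamma_{n+1} e_{n+1}$ converges a.s.\ by the $L^2$ martingale convergence theorem (since $\sum_n\gamma_n^2<\infty$), and the same argument applied to the scalar sequence $\gamma_{n+1}\pscal{e_{n+1}}{T_{\gamma_{n+1}}(\theta_n)}$ (an $\F_{n+1}$-martingale increment, with bounded conditional second moment after multiplying by $\gamma_{n+1}$) gives convergence of the first series; the third series is already dealt with. This case needs none of the extra hypotheses, consistent with the statement.

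The main obstacle is the \emph{biased case} with constant batch size $m_n \equiv m$, where the bias $\PE[S_{n+1}^\mc\mid\F_n] - \barS(\theta_n)$ does not vanish as $n\to\infty$. The remedy is the Poisson-equation / resolvent decomposition used in the controlled-Markov-chain literature: for each $\theta$ let $\hatS_\theta$ solve $\hatS_\theta - P_\theta\hatS_\theta = S - \barS(\theta)$ (well-defined with $\sqrtWnorm{\hatS_\theta}$ bounded uniformly in $\theta$ by H\ref{A4}\ref{A4:item1}--\ref{A4:item2})), and decompose $S(\hid_{j,n}) - \barS(\theta_n) = \left(\hatS_{\theta_n}(\hid_{j,n}) - P_{\theta_n}\hatS_{\theta_n}(\hid_{j-1,n})\right) + \left(P_{\theta_n}\hatS_{\theta_n}(\hid_{j-1,n}) - \hatS_{\theta_n}(\hid_{j,n})\right) + \left(\text{chain-linking terms at iteration boundaries}\right)$. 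Summing over $j$ and then over $n$ produces: a martingale part with summable-after-$\gamma^2$-weighting increments (handled as above); telescoping boundary terms of the form $\hatS_{\theta_n}(\hid_{m,n}) - \hatS_{\theta_{n-1}}(\hid_{m,n-1})$, whose sum $\sum_n\gamma_{n+1}(\cdots)$ is controlled using $\sum_n|\gamma_{n+1}-\gamma_n|<\infty$ together with the Lipschitz-in-$\theta$ regularity of $\theta\mapsto\hatS_\theta$ (which follows from H\ref{A3}\ref{A3:item3}) and H\ref{A4}\ref{A4:item3}), using that $\|\theta_{n+1}-\theta_n\| = O(\gamma_{n+1})$ by H\ref{A3}\ref{A3:item4}) and boundedness of the gradient approximation); and the initial/final endpoint terms, which are negligible. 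This is exactly the mechanism of \cite[Theorem~4]{atchade2014stochastic}, so the proof amounts to checking that H\ref{A1}, H\ref{A1bis}, H\ref{A3} and H\ref{A4} supply every hypothesis that result requires, and then quoting it; I would carry out the verification hypothesis-by-hypothesis and defer the decomposition bookkeeping to that reference.
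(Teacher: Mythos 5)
Your proposal is correct and takes essentially the same route as the paper: the paper gives no standalone proof of Theorem~\ref{theo:MCPG:fixedbatch} but simply invokes \cite[Theorem 4]{atchade2014stochastic}, and your plan --- verifying the three-series condition of Theorem~\ref{theo:Cvg:perturbedEMprox} via a uniform $L^2$ bound, a martingale argument in the unbiased case, and a Poisson-equation (martingale plus telescoping remainder) decomposition in the biased case before deferring the bookkeeping to that reference --- is precisely the mechanism of that cited result, mirroring the paper's own treatment of the SAPG analogue in Appendix~D. One slip worth fixing: in your displayed biased-case decomposition the second bracket should read $P_{\theta_n}\hatS_{\theta_n}(\hid_{j-1,n}) - P_{\theta_n}\hatS_{\theta_n}(\hid_{j,n})$ (as written, the first two brackets cancel identically), though your subsequent description of the martingale and telescoping parts makes clear the intended, correct decomposition.
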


Theorem~\ref{theo:MCPG:increasingbatch} establishes the convergence of
MCPG when the number of points in the Monte Carlo sum $S_{n+1}^\mc$ is
increasing; it allows a constant stepsize sequence
$\{\gamma_n, n \geq 0\}$. It is proved in \cite[Theorem
6]{atchade2014stochastic}.
\begin{theorem}
  \label{theo:MCPG:increasingbatch}
  Assume H\ref{A1}, H\ref{A1bis}, H\ref{A3}\ref{A3:item1}-\ref{A3:item3}) and
  H\ref{A4}\ref{A4:item1}-\ref{A4:item2}). Let $\{\theta_n, n \geq 0 \}$ be the
  sequence given by Algorithm~\ref{algo:MCPG} with a $\ocint{0,1/L}$-valued
  sequence $\{\gamma_n, n \geq 0 \}$ and an integer valued sequence $\{m_n,n
  \geq 0\}$ such that $\sum_n \gamma_n = +\infty$ and $\sum_n
  \gamma_{n}^2/m_{n} < \infty$.
  
  In the biased case, assume also $\sum_n \gamma_{n}/m_{n}< \infty$.
  
  Then, with probability one, there exists $\theta_\infty \in \calL$ such that
  $\lim_n \theta_n = \theta_\infty$.
\end{theorem}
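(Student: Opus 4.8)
The plan is to deduce the statement from Theorem~\ref{theo:Cvg:perturbedEMprox} applied with $S_{n+1}=S_{n+1}^\mc$ (Eq.~\eqref{eq:MonteCarloApprox}): it then suffices to prove that the three series listed there converge $\PP$-a.s. I will work with the filtration $\F_n$ of Eq.~\eqref{eq:filtration} — for which $\theta_n$ and the chain's starting point $\hid_{m_n,n-1}$ are $\F_n$-measurable while $S_{n+1}^\mc$ is $\F_{n+1}$-measurable — and with the decomposition
\[
S_{n+1}^\mc - \barS(\theta_n) = \eta_{n+1} + r_{n+1},\qquad \eta_{n+1}\eqdef S_{n+1}^\mc - \PE\left[S_{n+1}^\mc\mid\F_n\right],\quad r_{n+1}\eqdef \PE\left[S_{n+1}^\mc\mid\F_n\right] - \barS(\theta_n),
\]
so that $\{\eta_{n+1}\}$ is a martingale-increment sequence for $\{\F_{n+1}\}$ and $r_{n+1}\equiv 0$ in the unbiased case. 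Note that H\ref{A3}\ref{A3:item2})-\ref{A3:item3}) make $\nabla\phi,\Psi,\barS$ bounded on $\Theta$, and $T_\gamma(\theta)\in\Theta$ gives $\sup_{\gamma\in\ocint{0,1/L}}\sup_{\theta\in\Theta}\|T_\gamma(\theta)\|<\infty$.

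\emph{Step 1: the conditional bias.} Since the conditional law of $\hid_{j,n}$ given $\F_n$ is $P_{\theta_n}^j(\hid_{m_n,n-1},\cdot)$, one has $\PE[S_{n+1}^\mc\mid\F_n] = m_{n+1}^{-1}\sum_{j=1}^{m_{n+1}}(P_{\theta_n}^j S)(\hid_{m_n,n-1})$, whence by H\ref{A4}\ref{A4:item2}), $\Wnorm{S}\leq\sqrtWnorm{S}<\infty$ (as $W\geq 1$), and summing a geometric series,
\[
\|r_{n+1}\| \leq \frac{\sqrtWnorm{S}}{m_{n+1}}\sum_{j=1}^{m_{n+1}} \WnormM{P_{\theta_n}^j(\hid_{m_n,n-1},\cdot)-\pi_{\theta_n}} \leq \frac{C}{m_{n+1}}\,W(\hid_{m_n,n-1}).
\]
Iterating the drift inequality $\sup_\Theta P_\theta W\leq \lambda W + b$ of H\ref{A4}\ref{A4:item1}) across iterations — which is legitimate although each chain is restarted from the previous endpoint — gives $\sup_n\PE[W(\hid_{m_n,n-1})]<\infty$, hence $\PE\|r_{n+1}\|\leq C'/m_{n+1}$. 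In the biased case, the assumed summability of $\{\gamma_n/m_n\}$ then makes $\sum_n\gamma_{n+1}\Psi(\theta_n)r_{n+1}$ converge absolutely a.s., and the same bound (times $\sup\|T_\gamma(\theta)\|$) controls the $r$-contribution to the inner-product series; in the unbiased case all these terms vanish.

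\emph{Step 2: the fluctuation term and the quadratic series.} A standard conditional-variance estimate for additive functionals of a geometrically ergodic kernel — using H\ref{A4}\ref{A4:item1})-\ref{A4:item2}) and $\sqrtWnorm{S}<\infty$, e.g. through the Poisson equation for $P_{\theta_n}$ whose solution is controlled in $\sqrt{W}$-norm uniformly in $\theta$ — yields $\PE[\|\eta_{n+1}\|^2\mid\F_n]\leq C\,m_{n+1}^{-1}\,W(\hid_{m_n,n-1})$, hence $\PE\|\eta_{n+1}\|^2\leq C'/m_{n+1}$. Because $\Psi(\theta_n)$ and $T_{\gamma_{n+1}}(\theta_n)$ are $\F_n$-measurable and bounded, $\sum_{k\leq n}\gamma_{k+1}\Psi(\theta_k)\eta_{k+1}$ and $\sum_{k\leq n}\gamma_{k+1}\pscal{\Psi(\theta_k)\eta_{k+1}}{T_{\gamma_{k+1}}(\theta_k)}$ are $L^2$-martingales with $\sum_n\gamma_{n+1}^2\PE\|\eta_{n+1}\|^2\leq C'\sum_n\gamma_{n+1}^2/m_{n+1}<\infty$, so both converge a.s. by the $L^2$-martingale convergence theorem. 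Finally, $\|S_{n+1}^\mc-\barS(\theta_n)\|^2\leq 2\|\eta_{n+1}\|^2+2\|r_{n+1}\|^2$ gives $\PE\|S_{n+1}^\mc-\barS(\theta_n)\|^2\leq C''/m_{n+1}$, so $\PE\sum_n\gamma_{n+1}^2\|\Psi(\theta_n)(S_{n+1}^\mc-\barS(\theta_n))\|^2<\infty$ and the third series converges a.s. as well. Adding the contributions of Steps 1 and 2, the three series of Theorem~\ref{theo:Cvg:perturbedEMprox} converge a.s., which yields the existence of $\theta_\infty\in\calL$ with $\lim_n\theta_n=\theta_\infty$.

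\textbf{Main obstacle.} The only genuinely delicate ingredients are the two Markov-chain estimates of Steps~1 and~2: the conditional-variance bound $\PE[\|\eta_{n+1}\|^2\mid\F_n]=O(W(\hid_{m_n,n-1})/m_{n+1})$, which rests on the uniform-in-$\theta$ ergodicity H\ref{A4}\ref{A4:item2}) and the drift H\ref{A4}\ref{A4:item1}), and the uniform moment bound $\sup_n\PE[W(\hid_{m_n,n-1})]<\infty$, which has to hold despite each chain being restarted at the endpoint of the previous iteration rather than at stationarity; everything else is routine bookkeeping with the $L^2$-martingale convergence theorem. It is worth noting that, contrary to Theorem~\ref{theo:MCPG:fixedbatch}, neither H\ref{A3}\ref{A3:item4}) nor H\ref{A4}\ref{A4:item3}) is needed here: since $m_n\to\infty$ fast enough the bias $r_{n+1}$ already decays like $1/m_{n+1}$, and the condition $\sum_n\gamma_n/m_n<\infty$ absorbs it directly, so no refined decomposition of a non-vanishing remainder term is required.
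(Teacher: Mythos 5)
The paper does not actually prove this theorem in-house: it is quoted from \cite[Theorem 6]{atchade2014stochastic}. Your reconstruction follows exactly the route that reference (and the paper's own proof of the SAPG analogue, Proposition~\ref{prop:MartingaleReste}) takes: verify the three series of Theorem~\ref{theo:Cvg:perturbedEMprox} by splitting $S_{n+1}^\mc-\barS(\theta_n)$ into a martingale increment and a conditional bias, control the bias by the uniform geometric ergodicity of H\ref{A4}\ref{A4:item2}) and the drift of H\ref{A4}\ref{A4:item1}) (your $\sup_n\PE[W(\hid_{m_n,n-1})]<\infty$ is precisely Lemma~\ref{lem:IteratesKernels}), and close with the $L^2$-martingale convergence theorem. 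Your observation that H\ref{A3}\ref{A3:item4}) and H\ref{A4}\ref{A4:item3}) are not needed because the bias already decays like $1/m_{n+1}$ is also correct and is exactly why the theorem's hypotheses differ from those of Theorem~\ref{theo:MCPG:fixedbatch}.

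One step does not close as written. In Step 1 you bound $\|r_{n+1}\|\leq C\,m_{n+1}^{-1}W(\hid_{m_n,n-1})$ using the $W$-norm ergodicity, and in Step 2 you square this to get $\PE\|r_{n+1}\|^2=O(1/m_{n+1})$ for the third series. Squaring produces $W^2(\hid_{m_n,n-1})$, and the drift condition only controls $\PE[W]$, not $\PE[W^2]$; so the claimed bound on $\PE\|r_{n+1}\|^2$ does not follow from what you have established. The fix is standard and one line: H\ref{A4}\ref{A4:item1})--\ref{A4:item2}) imply, by H\"older interpolation between the total-variation and $W$-norm bounds, that $\sqrtWnormM{P_\theta^n(z,\cdot)-\pi_\theta}\leq C\,\rho^{n/2}\sqrt{W}(z)$ uniformly in $\theta$; combined with $\sqrtWnorm{S}<\infty$ this gives the sharper estimate $\|r_{n+1}\|\leq C\,m_{n+1}^{-1}\sqrt{W}(\hid_{m_n,n-1})$, whence $\PE\|r_{n+1}\|^2=O(m_{n+1}^{-2})$ and the third series is summable under $\sum_n\gamma_n^2/m_n<\infty$. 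This is exactly the device the paper itself uses in Lemma~\ref{lem:PoissonSolution}, where the Poisson solution is controlled in $\sqrt{W}$-norm (so that its square is controlled by $W$, which the drift does handle). With that substitution your argument is complete.
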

MCPG and SAPG differ in their approximation of $\barS(\theta_n)$ at each
iteration. We provide below a control of this error for a constant or a
polynomially increasing batch size $\{m_n, n \geq 0 \}$, and polynomially
decreasing stepsize sequences $\{\gamma_n, n \geq 0 \}$ and $\{\delta_n, n \geq
0 \}$.
\begin{proposition}
  \label{prop:L2norm:DeltaS} Let $\gamma_\star, \delta_\star, m_\star$
  be positive constants and $\beta \in \coint{0,1}$, $\alpha \geq \beta$,
  $c\geq0$. Set $\gamma_n = \gamma_\star n^{-\alpha}$,
  $\delta_n = \delta_\star n^{-\beta}$ and $m_n = m_\star
  n^{c}$. Assume H\ref{A1} to H\ref{A4}. Then
\[
\begin{split}
&\PE\left[ \| S_{n+1}^\mc - \barS(\theta_n)\|^2 \right] = O\left(
  n^{-c} \right), \\
 &\PE\left[ \| S_{n+1}^\sa - \barS(\theta_n)
  \|^2\right] =O\left(n^{- \{ 2(\alpha -\beta) \wedge (\beta+c)\}}
\right).
\end{split}
\]
\end{proposition}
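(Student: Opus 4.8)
The plan is to treat the two bounds separately, starting with the Monte Carlo estimator since it is the building block for the stochastic approximation one. For $S_{n+1}^\mc$, I would condition on $\F_n$ and decompose
\[
S_{n+1}^\mc - \barS(\theta_n) = \left( S_{n+1}^\mc - \PE[S_{n+1}^\mc \mid \F_n] \right) + \left( \PE[S_{n+1}^\mc \mid \F_n] - \barS(\theta_n) \right).
\]
The first term is a centered sum of $m_{n+1}$ terms from a geometrically ergodic Markov chain; using H\ref{A4}\ref{A4:item1})-\ref{A4:item2}) one controls its conditional $L^2$-norm by a Poisson-equation / spectral-gap argument (or directly by the mixing bound summed over lags), giving a contribution of order $1/m_{n+1}$ after also bounding $\PE[W(\hid_{m_n,n-1})]$ uniformly in $n$ via the drift inequality $\sup_\theta P_\theta W \le \lambda W + b$. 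The bias term is handled by H\ref{A4}\ref{A4:item2}) again: $\|\PE[S(\hid_{j,n})\mid \F_n] - \barS(\theta_n)\|$ decays geometrically in $j$ (started from $\hid_{m_n,n-1}$), and averaging over $j \le m_{n+1}$ gives a bias of order $1/m_{n+1}$ as well, hence $\PE\|S_{n+1}^\mc - \barS(\theta_n)\|^2 = O(m_{n+1}^{-1}) = O(n^{-c})$, since $m_n = m_\star n^c$ and $\barS$ is bounded.

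For $S_{n+1}^\sa$ I would first establish the recursion for the error. Writing $e_n \eqdef S_n^\sa - \barS(\theta_{n-1})$ and using Eq.~\eqref{eq:SAApprox},
\[
S_{n+1}^\sa - \barS(\theta_n) = (1-\delta_{n+1})\left(S_n^\sa - \barS(\theta_{n-1})\right) + (1-\delta_{n+1})\left(\barS(\theta_{n-1}) - \barS(\theta_n)\right) + \delta_{n+1}\left(\widehat S_{n+1} - \barS(\theta_n)\right),
\]
where $\widehat S_{n+1} \eqdef m_{n+1}^{-1}\sum_{j} S(\hid_{j,n}) = S_{n+1}^\mc$. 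The middle term is $O(\|\theta_{n-1}-\theta_n\|)$ by the Lipschitz bound H\ref{A3}\ref{A3:item3}), and from the $\Prox$ update together with H\ref{A3}\ref{A3:item4}) and the boundedness of $\nabla\phi,\Psi,\barS,S_{n+1}^\sa$ (the last following by induction since $S_0^\sa$ is fixed and each update is a convex combination), one gets $\|\theta_{n-1}-\theta_n\| = O(\gamma_n) = O(n^{-\alpha})$. The last term splits, as above, into a martingale-increment part with conditional second moment $O(m_{n+1}^{-1})$ and a bias part $O(m_{n+1}^{-1})$. So in $L^2$-norm, unrolling the recursion,
\[
\|e_{n+1}\|_{L^2} \le \prod_{k=K}^{n} (1-\delta_{k+1})\, \|e_K\|_{L^2} + \sum_{k=K}^{n} \Big(\prod_{j=k+1}^{n}(1-\delta_{j+1})\Big)\left( O(\delta_{k+1} m_{k+1}^{-1/2}) + O(\gamma_{k+1}) + O(\delta_{k+1} m_{k+1}^{-1}) \right),
\]
where I have used the triangle inequality for the $L^2$-norm and orthogonality of the martingale increments to their conditioning (so the variance part only enters through $O(\delta_{k+1}m_{k+1}^{-1/2})$ in $L^2$, i.e. the $\delta_{k+1}^2 m_{k+1}^{-1}$ inside the square root).

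The final step is the deterministic asymptotics of this convolution. Since $\delta_n = \delta_\star n^{-\beta}$ with $\beta \in \coint{0,1}$, we have $\sum_k \delta_k = \infty$, so $\prod_{k}(1-\delta_{k+1}) \to 0$ and the standard "discrete Gronwall / averaging" estimate applies: a sum $\sum_{k\le n}\big(\prod_{j=k+1}^n(1-\delta_{j+1})\big)\, a_k$ with $a_k = \delta_{k+1} b_k$ and $b_k$ slowly varying behaves like $b_n$, whereas with $a_k = \gamma_{k+1}$ (no $\delta$ factor) it behaves like $\gamma_n/\delta_n$. This yields the two competing rates: the gradient-drift term contributes $\gamma_n/\delta_n = O(n^{-(\alpha-\beta)})$, and the Monte Carlo noise-plus-bias term contributes $O(n^{-\beta/2-c/2})$ from the variance (dominant over the $O(m_n^{-1})$ bias when $\beta < 1$). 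Squaring, $\PE\|S_{n+1}^\sa - \barS(\theta_n)\|^2 = O(n^{-2(\alpha-\beta)}) + O(n^{-(\beta+c)}) = O(n^{-\{2(\alpha-\beta)\wedge(\beta+c)\}})$, as claimed.

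The main obstacle, I expect, is making the martingale/remainder split of $\widehat S_{n+1} - \barS(\theta_n)$ rigorous and uniform in $n$ when the chain is Markov rather than i.i.d.: one needs the conditional $L^2$-bound on the centered MCMC average to hold with a constant independent of $\theta_n$ and of the (random) starting point $\hid_{m_n,n-1}$, which is exactly where H\ref{A4}\ref{A4:item1})-\ref{A4:item2}) and the uniform drift condition must be combined carefully (e.g. via $\sup_n \PE[W(\hid_{m_n,n-1})] < \infty$). The second delicate point is the discrete-Gronwall lemma giving the $\gamma_n/\delta_n$ rate: one must check that $\{\gamma_n/\delta_n\}$ and $\{m_n^{-1/2}\}$ are regular enough (monotone, slowly varying) for the convolution-asymptotics to be valid, which holds here because all sequences are pure powers of $n$. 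Everything else is bookkeeping with the Lipschitz and boundedness constants from H\ref{A3}.
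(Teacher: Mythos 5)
Your overall architecture for the $S^\sa$ bound --- unrolling the recursion into an initial-condition term, a drift term built from $\barS(\theta_{k-1})-\barS(\theta_k)=O(\gamma_k)$, and a weighted sum of Monte Carlo errors, then applying a convolution estimate for $\sum_j j^{-r}\prod_{k=j}^n(1-\delta_\star k^{-\beta})=O(n^{\beta-r})$ --- is exactly the paper's (its iteration lemma applied with $\A_n=I$ and $\sigma_n=\barS(\theta_n)$, plus its rate lemma). The variance part is also handled correctly: orthogonality of the batch-level martingale increments gives $\sum_k \Delta_{k+1:n}^2\delta_{k+1}^2 m_{k+1}^{-1}=O(n^{-(\beta+c)})$ for the squared norm, consistent with the paper's Burkholder argument, and the drift term correctly yields $O(n^{\beta-\alpha})$.

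The genuine gap is in the bias term, and it is fatal in precisely the regime the proposition is designed for. You bound the conditional bias $\PE[S^\mc_{k+1}\mid\F_k]-\barS(\theta_k)$ by $O(1/m_{k+1})$ and push it through the triangle inequality, which gives a contribution $\sum_k \Delta_{k+1:n}\,\delta_{k+1}\,O(m_{k+1}^{-1})=O(m_n^{-1})=O(n^{-c})$ to $\|S_{n+1}^\sa-\barS(\theta_n)\|_{L^2}$. This is dominated by the claimed $O(n^{-(\beta+c)/2})$ only when $c\geq\beta$; for a constant batch size ($c=0$) with $\beta>0$ --- the case the paper explicitly highlights after the proposition --- your bias contribution is $O(1)$ and the estimate collapses. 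Your parenthetical ``dominant over the $O(m_n^{-1})$ bias when $\beta<1$'' is where this is hidden: the correct condition would be $c\geq\beta$, which is not assumed. The paper avoids this by \emph{not} bounding the bias batch-by-batch: it uses the Poisson equation $S-\barS(\theta)=\hatS_\theta-P_\theta\hatS_\theta$ at the level of individual samples, producing a sample-level martingale increment plus boundary terms $R_{j,1}$ and $R_{j,2}$. The term $R_{j,1}$ telescopes across iterations, and an Abel summation against the weights $\Delta_{j+1:n}\delta_j m_j^{-1}$ converts it into $O(n^{-(\beta+c)})$; the term $R_{j,2}$ is $O(\gamma_j/m_j)$ via the Lipschitz continuity of $\theta\mapsto P_\theta\hatS_\theta$, which is where H\ref{A4}\ref{A4:item3}) enters --- an assumption your proof never invokes, a telltale sign the biased case is not covered. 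Without this cancellation mechanism (the ``specific decomposition'' announced in Section~\ref{sec:theorie}), your argument establishes the proposition only in the unbiased case.
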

The proof is given in Appendix~\ref{sec:proof:prop:Lnorm}.  This
proposition shows that when applying MCPG with a constant batch size
$(c=0)$, the error $S_{n+1}^\mc - \barS(\theta_n)$ does not vanish;
this is not the case for SAPG, since even when $c=0$, the error
$S_{n+1}^\sa - \barS(\theta_n)$ vanishes as soon as
$\alpha > \beta >0$. Since the case "constant batch size" is the usual
choice of the practitioners in order to reduce the computational cost
of the algorithm, this proposition supports the use of SAPG instead of
MCPG.

We finally study the convergence of SAPG without assuming that the batch
size sequence $\{m_{n}, n \geq 0\}$ is constant, which implies the
following assumption on the sequences
$\{\gamma_n, \delta_n, m_n, n \geq 0\}$.
\begin{assumption} \label{A5} The step size sequences
  $\{\gamma_n, n \geq 0 \}$, $\{\delta_n, n \geq 0 \}$ and the batch
  size sequence $\{m_n, n \geq 0 \}$ satisfy
  \begin{enumerate}[a)]
  \item \label{A5:item1} $\gamma_n \in \ocint{0,1/L}$,
    $\delta_n \in \ooint{0,1}$, $m_n \in \nset$,
    $\sum_n \gamma_n = + \infty$, $\sum_n \gamma_n^2 < \infty$,
\begin{align*}
  &  \sum_{n } \left( \gamma_{n-1} \gamma_{n} + \gamma_{n-1}^2 + |\gamma_n - \gamma_{n-1}| \right)  \D_n < \infty,  \\
  & \sum_n \gamma_n^2 \delta_n^2 (1+\D_{n+1})^2 m_n^{-1} < \infty,
  \end{align*}
  where
  $\D_n \eqdef \sum_{k \geq n} \left( \prod_{j=n}^k
    (1-\delta_j)\right)$.
\item \label{A5:item2} Furthermore,
  \begin{align*}
    &   \sum_n \gamma_{n+1} |m_{n+1}^{-1} \delta_{n+1} - m_n^{-1} \delta_n| < \infty, \\
    &  \sum_n \gamma_{n+1} |m_{n+1}^{-1} \delta_{n+1} \D_{n+2} - m_n^{-1} \delta_n \D_{n+1}| < \infty, \\
    &  \sum_n    \left( \gamma_{n-1} \gamma_{n} + \gamma_{n-1}^2 + |\gamma_n - \gamma_{n-1}| \right)  \cdots \\
    & \qquad \qquad \times m_{n-1}^{-1} \, \delta_{n-1} (1 + \D_{n}) < \infty.
  \end{align*}
  \end{enumerate}
\end{assumption}
 Let us comment this assumption in the case the batch size sequence
$\{m_n, n \geq 0 \}$ is constant. This situation corresponds to the
"stochastic approximation regime" where the number of draws at each
iteration is $ m_n=1$ (or say, $m_n = m$ for any $n$), and it also
corresponds to what is usually done by practitioners in order to
reduce the computational cost. When
$\delta_n= \delta_\star \in \ooint{0,1}$ for any $n \geq 0$, then
$\D_n = \delta_\star^{-1}$ for any $n \geq 0$.  This implies that the
condition H\ref{A5} is satisfied with polynomially decreasing
sequences $\gamma_n \sim \gamma_\star / n^\alpha$ with
$\alpha \in \ocint{1/2,1}$ (and $m_n = m$ for any $n$).

When $\delta_n \sim \delta_\star \, n^{-\beta}$ for
$\beta \in \ooint{0,1}$, then $\D_n = O( n^\beta)$ (see
Lemma~\ref{lem:controle:Dn}). Hence, using
Lemma~\ref{lem:controle:Dn}, H\ref{A5}\ref{A5:item1}) and
H\ref{A5}\ref{A5:item2}) are satisfied with
$\gamma_n \sim \gamma_\star n^{-\alpha}$ where
$\beta < (1+\beta)/2 < \alpha \leq 1$, and $m_n = m$ for any $n$.

We can not have $\delta_n = \delta_\star n^{-1}$ since it implies
$\D_n = + \infty$ for any $n \geq 0$.

\begin{theorem}
  \label{theo:Cvg:SAEMprox}
  Assume H\ref{A1}, H\ref{A1bis}, H\ref{A3} and
  H\ref{A4}\ref{A4:item1}-\ref{A4:item2}). 

  Let $\{\theta_n, n \geq 0 \}$ be the sequence given by
  Algorithm~\ref{algo:SAPG} and applied with sequences
  $\{\gamma_n, \delta_n, m_n, n \geq 0\}$ verifying
  H\ref{A5}\ref{A5:item1}).

  In the biased case, assume also H\ref{A4}\ref{A4:item3}) and
  H\ref{A5}\ref{A5:item2}).
 
  Then with probability one, there exists $\theta_\infty \in \calL$
  such that $\lim_n \theta_n = \theta_\infty$.
\end{theorem}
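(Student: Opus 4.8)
The plan is to derive Theorem~\ref{theo:Cvg:SAEMprox} from the abstract convergence result Theorem~\ref{theo:Cvg:perturbedEMprox}, applied with $S_{n+1} = S_{n+1}^{\sa}$. The iterates $\{\theta_n, n\geq 0\}$ produced by Algorithm~\ref{algo:SAPG} take values in $\Theta$ by definition of $\Prox_{\gamma,g}$, so H\ref{A3}\ref{A3:item2}) supplies the ``containment condition'': $\nabla\phi(\theta_n)$, $\Psi(\theta_n)$, $\barS(\theta_n)$ and $T_{\gamma_{n+1}}(\theta_n)$ are uniformly bounded (using also H\ref{A3}\ref{A3:item3})--\ref{A3:item4})), and the drift inequality H\ref{A4}\ref{A4:item1}) controls $\PE[W(\hid_{j,n})]$ uniformly along the path. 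Together with $\sum_n\gamma_n = +\infty$ from H\ref{A5}\ref{A5:item1}), it then remains to prove that the three random series appearing in Theorem~\ref{theo:Cvg:perturbedEMprox} converge almost surely; this is where all the work lies.

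The starting point is an exact unfolding of the recursion Eq.~\eqref{eq:SAApprox}: writing $S_{n+1}^{\sa}$ as a convex combination of $s_\star$ and of the batch means $m_{l+1}^{-1}\sum_{j} S(\hid_{j,l})$, $l\leq n$, and using that the weights add up to one, one obtains
\begin{multline*}
S_{n+1}^{\sa} - \barS(\theta_n) = \Bigl(\prod_{k=1}^{n+1}(1-\delta_k)\Bigr)\bigl(s_\star - \barS(\theta_n)\bigr) \\ {}+ \sum_{l=0}^{n} w_{l,n}\Bigl( \frac{1}{m_{l+1}}\sum_{j=1}^{m_{l+1}} S(\hid_{j,l}) - \barS(\theta_l) \Bigr) + \sum_{l=0}^{n} w_{l,n}\bigl(\barS(\theta_l) - \barS(\theta_n)\bigr),
\end{multline*}
with $w_{l,n} \eqdef \delta_{l+1}\prod_{k=l+2}^{n+1}(1-\delta_k)$. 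The first term decays like $\prod_{k=1}^{n+1}(1-\delta_k)$; the last sum is controlled by H\ref{A3}\ref{A3:item3}) together with $\|\theta_{i+1}-\theta_i\| = O(\gamma_{i+1})$, producing a term of size $O(\gamma_n\,\D_{n+1})$ after using elementary bounds on $\D_n$. The middle sum is the genuine Monte Carlo error, which I would treat through the Poisson equation associated to $P_{\theta_l}$: under H\ref{A4}\ref{A4:item1})--\ref{A4:item2}) there is a solution $\hatS_\theta$ of $\hatS_\theta - P_\theta\hatS_\theta = S - \barS(\theta)$ with $\sqrtWnorm{\hatS_\theta}$ bounded uniformly in $\theta\in\Theta$ and, by H\ref{A4}\ref{A4:item3}), Lipschitz in $\theta$; this turns each $S(\hid_{j,l}) - \barS(\theta_l)$ into a martingale increment plus a telescoping remainder in $\hatS_{\theta_l}$. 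Collecting all pieces one arrives at a decomposition $S_{n+1}^{\sa} - \barS(\theta_n) = \xi_{n+1} + \eta_{n+1}$, where $\PE[\xi_{n+1}\mid\F_n]=0$ with $\PE[\|\xi_{n+1}\|^2\mid\F_n] = O\bigl(\delta_{n+1}^2(1+\D_{n+1})^2 m_{n+1}^{-1}\bigr)$ and $\eta_{n+1}$ a remainder; along the way one recovers the general form of the $L^2$ bound of Proposition~\ref{prop:L2norm:DeltaS}.

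With this in hand the three series are checked separately. For $\sum_n \gamma_{n+1}^2\|\Psi(\theta_n)(S_{n+1}^{\sa}-\barS(\theta_n))\|^2$: take expectations, use the $L^2$ bound, the boundedness of $\Psi$ on $\Theta$, $\sum_n\gamma_n^2<\infty$ and $\sum_n\gamma_n^2\delta_n^2(1+\D_{n+1})^2 m_n^{-1}<\infty$ from H\ref{A5}\ref{A5:item1}); the series has finite expectation, hence converges almost surely. For the scalar series $\sum_n\gamma_{n+1}\pscal{\Psi(\theta_n)(S_{n+1}^{\sa}-\barS(\theta_n))}{T_{\gamma_{n+1}}(\theta_n)}$ and the vector-valued series $\sum_n\gamma_{n+1}\Psi(\theta_n)(S_{n+1}^{\sa}-\barS(\theta_n))$: split each according to $\xi_{n+1}+\eta_{n+1}$. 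The $\xi$-parts are scalar-valued, respectively vector-valued, martingale transforms; since $T_{\gamma_{n+1}}(\theta_n)$ and $\Psi(\theta_n)$ are bounded and $\sum_n\gamma_n^2\delta_n^2(1+\D_{n+1})^2 m_n^{-1}<\infty$, their conditional second moments are summable and they converge almost surely by the martingale convergence theorem. The $\eta$-parts are handled by summation by parts: the telescoping structure of the remainder, combined with $\sum_n(\gamma_{n-1}\gamma_n+\gamma_{n-1}^2+|\gamma_n-\gamma_{n-1}|)\D_n<\infty$ from H\ref{A5}\ref{A5:item1}) and, in the biased case, with the summability of the increments of $\{m_n^{-1}\delta_n\}$, $\{m_n^{-1}\delta_n\D_{n+1}\}$ and the $m^{-1}\delta(1+\D)$-weighted increments of $\{\gamma_n\}$ collected in H\ref{A5}\ref{A5:item2}), yields $\sum_n\gamma_{n+1}\|\eta_{n+1}\|<\infty$ almost surely. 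Applying Theorem~\ref{theo:Cvg:perturbedEMprox} concludes.

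I expect the main obstacle to be the control of the remainder $\eta_{n+1}$ in the biased case with a non-vanishing bias, in particular the constant-batch regime $m_n\equiv m$: there $\PE[S_{n+1}^{\sa}\mid\F_n]\neq\barS(\theta_n)$ and the bias does not tend to $0$, so it cannot be absorbed term by term. One must instead isolate the systematic part of $\eta_{n+1}$ --- coming from the $P_\theta\hatS_\theta$ telescoping terms, the drift of $\theta_n$ inside $\barS$, and the $(1-\delta_k)$-truncated initialization --- and show that, after summation by parts, it contributes only through differences of $\gamma_n$, of $m_n^{-1}\delta_n$ and of $m_n^{-1}\delta_n\D_{n+1}$; this is exactly the purpose of the technical conditions in H\ref{A5}, and verifying that they suffice is the heart of the argument. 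The fact that $S_{n+1}^{\sa}$ aggregates \emph{all} past draws (rather than only the current batch, as in MCPG) is what makes this step heavier than its MCPG counterpart in \cite{atchade2014stochastic} and is the source of the $\D_n$ factors throughout.
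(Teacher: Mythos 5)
Your overall architecture (reduce to Theorem~\ref{theo:Cvg:perturbedEMprox}, unfold the recursion Eq.~\eqref{eq:SAApprox}, solve the Poisson equation under H\ref{A4}, split into martingale plus remainder, Abel summation, invoke H\ref{A5}) matches the paper's, and you correctly locate the difficulty in the constant-batch biased regime. But the pivotal step is misformulated and would fail as written. If you split $S_{n+1}^{\sa}-\barS(\theta_n)=\xi_{n+1}+\eta_{n+1}$ with $\PE[\xi_{n+1}\mid\F_n]=0$, then $\xi_{n+1}$ can only involve the \emph{new} draws $\{\hid_{k,n}\}$, which enter $S_{n+1}^{\sa}$ with weight $\delta_{n+1}/m_{n+1}$; its conditional variance is $O(\delta_{n+1}^2/m_{n+1})$, not $O\bigl(\delta_{n+1}^2(1+\D_{n+1})^2/m_{n+1}\bigr)$ --- the $\D$ factor only appears after re-aggregating over future iterations. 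More seriously, all the Monte Carlo noise from iterations $l<n$ is $\F_n$-measurable and therefore must land in $\eta_{n+1}$; its $L_2$ norm is of order $\bigl(\sum_{l\le n}w_{l,n}^2/m_{l+1}\bigr)^{1/2}$, which for constant $\delta_n=\delta_\star$ and constant $m_n=m$ is a nonvanishing constant $\asymp\sqrt{\delta_\star/m}$, and even for $\delta_n\sim n^{-\beta}$ decays only like $n^{-\beta/2}$. Hence $\sum_n\gamma_{n+1}\|\eta_{n+1}\|=\infty$ under H\ref{A5} whenever $\alpha+\beta/2\le 1$ (in particular for constant $\delta_n$), so the claimed absolute convergence of the $\eta$-part cannot be the right endpoint: the accumulated past noise must be killed by cancellation, not by size.

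The paper's resolution (Lemma~\ref{lem:IterateApproxSto} and Proposition~\ref{prop:MartingaleReste}) is to unfold the recursion for the \emph{weighted} error $\A_{n+1}\left(S_{n+1}^{\sa}-\barS(\theta_n)\right)$, carrying the matrices $\A_j$ through the recursion, then sum over $n$ and exchange the order of the two summations, so that the batch error of iteration $j$ appears exactly once with the aggregated weight $\delta_j(1+\D_{j+1})\A_j$; only then is the Poisson-equation split performed, turning this single aggregated term into a genuine martingale transform (handled by the Hall--Heyde criteria under $\sum_j\gamma_j^2\delta_j^2(1+\D_{j+1})^2 m_j^{-1}<\infty$) plus telescoping remainders whose summability is exactly what H\ref{A5}\ref{A5:item2}) and the $\A$-increment bounds of Lemma~\ref{lem:DeltaTheta} are designed for. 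Your ``summation by parts'' gestures at this, but the exchange of summation changes which object is the martingale; carried out per step as you describe, the argument does not close.
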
 
\begin{proof}
  The proof is in Section~\ref{sec:proof:SAEMprox}.
\end{proof}

\section{Numerical illustration in the convex case}\label{sec:practical}
  In this section, we illustrate the behavior of the
algorithms MCPG and SAPG on a toy example.  We first introduce the example and
then give some guidelines for a specific choice of the sequences $\{\delta_n, n
\geq 0\}$, $\{\gamma_n, n \geq 0\}$. Finally, the algorithms are compared more
systematically on repeated simulations.

\subsection{A toy example}
\label{sec:algos:toyexample}
The example is a mixed model, where the regression function is linear in the
latent variable $\hid$. More precisely, we observe data $(\obs_1, \cdots,
\obs_N)$ from $N$ subjects, each individual data being a vector of size $J$:
$\obs_k \eqdef (\obs_{k1}, \ldots, \obs_{kJ})$.  For the subject $k$, $k=1,
\cdots, N$, $\obs_{kj}$ is the $j$-th measurement at time $t_{kj}$, $j=1,
\cdots, J$.  It is assumed that $\{\obs_{k}, k=1, \cdots, N\}$ are independent
and for all $k=1, \cdots, N$,
\begin{equation}\label{eq:ModeleLineaireMixte1}
\begin{split}
&Y_{kj} \vert \hid^{(k)} \stackrel{ind}{\sim} \mathcal{N}\left(\pscal{\hid^{(k)}}{\bar t_{kj}}
,1\right), \\ &\bar t_{kj} \eqdef \begin{bmatrix}
  1 \\
  t_{kj}
\end{bmatrix} \qquad j=1, \cdots, J;
\end{split}
\end{equation}
that is, a linear regression model with individual random intercept
and slope, the $\rset^2$-valued vector being denoted by
$\hid^{(k)}$. The latent variable is
$\hid=(\hid^{(1)}, \ldots, \hid^{(N)})$. Furthermore,
\begin{equation}\label{eq:ModeleLineaireMixte2}
 \hid^{(k)} \stackrel{ind}{\sim} \mathcal{N}_2( X_k \theta, I_2);
\end{equation}
here, $\theta \in \rset^{2(D+1)}$ is an
unknown parameter and the design matrix $X_k \in \rset^{2 \times 2(D+1)}$ is known 
\begin{equation}\label{eq:designX}
X_k \eqdef \begin{bmatrix}
 1& X_{k1}& \ldots& X_{k D} &0 & 0& \ldots& 0 \\
 0&  0&  \ldots &0&  1&  X_{k1}& \ldots &X_{k D}
\end{bmatrix}.
\end{equation}
The optimization problem of the form Eq.~(\ref{eq:problem}) that we consider is
the log-likelihood $\ell(\theta)$ penalized by a lasso penalty: the objective
is the selection of the influential covariates
$$(X_{k1}, \ldots, X_{kD})$$ on the two components of $\hid^{(k)}$. We
thus penalize all the elements except $\theta_1$ and $\theta_{D+2}$
which correspond to the two intercepts; hence, we set
\[
g(\theta) \eqdef \lambda \sum_{r \neq \{1, D+2 \}} |\theta_r|.
\]

\medskip

The above model is a latent variable model with complete
log-likelihood equal to - up to an additive constant
\begin{align*}
  - \frac{1}{2}\sum_{k=1}^N\left\{ \sum_{j=1}^{J}  \left( \obs_{kj} - \pscal{\hid^{(k)}}{\bar t_{kj}}\right)^2 +   \left\| \hid^{(k)} - X_k \theta\right\|^2 \right\}.
\end{align*}
It is of the form $\phi(\theta) + \pscal{S(z)}{\psi(\theta)}$ by
setting (with $(\cdot)'$ denoting the transpose of a matrix)
\begin{align*}
\begin{split}
  &\phi(\theta) \eqdef -\frac{1}{2} \theta' \left( \sum_{k=1}^N X_k' X_k\right)
  \theta - \frac{1}{2} \sum_{k=1}^N \sum_{j=1}^J \obs_{kj}^2, \\
  &\psi(\theta) \eqdef \begin{bmatrix}
    1 \\
    \theta
\end{bmatrix} \in \rset^{1+2(D+1)}, \\
&S(z^{(1)}, \cdots, z^{(N)})  \eqdef \\&- \frac{1}{2} \sum_{k=1}^N \begin{bmatrix}
  {z^{(k)}}' (I+T_k) z^{(k)}- 2 \pscal{z^{(k)}}{\bar \obs_k} \\
  - 2 X_k' z^{(k)}
\end{bmatrix}, \\ & T_k \eqdef \sum_{j=1}^{J} \bar t_{kj} \bar t_{kj}',\\ &\bar \obs_k
\eqdef \sum_{j=1}^{J} \obs_{kj} \bar t_{kj}.
\end{split}
\end{align*}
The \textit{a posteriori} distribution $\pi_\theta$ is a Gaussian distribution on
$\rset^{2N}$, equal to the product of $N$ Gaussian distributions on $\rset^2$:
\begin{equation}\label{eq:toyexample:pi}
\begin{split}
&\pi_\theta(z^{(1)}, \cdots, z^{(N)}) \eqdef \\ &\prod_{k=1}^N
\mathcal{N}_2\left((I+T_k)^{-1} \left(\bar \obs_k + X_k \theta \right),  (I+T_k)^{-1} \right)[z^{(k)}].
\end{split}
\end{equation}
Hence, $\barS(\theta)$ is explicit and given by
\begin{equation} \label{eq:toyexample:barS}
\barS(\theta) = -\frac12 \sum_{k=1}^N
\begin{bmatrix}
     \mathrm{Trace}((I+T_k) \Sigma_k)  -2   \bar \obs_k' (I+T_k)^{-1} \left(\bar \obs_k + X_k
      \theta \right)
  \\
  -2  X_k' (I+T_k)^{-1} \left(\bar \obs_k + X_k \theta \right)
\end{bmatrix}
\end{equation}
with
\begin{equation} 
\begin{split}
&\Sigma_k  \eqdef (I+T_k)^{-1}  \\  &+ (I+T_k)^{-1} \left(\bar \obs_k + X_k
  \theta \right) \left(\bar \obs_k + X_k \theta \right)' (I+T_k)^{-1}.
\end{split}
\end{equation}      
Finally, note that in this example, the function $\ell$ is explicit and given by (up to an additive constant) 
\begin{equation*}
\begin{split}
  \ell(\theta) = &- \frac{1}{2} \theta' \left( \sum_{k=1}^N X_k' X_k  \right) \theta \\ &+ \frac{1}{2} \sum_{k=1}^N (\bar \obs_k + X_k \theta)' (I+T_k)^{-1} (\bar \obs_k + X_k \theta).
\end{split}
\end{equation*}
Thus $\ell$ is a concave function. Furthermore, in
  this toy example, $\theta \mapsto \nabla \ell(\theta)$ is linear so
  that the Lipschitz constant $L$ is explicit and equal to
  \begin{equation}\label{eq:Lip:ctt}
  L = \Vert - \sum_{k=1}^N X_k' X_k + \sum_{k=1}^N X_k' (I+T_k)^{-1} X_k
  \Vert_2,
  \end{equation}
 where  for a matrix A, $\| A \|_2$  denotes the spectral norm.   Finally, we assumed that
 $\Theta = \{ \theta \in \mathbbm{R}^{2(D+1)} \vert \Vert \theta \Vert < 10^4 \}$ to fulfill the theoretical boundedness assumption. The MCMC algorithm includes a projection step on $\Theta$ if necessary. But in practice, it never happens.
 \\ \\ 
A data set is simulated using this model with $N=40$, $J=8$, $D=300$
and $t_{kj} \in \{ 0.25,4,6,8,10,12,14,16 \}$,
$\forall k \in \{1,...,N\}$. The design components
$(X_{k1}, \ldots, X_{kD})$ (see Eq.~(\ref{eq:designX})) are drawn from
a centered Gaussian distribution with covariance matrix $\Gamma$
defined by $\Gamma_{rr'} = 0.5^{\vert r - r' \vert}$
($r,r'=1,...,300$).  To sample the observations, we use a parameter
vector $\theta^\star$ defined as follows:
$\theta_1^\star = \theta_{D+2}^\star=1$; the other components are set
to zero, except $12$ components randomly selected ($6$ among the
components $\{2, \cdots, D+1\}$ and $6$ among the components
$\{D+3, \cdots, 2D+2 \}$) and chosen uniformly in $\ccint{0.5,1.5}$ -
see the last row on Figure~\ref{fig:RES_TOY_SUPPORT}.

\subsection{Guidelines for the implementation}
\label{sec:guidelines}

In this section, we give some guidelines on the choice of the sequences
$\{\delta_n, n \geq 0\}$ and $\{\gamma_n, n \geq 0\}$.  We illustrate the
results on  single runs of each algorithm. We use the same random draws for all the algorithms to avoid potential differences due to the randomness of the simulations. Similar results have been
observed when simulations are replicated. We refer to Section~\ref{sec:simuLongTime} for replicated simulations.

Classical sequences $\{\delta_n, n \geq 0\}$ and $\{\gamma_n, n \geq 0\}$ are
of the form:
\begin{equation}
\label{eq:gammaRandom}
 \gamma_{n+1}   = \left\{ 
\begin{array}{ll} 
  \gamma_\star  & \text{ if } n \leq n_\alpha, \\
  \gamma_\star (n-n_\alpha)^{-\alpha}  & \text{ if } n>n_\alpha, \end{array} \right.
\end{equation} 

\begin{equation}
  \label{eq:gammaRandom}
  \delta_{n+1}   = \left\{ 
\begin{array}{ll} 
  \delta_\star  & \text{ if } n \leq n_\beta, \\
  \delta_\star (n-n_\beta)^{-\beta}  & \text{ if } n>n_\beta.
\end{array} \right. 
\end{equation} 

 \begin{figure*}
\begin{center}
\includegraphics[scale = 0.55]{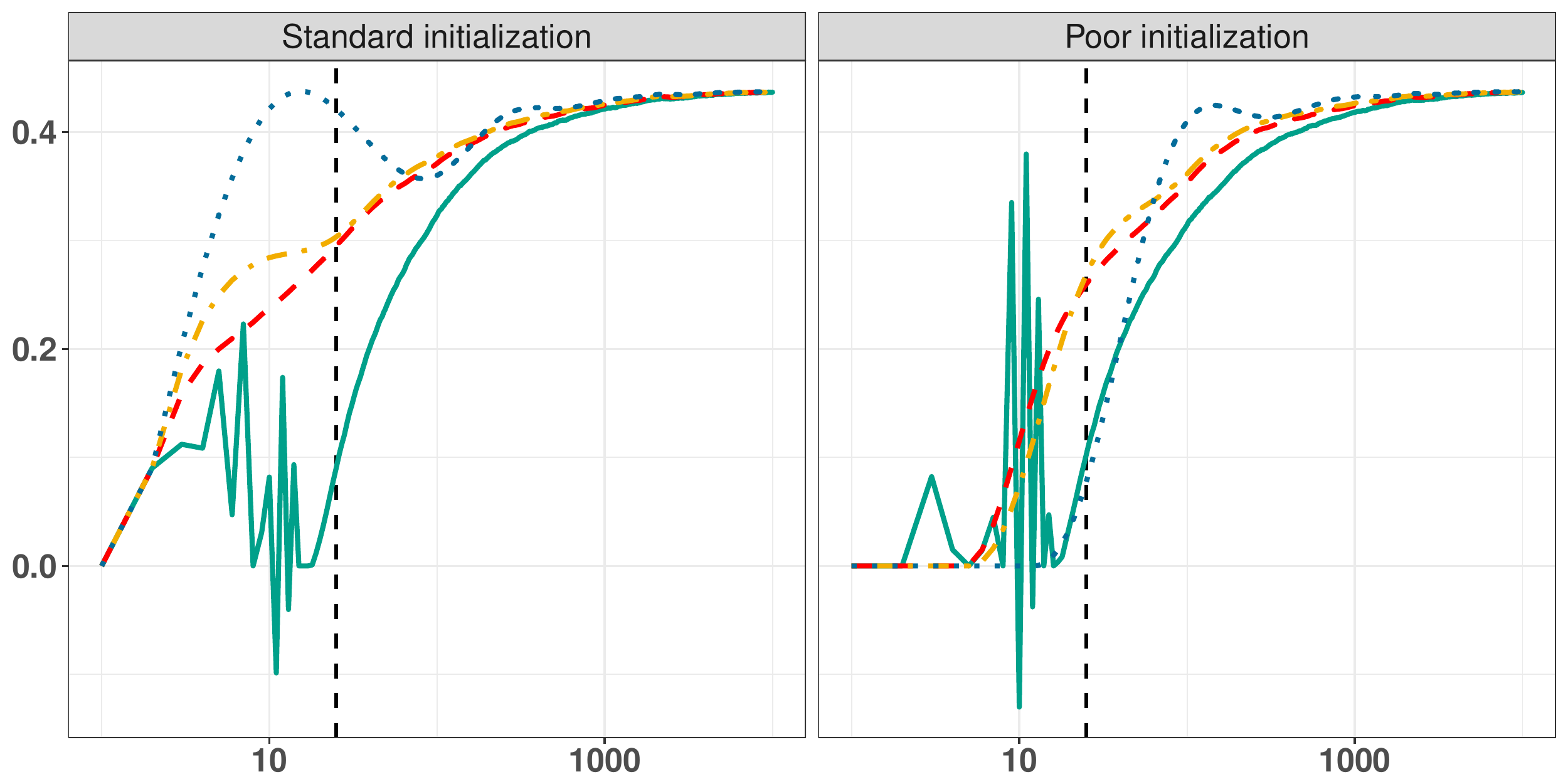}
\caption{Estimation of the component $\# 245$ of the
  vector $\theta$ along $10000$ iterations ($x$-axis is in $\log_{10}$ scale)
  for MCPG and SAPG. MCPG is represented in green solid line. A run of SAPG is
  displayed in dashed red line in the case $\delta_\star=0.8$; in
  dashed-dotted yellow line in the case $\delta_\star=0.5$ and in dotted blue line in the case $\delta_\star=0.2$. For all runs,
  $\gamma_{\star}=0.009$, $n_\alpha = n_\beta = 0$, $(\alpha, \beta) = (0.75,
  0.5)$ and $m_n = 60$. The vertical black dashed line corresponds to the smallest $n$ such
  that $\gamma_{n} \leq 1/L$. Left: standard initialization of $\theta$
  ($\theta_{n=0}$ is the null vector).  Right: poor initialization of $\theta$. The penalty term $\lambda$ was set to 50 for all the runs.  }
\label{fig:GUIDE}
\end{center}
\end{figure*}

\textbf{Impact of $\gamma_\star$ and $\delta_\star$ on the transient phase}:
the theoretical study on the asymptotic behavior of SAPG and MCPG is derived
under the assumption that $\gamma_n \leq 1/L$: when $\alpha >0$, this property
holds for any $n$ large enough. In this section, we illustrate the role of
$\gamma_n, \delta_n$ for small values of $n$ that is, in the transient phase of
the algorithm.  In Figure~\ref{fig:GUIDE}, we display the behavior of MCPG and
SAPG for two different values of the initial point $\theta_{n=0}$: on the left, it
corresponds to a standard initialization ($\theta_{n=0} = (0, \cdots, 0)$) while on
the right, it corresponds to a poor initialization - which mimics what may
happen in practice for challenging numerical applications.

On both plots, we indicate by a vertical line the smallest $n$ such that
$\gamma_{n} \leq 1/L$ - remember that in this example, $L$ is explicit
(see Eq.~\eqref{eq:Lip:ctt}).  The plots show the estimation of
component \#245, as a function of the number of iterations $n$. In all cases,
$n_\alpha = n_\beta =0$, $\alpha = 0.75$, $m_n= 60$, and for SAPG, $\beta =
0.5$. The dotted blue curve displays a run of SAPG when $(\gamma_\star,\delta_\star) = (0.009,0.2)$; the dashed-dotted yellow curve displays a run of SAPG when
$(\gamma_\star,\delta_\star) = (0.009,0.5)$; the dashed red curve displays a
run of SAPG when $(\gamma_\star,\delta_\star) =(0.009,0.8)$; the green solid
curve displays a run of MCPG when $\gamma_\star = 0.009$.

The stability of MCPG during the transient phase  depends crucially on the first
values of the sequence $\{\gamma_n, n \geq 0\}$.  Then when $n$ is large enough
so that $\gamma_{n} \leq 1/L$ (after the vertical line), MCPG is   more
stable and gets smoother. For SAPG, a 
small value of $\delta_\star$ implies an important impact of the initial point $\theta_{n=0}$. 
When this initial point is poorly chosen, a small value of $\delta_\star$  delays 
the convergence of SAPG. A value of $\delta_\star$ around $0.5$ is   a good compromise.

 \textbf{Role of ${\alpha}$ and ${\beta}$}: Figure \ref{fig:GUIDE_ALPHA}
displays the behavior of SAPG for different values of $\alpha$ and $\beta$ with
$(\gamma_\star, \delta_\star)=(0.015, 0.5)$, $n_\alpha = n_\beta=0$ and
$m_n=60$. The plots show that the larger the parameter $\alpha$ is, the longer
the transient phase is. We then recommend to set $\alpha$   close to $0.6$. 
The parameter $\beta$ seems to have an impact
only when $\alpha$ is close to $1$. Therefore, we recommend to set $\delta_n$
constant during the transient phase ($n_\beta >0$) and then to   decrease it
  rapidly   in the convergence phase.

 \begin{figure*}
\begin{center}
\includegraphics[scale = 0.55]{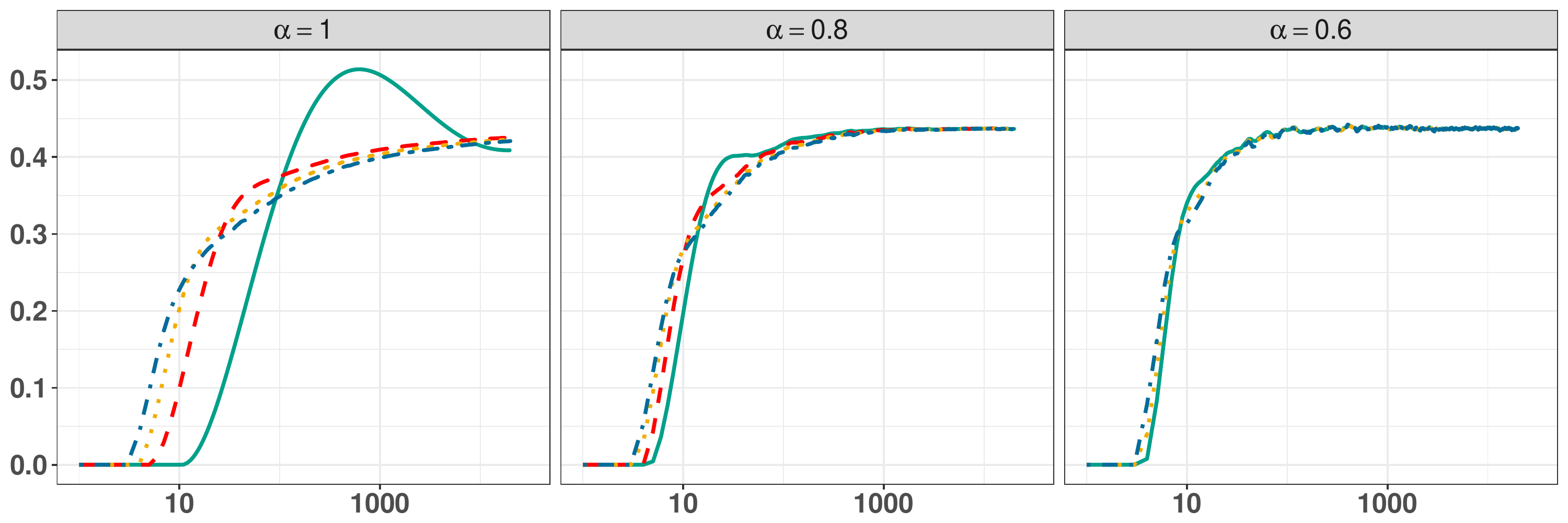}
\caption{Estimation of the component $\# 245$ of the
  vector $\theta$ along $10000$ iterations ($x$-axis is in $\log_{10}$ scale).
  Runs of SAPG with $(\gamma_\star, \delta_\star)=(0.015,0.5)$, $n_\alpha =
  n_\beta=0$ and $m_n=60$ and different values of $(\alpha, \beta)$. (Left)
  $\alpha = 1$ and $\beta=0.9$ (green solid line), $\beta=0.6$ (red dashed
  line), $\beta=0.3$ (yellow dotted line), $\beta=0$ (blue dash-dotted line).
  (Middle) $\alpha = 0.8$ and $\beta=0.5$ (green solid line), $\beta=0.3$ (red
  dashed line), $\beta=0.1$ (yellow dotted line), $\beta=0$ (blue dash-dotted
  line).  (Right) $\alpha = 0.6$ and $\beta=0.2$ (green solid line),
  $\beta=0.1$ (yellow dotted line), $\beta=0$ (blue dash-dotted line). The penalty term $\lambda$ was set to 50 for all the runs. }
\label{fig:GUIDE_ALPHA}
\end{center}
\end{figure*}

 \textbf{Random stepsize sequence $\{\gamma_n, n \geq 0\}$}: The convergence of
the SAPG algorithm can suffer from the scale difference of the parameters, when
run with the same stepsize sequence $\{\gamma_n, n \geq 0\}$ applied to each
component of $\theta_n$.

 Ideally each component of $\theta_n$ should have a specific $\gamma_n$
value adapted to its scale. But it can be time-consuming to find, by hand-tuning, a sequence
that ensures a fast and stable convergence of the algorithm.  As an
alternative, we suggest to use a matrix-valued random sequence
$\{\Gamma^n, n\geq 0\}$ and replace the update rule of SAPG by
 $$ (\theta_{n+1})_i  = \Prox_{ \Gamma^{n+1}_{ii} g} \left( (\theta_n)_i +
    \Gamma^{n+1}_{ii} \left( \phi(\theta_n) + \Psi(\theta_n)
      S_{n+1}^\sa\right)_i \right). 
$$
 We propose to define the matrix
$\Gamma_{n+1}$ as a diagonal matrix with entries
$ \Gamma^{n+1}_{ii}$ depending on $H^n_{ii}$,
where $H^n$ is an approximation of the hessian of the likelihood
$\ell(\theta)$ (we give an example of such an approximation in
Section~\ref{sec:applications}).  Through numerical
  experiments, we observed that asymptotically, $H^n$
  converges. Hence, to ensure a stepsize sequence decaying like
  $O(n^{-\alpha})$ asymptotically, we propose the following definition
of the random sequence:
\begin{equation}
\label{eq:gammaRandom}
 \Gamma^{n+1}_{ii}  = \left\{ 
\begin{array}{ll} 
  1/H^n_{ii} & \text{ if } n \leq n_0, \\  \left((n-n_0)^{\alpha} H^n_{ii} \right)^{-1} & \text{ if } n>n_0. \end{array} \right.
  \end{equation}

\subsection{Long-time behavior of the algorithm}\label{sec:simuLongTime}

In this section, we illustrate numerically the theoretical results on the long term 
convergence of the algorithms MCPG, SAPG and SAEM-pen  (\textit{i.e.}  Algorithm \ref{algo:EM} applied
with $S_{n+1} = S_{n+1}^\sa$) and EM-pen on the toy model. In this
example, the exact algorithm EM-pen (see Eq.~\eqref{eq:EM:iteration}) applies:
the quantity $\barS(\theta)$ is an explicit expectation under a Gaussian
distribution $\pi_\theta$. Therefore, we use this example \textit{(i)} to
illustrate the convergence of the three stochastic methods to the same limit
point as EM-pen, \textit{(ii)} to compare the two approximations $S_{n+1}^\mc$
and $S_{n+1}^\sa$ of $\barS(\theta_n)$ in a GEM-pen approach, and
\textit{(iii)} to study the effect of relaxing the M-step by comparing the
GEM-pen and EM-pen approaches namely SAPG and SAEM-pen.

The sequences $\{\gamma_n, n \geq 0 \}$ and $\{\delta_n, n \geq 0 \}$ are
defined as follows: $(\gamma_\star, \delta_\star) = (0.004, 0.5)$, and
$n_\alpha = n_\beta = 0$; three different pairs $(\alpha, \beta)$ are
considered: $(\alpha, \beta)=(0.9, 0.4)$, $(\alpha, \beta)= (0.6,0.1)$, and
$(\alpha, \beta)=(0.5,0.5)$.  The algorithms are implemented with a fixed batch
size $m_n=60$.  $100$ independent runs of each algorithm are performed.  For
the penalty term, we set $\lambda=50$.  In MCPG, SAPG and SAEM-pen, the
simulation step at iteration $(n+1)$ relies on exact sampling from
$\pi_{\theta_n}$ - see Eq.~\eqref{eq:toyexample:pi}; therefore, in this toy
example, the Monte Carlo approximation of $\barS(\theta_n)$ is unbiased.

On Figure~\ref{fig:Stat_suff}, for the three algorithms MCPG, SAPG and
SAEM-pen, the evolution of an approximation of $\| S_{n+1} - \bar
S(\theta_n)\|_{2}$ with iterations $n$ is plotted, where, for a random variable
$U$, $\|U\|_2 \eqdef \sqrt{\PE\left[ \|U \|^2 \right]}$.  This $L_2$-norm is
approximated by a Monte Carlo sum computed from $100$ independent realizations
of $S_{n+1}$; here, $\barS(\theta_n)$ is explicit (see
Eq.~\eqref{eq:toyexample:barS}).  SAEM-pen and SAPG behave similarly; the
$L_2$-norm converges to $0$, and the convergence is slower when
$(\alpha,\beta)=(0.6,0.1)$ - this plot illustrates the result stated in
Proposition~\ref{prop:L2norm:DeltaS}, Section~\ref{sec:theorie}.  This
convergence does not hold for MCPG because the size $m_n$ of the Monte Carlo
approximation is kept fixed.

\begin{figure*}
\begin{center}
\includegraphics[scale = 0.45]{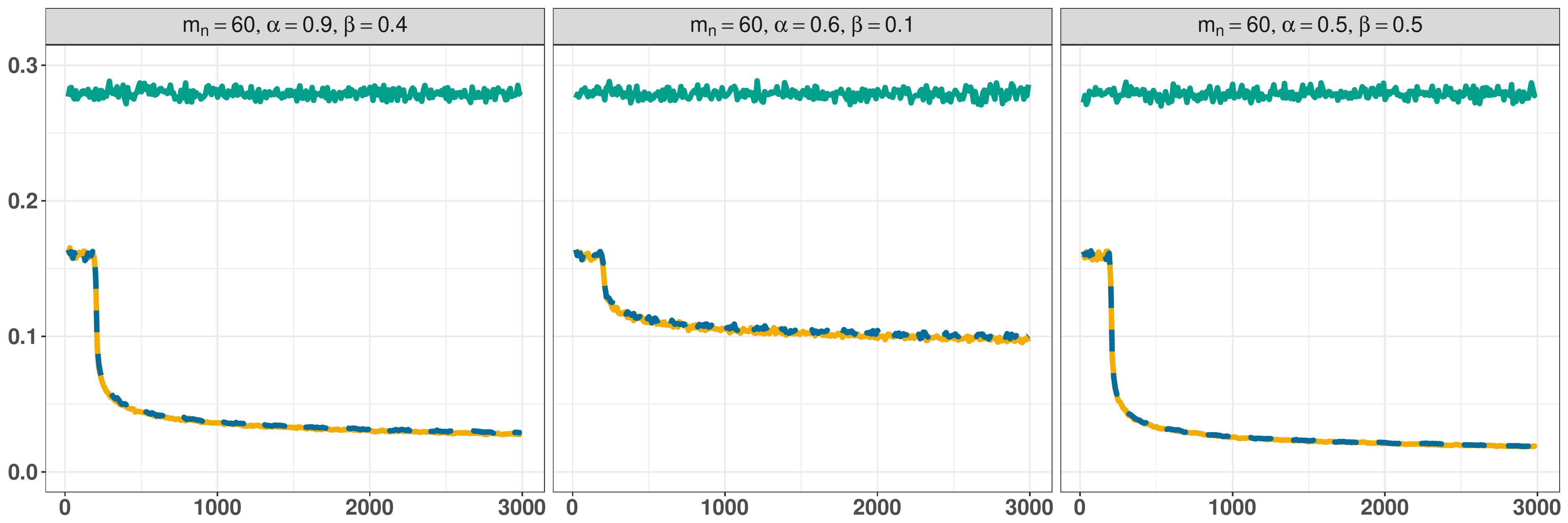}
\caption{Evolution of the Monte Carlo approximation of
  $\| S_{n+1} - \bar S(\theta_n)\|_{2}$ with iterations $n$ for
  algorithms MCPG (solid green), SAEM-pen (solid yellow), SAPG (dashed
  blue), implemented with $(\alpha,\beta) = (0.9, 0.4)$ [left],
  $(\alpha, \beta) = (0.6, 0.1)$ [center] and
  $(\alpha, \beta) = (0.5, 0.5)$ [right]; for MCPG and SAPG, the batch
  size is fixed $m_n = 60$.}
\label{fig:Stat_suff}
\end{center}
\end{figure*}

We compared the limiting vectors $\lim_n \theta_n$ obtained by each
algorithm, over the $100$ independent runs. They are all equal, and
the limiting vector is also the limiting value $\theta_\infty$ of the
EM-pen algorithm. 
In order to discuss the rate of convergence, we show the behavior of the
algorithms when estimating the component $\# 245$ of the regression
coefficients; this component was chosen among the non-null component of
$\theta_\infty$. Figure~\ref{fig:RES_TOY_ITER} shows the boxplot of $100$
estimations of the component $\# 245$ of the vector $\theta_n$, when $n = 5,
25, 50, 500$, for the algorithms MCPG, SAPG and SAEM-pen with $(\alpha, \beta)
= (0.9, 0.4)$. Here, SAPG and MCPG behave similarly, with a smaller variability
among the $100$ runs than SAEM-pen.  SAEM-pen converges faster than SAPG and
MCPG which was expected since they correspond respectively to stochastic
perturbations of EM-pen and GEM-pen algorithms.  Figure~\ref{fig:RES_TOY} shows
the boxplot of $100$ estimations by MCPG, SAPG and SAEM-pen of the component
$\# 245$ after $n=500$ iterations with different values for the parameters
$\alpha$ and $\beta$. We observe that the three algorithms give similar final estimates for the three conditions on parameters $\alpha$ and $\beta$. This is due to the fact that with
 $n_\alpha = n_\beta = 200$, the algorithms
have already attained the convergence phase when $n=200$. This allows the algorithms to quickly converge  toward the limit points when $n>200$.

\begin{figure}
\begin{center}
\includegraphics[scale = 0.45]{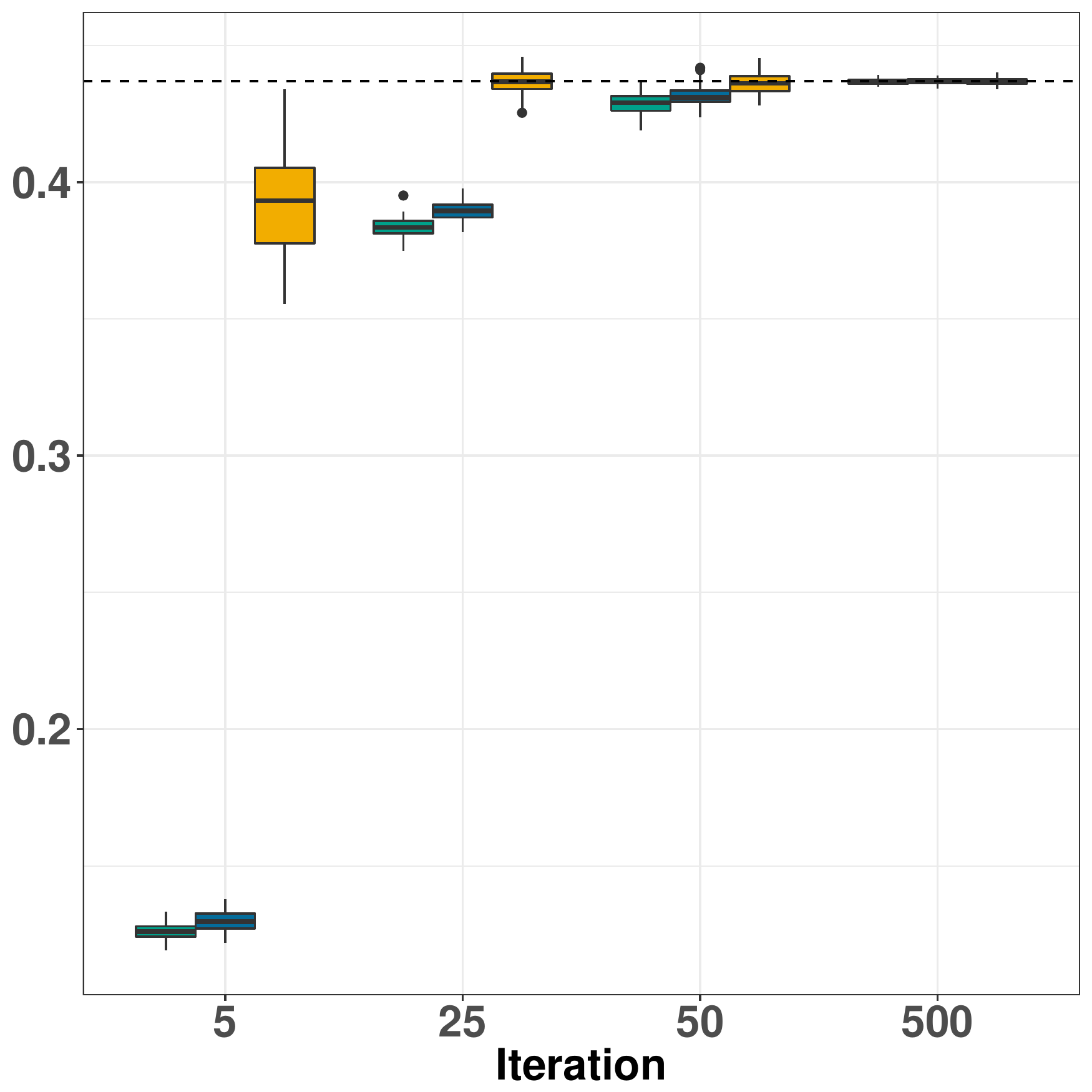}
\caption{Estimation of the component $\# 245$ of the vector $\theta_n$
  when $n=5, 25, 50, 500$.  MCPG (green, left), SAPG (blue, middle)
  and SAEM-pen (yellow, right) are implemented with
  $(\alpha,\beta) = (0.9, 0.4)$; for MCPG and SAPG, the batch size is
  fixed $m_n = 60$. Each boxplot is computed from $100$ independent
  runs. Black dashed line correspond to the value obtained with EM-pen
  algorithm at iteration $500$. }
\label{fig:RES_TOY_ITER}
\end{center}
\end{figure}

\begin{figure*}
  \begin{center}
\includegraphics[scale = 0.35]{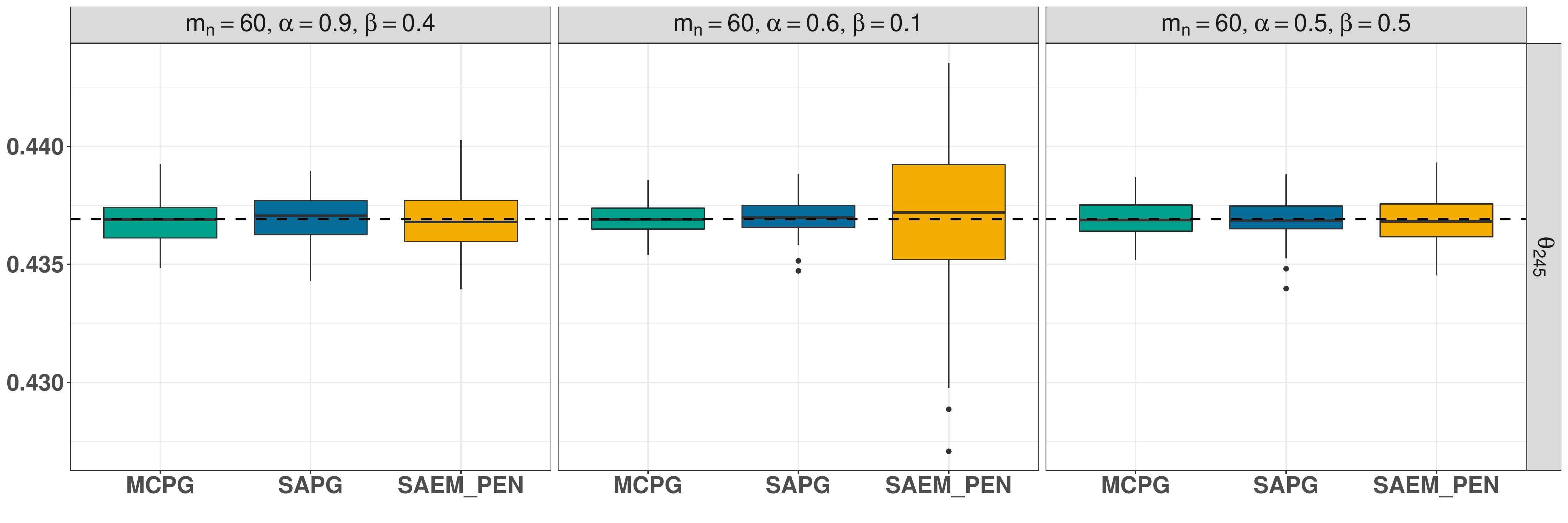}
\caption{Estimation of the component $\# 245$ of the vector $\theta_n$
  when $n=500$.  MCPG (green), SAPG (blue) and SAEM-pen (yellow) are
  implemented with $(\alpha,\beta) = (0.9, 0.4)$ [left],
  $(\alpha, \beta) = (0.6, 0.1)$ [center] and
  $(\alpha, \beta) = (0.5, 0.5)$ [right]; for MCPG and SAPG, the batch
  size is fixed $m_n = 60$. Each boxplot is computed from $100$
  independent runs. Black dashed line correspond to the value obtained
  with EM-pen algorithm at iteration $500$. }
\label{fig:RES_TOY}
\end{center}
\end{figure*}

Figure~\ref{fig:LOG_LIK} shows the convergence of a Monte Carlo
approximation of $n \mapsto \PE\left[F(\theta_n)\right]$ based on
$100$ independent estimations $\theta_n$ obtained by three different
algorithms: EM-pen, MCPG, SAPG and SAEM-pen run with
$(\alpha, \beta)=(0.9,0.4)$ and $m_n = 60$. Here again, all the
algorithms converge to the same value and EM-pen and SAEM-pen converge
faster than MCPG and SAPG. We observe that the path of SAPG is far
more smooth than the path of MCPG.
\begin{figure}
\begin{center}
\includegraphics[scale = 0.5]{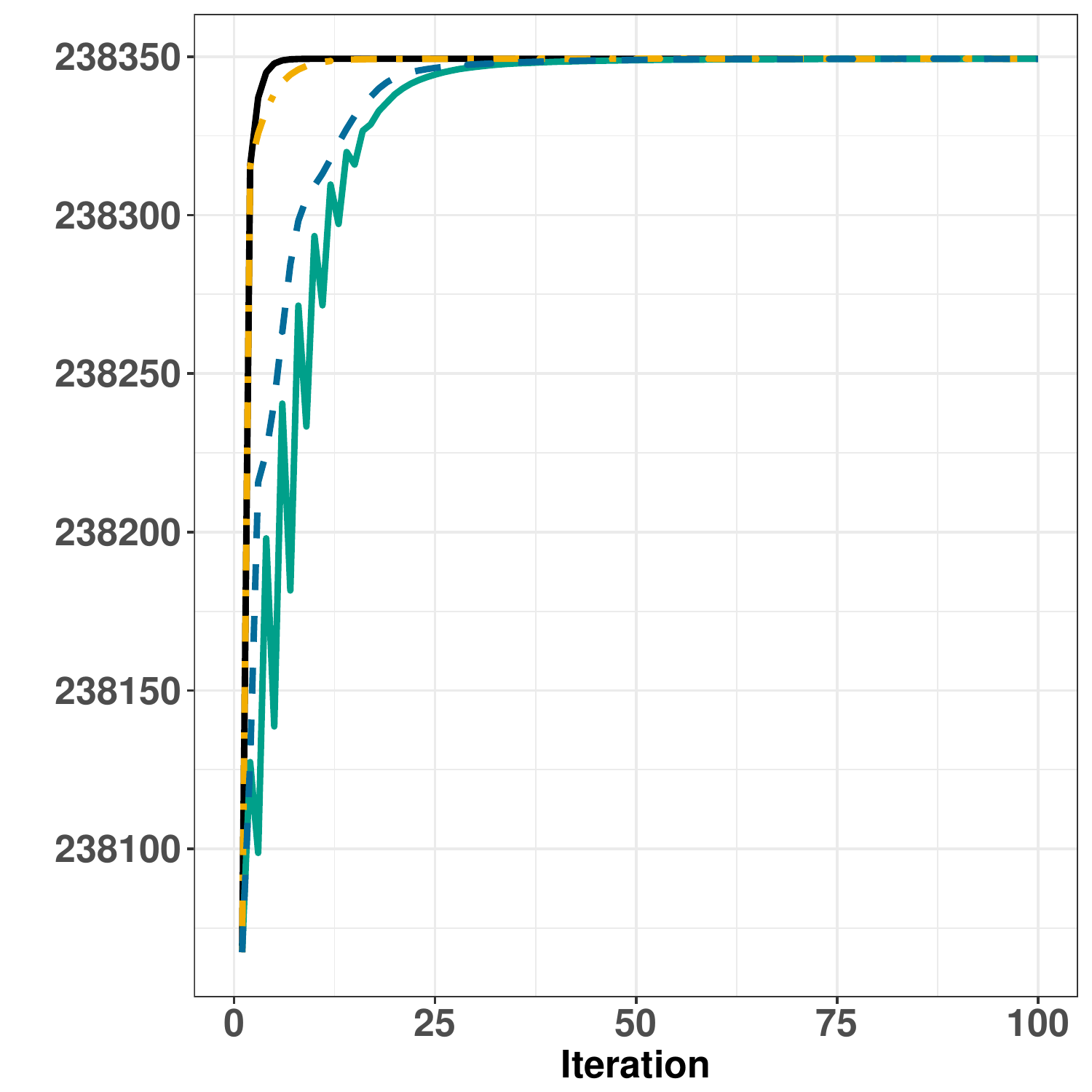}
\caption{Monte Carlo approximation of $\PE\left[F(\theta_n)\right]$
  (based on $100$ independent samples) along the iterations $n$, for
  algorithms EM-pen (solid black), MCPG (solid green), SAEM-pen
  (dash-dotted yellow), SAPG (dashed blue), implemented with
  $(\alpha,\beta)=(0.9,0.4)$ and $m_n=60$.}
\label{fig:LOG_LIK}
\end{center}
\end{figure}

Finally, Figure~\ref{fig:RES_TOY_SUPPORT} shows the support of the
vector $\lim_n \theta_n$ (where the component $\theta_1$ and
$\theta_{302}$ are removed) estimated by MCPG, SAPG, SAEM-pen and
EM-pen (the estimated support is the same for the four algorithms). The frequency, among $100$ independent
runs, for each component to be in the support of the limit value
$\lim_n \theta_n$, is displayed.  Algorithms are implemented with
$(\alpha, \beta) = (0.9,0.4)$ and $m_n = 60$.  For all algorithms, we
observe that most of the non-null components of $\lim_n \theta_n$ are
non-null components of $\theta^\star$. Note also that the stochastic
algorithms MCPG, SAPG and SAEM-pen converge to the same vector as
EM-pen.
\begin{figure*}
\begin{center}
\includegraphics[scale = 0.65]{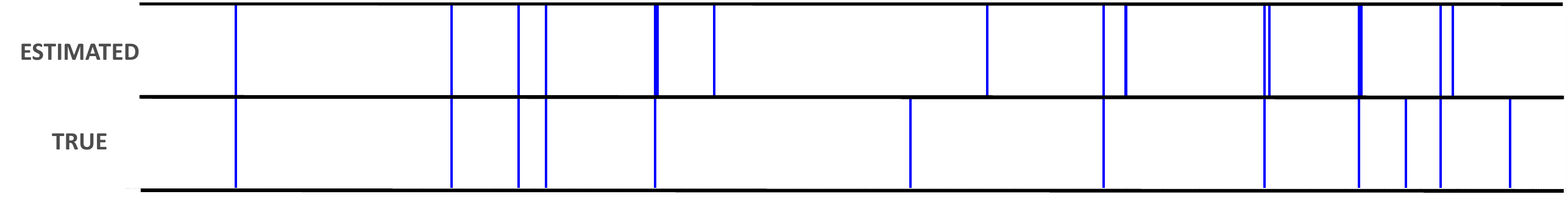}
\caption{(top) Support of $\lim_n \theta_n$ estimated by all the
  algorithms MCPG, SAPG, SAEM-pen and EM-pen over $100$ runs for
  $(\alpha,\beta)=(0.9,0.4)$ and $m_n = 60$. (bottom) The support of
  $\theta^\star$ used to produce the observations. On both rows, the components $1$
  and $D+1$ are not displayed.}
\label{fig:RES_TOY_SUPPORT}
\end{center}
\end{figure*}

\section{Inference in non-linear mixed models for pharmacokinetic data}
\label{sec:applications}
In this section, SAPG is applied to solve a more challenging
problem. The objective is to illustrate the algorithm in cases that are not covered by the theory. 
The application is in pharmacokinetic analysis, with non-linear mixed
effect models (NLMEM);  in this application, the
  penalized maximum-likelihood inference is usually solved by the
  SAEM-pen algorithm, possibly combined with an approximation of the
  M-step when it is non explicit. This section also provides a
  numerical comparison of SAPG and SAEM-pen. Both algorithms have a
  simulation step; in this more challenging application, it will rely
  on a Markov chain Monte Carlo (MCMC) sampler - see
  Section~\ref{sec:MCMC:PK}. Therefore, for both algorithms,
  $\barS(\theta)$ is approximated by a biased Monte Carlo sum.

We start with a presentation of the statistical analysis and its
translation into an optimization problem; we then propose a
modification of the SAPG by allowing a random choice of the stepsize
sequence $\{\gamma_n, n \geq 0\}$, to improve the numerical properties
of the algorithm.   We conclude the section by a
  comparison of the methods on a pharmacokinetic real data set.

\subsection{The non-linear mixed effect model}
Pharmacokinetic data are observed along time for $N$ patients.  Let
$\obs_{k}$ be the vector of the $J$ drug concentrations observed at
time $t_{kj}$ ($j \in \{1,\ldots,J \}$) for the $k$-th patient
($k \in \{1,\ldots,N\}$). The kinetic of the drug concentration is
described by a non-linear pharmacokinetic regression model $f$, which
is a function of time $t$ and unobserved pharmacokinetic parameters
$\hid^{(k)}$. These parameters are typically the rates of absorption
or elimination of the drug by the body. An example is detailed
below. The variability among patients is modeled by the randomness of
the hidden variables $\hid^{(k)}$. These pharmacokinetic parameters
may be influenced by covariates, such as age, gender but also genomic
variables. Among these high dimension factors, only few of them are
correlated to $\hid^{(k)}$. Their selection can thus be performed by
optimizing the likelihood with a sparsity inducing penalty, an
optimization problem that enters problem Eq.~(\ref{eq:problem}). However,
the likelihood is generally not concave, that is, through this
example, we explore beyond the framework in which we are able to prove
the convergence of MCPG and SAPG (see Section~\ref{sec:theorie}).
 
 Let us now detail the model and the optimization problem.  The mixed model is   defined as
 \begin{equation}\label{eq:mixedmodel0}
   \obs_{kj} = f(t_{kj},\hid^{(k)}) +  \epsilon_{kj} , \quad \quad
   \epsilon_{kj} \sim \mathcal{N}(0,\sigma^2) \mbox{ (iid)}, 
 \end{equation}
 where the measurement errors $\epsilon_{kj}$ are centered,
 independent and identically normally distributed with variance
 $\sigma^2$. Individual parameters $\hid^{(k)}$ for the $k$-th subject
 is a $R$-dimensional random vector, independent of
 $\epsilon_{kj}$. In a high dimension context, the $\hid^{(k)}$'s
 depend on covariates (typically genomics variables) gathered in a
 matrix design $X_k\in \mathbbm{R}^{R \times (D+1)R}$. The
 distribution of $\hid^{(k)}$ is usually assumed to be normal with
 independent components 
\begin{equation} \label{eq:mixedmodel} 
  \hid^{(k)} \stackrel{ind}{\sim}
  \mathcal{N}_R(X_{k} \mu ,\Omega)
 \end{equation}
 where
 $\mu \in \mathbbm{R}^{(D+1)R}$ is the mean parameter vector and
 $\Omega$ is the covariance matrix of the random parameters
 $\hid^{(k)}$, assumed to be diagonal.  The unknown parameters are
 $\theta = \left(\mu,  \Omega_{11}, \cdots,
     \Omega_{RR}, \sigma^2 \right)\in  \mathbb{R}^{R(D+1)} \times \ooint{0,+\infty}^{R+1}$.

 A typical function $f $ is the two-compartmental pharmacokinetic
 model with first order absorption, describing the distribution of a
 drug administered orally. The drug is absorbed from the gut and
 reaches the blood circulation where it can spread in peripheral
 tissues. This model corresponds to $f= \frac{A_c}{V_c}$ with $A_c$
 defined as
\begin{eqnarray} \label{eq:A1C2}
  \frac{dA_{d}}{dt}& =& -k_a \, A_{d},\nonumber\\
  \frac{dA_{c}}{dt} &=& k_a \, A_{d} + \frac{Q}{V_{p}}A_{p}  - \frac{Q}{V_{c}}A_{c}- \frac{Cl}{V_{c}}A_{c} , \\
  \frac{dA_{p}}{dt} &= & \frac{Q}{V_{c}}A_{c} -  \frac{Q}{V_{p}}A_{p} \nonumber,  
\end{eqnarray}
with $A_{d}(0) = Dose$, $A_{c}(0)  = 0$, $A_{p}(0) = 0$ and 
where $A_d$, $A_c$, $A_p$ are the amount of drug in the depot, central
and peripheral compartments, respectively; $V_c$, $V_p$ are the volume
of the central compartment and the peripheral compartment,
respectively; $Q$ and $Cl$ are the inter compartment and global
elimination clearances, respectively. To assure positiveness of the
parameters, the hidden vector is
$$z=(\log(V_c), \log(V_p), \log(Q), \log(Cl), \log(k_a)).$$
 
It is easy to show that the model described by
 Eqs.~\eqref{eq:mixedmodel0}-\eqref{eq:mixedmodel} belongs to the curved
exponential family (see Eq.~\eqref{eq:intro:latent}) with minimal sufficient
statistics:
\begin{eqnarray}
  && S_{1k}(z) = z^{(k)},  \quad S_{2}(z) = \sum_{k=1}^N z^{(k)} \,
    z^{(k)'},  \nonumber \\ 
  &&S_{3}(z) = \sum_{k=1}^N\sum_{j=1}^{J}  (\obs_{kj}-f(t_{kj}, z^{(k)}))^2;  \nonumber \\
  && \psi_{1k}(\theta) = ( X_{k}  \mu)'\Omega^{-1}, \quad \psi_2(\theta) = -\frac{1}{2}\Omega^{-1},  \nonumber \\
  && \psi_3(\theta) = - \frac{1}{2\sigma^2}, \nonumber
\end{eqnarray}
and $S(z) \eqdef \mathrm{Vect}\left(S_{11}(z), \cdots, S_{1N}(z), S_2(z),
  S_3(z) \right)$, $\psi \eqdef \mathrm{Vect}\left(\psi_{11}, \cdots,
  \psi_{1N},\psi_2, \psi_3\right)$.  The function $\phi$ is given by
$\phi(\theta) = -J N \log(\sigma ) - \frac{N}{2}\log(\vert \Omega \vert) -
\frac{1}{2}\sum_{k} ( X_{k} \mu)'\Omega^{-1}( X_{k} \mu)$.

The selection of genomic variables that influence all coordinates of
$\hid^{(k)}$ could be obtained by optimizing the log-likelihood
penalized by the function $g(\theta)= \lambda \Vert \mu \Vert_{1}$,
the $L_1$ norm of $\mu$ with $\lambda$ a regularization parameter.

However, this estimator is not invariant under a scaling transformation (ie $\tilde \hid^{(k)} = b\hid^{(k)} \mbox{, } \tilde \mu = b\mu \mbox{ and } \tilde \Omega_{rr}^{1/2} = b\Omega_{rr}^{1/2}$) (see \textit{e.g.} \citep{lehmann2006theory}).
In our high dimension experiments, the scale of the hidden variables has a non negligible influence on the selection of the support. To be more precise, let us
denote, for $r \in \{1, \ldots, R\}$,
$$\mu_{(r)} \eqdef (\mu_{(r-1)(D+1)+1}, \ldots, \mu_{r(D+1)})$$
 the
coordinates corresponding to the $r$-th pharmacokinetic parameter of
function $f$. When the variance $\Omega_{rr}$ of the random parameters
$Z_r^{(k)}$ is low, the algorithms tend to select too many covariates.
This phenomenon is strengthened with a small number of subjects as
random effect variances are more difficult to estimate.  A solution is
to consider the
following penalty
 $$\lambda \sum_{r=1}^{R} \Omega_{rr}^{-\frac{1}{2} }  \Vert \mu_{(r)} \Vert_{1} ,$$
 that makes the estimator invariant under scaling transformation. It
 was initially proposed by \cite{stadler2010} to estimate the
 regression coefficients and the residual error's variance in a
 mixture of penalized regression models. However, the resulting
 optimization problem is difficult to solve directly because the
 variance of the random effect $\Omega_{rr}$ appears in the penalty
 term. Therefore, we propose a new parameterization
 \[
 \tilde{\mu}_{(r)} \eqdef \mu_{(r)}\Omega_{rr} ^{-\frac{1}{2} }, \qquad
 \qquad \Sigma_{rr} \eqdef \Omega_{rr} ^{-\frac{1}{2} }
 \]
 and $\tilde{\theta}
 \eqdef \{ \tilde{\mu},  \Sigma_{11}, \cdots, \Sigma_{RR}, \sigma^2
 \} \in \rset^{R(D+1)} \times \ooint{0,+\infty}^{R+1}$.  Then, the optimization problem is the following:
\begin{gather}\label{eq:ScaledPenLL2}
  \underset{\tilde{\theta}} {\operatorname{Argmax}} \left(
    \ell(\tilde\theta) - g(\tilde \theta) \right), \quad \mbox{ with }
  g(\tilde \theta)= \lambda \Vert \tilde{\mu} \Vert_{1}.\end{gather}
This problem can be solved using MCPG, SAPG or SAEM-pen
algorithms. Indeed, the complete log-likelihood is now - up to an
additive constant -
 \begin{equation*}
 \begin{split}
   \log p(\obs,Z ; \tilde\theta)= &- J N \log(\sigma
   ) \\ &-\frac{1}{2}\sum_{k=1}^N \sum_{j=1}^J \frac{ \left( Y_{kj} -
       f(t_{kj}, \hid^{(k)} )\right)^{2} }{\sigma^2} \\ &+ N\log(\vert
   \Sigma \vert) - \frac{1}{2}\sum_{k=1}^N  \| \Sigma \hid^{(k)} - X_{k}
   \tilde{\mu}\|^2
\end{split}
\end{equation*}
 It is again a complete likelihood from the exponential
  family, with the   statistic $S$ unchanged and the functions $\phi$
and $\psi$ given by  - up to an additive constant - 
\begin{eqnarray*}
  &&\phi(\tilde\theta) =  -J N \log(\sigma ) + N\log(\vert \Sigma \vert)  - \frac{1}{2}\sum_{k=1}^N\| X_{k}  \tilde\mu \|^2,  \\
  &&\psi_{1k}(\tilde\theta) = \Sigma ( X_{k}  \tilde\mu)^{t}, \quad  \psi_2(\tilde\theta) =  -\frac{1}{2}\Sigma^{2} ,  \quad   \psi_3(\tilde\theta) =  - \frac{1}{2\sigma^2}. 
\end{eqnarray*}
With these definitions of $\phi, \psi$ and $g$, the M-step of SAEM-pen
amounts to compute the optimum of a convex function, which is solved
numerically by a call to a cyclical coordinate descent implemented in
the R package {\sf glmnet} \citep{Friedman2010}.

\paragraph{MCMC sampler.}
\label{sec:MCMC:PK}
In the context of non-linear mixed models, simulation from
$\pi_{\theta_n}\rmd \nu$ can not be performed directly like in the toy
example. We then use a MCMC sampler based on a Metropolis Hastings
algorithm to perform the simulation step. Two proposal kernels are
successively used during the iterations of the Metropolis Hastings
algorithm. The first kernel corresponds to the prior distribution of
$\Sigma\hid^{(k)}$ that is the Gaussian distribution
$\mathcal{N}(X_{k} \tilde \mu_n , I)$.  The second kernel corresponds
to a succession of $R$ uni-dimensional random walk in order to update
successively each component of $\hid^{(k)}$. The variance of each
random walk is automatically tuned to reach a target acceptance ratio
following the principle of an adaptive MCMC algorithm
\citep{andrieu2008tutorial}.

 \paragraph{Adaptive random stepsize sequences.}
 In the context of NLMEM, numerical experiments reveal that choosing a
 deterministic sequence $\{\gamma_n, n\geq 0\}$ that achieve a fast convergence
 of SAPG algorithm could be difficult.  Indeed, parameters to estimate are of
 different scales. For example, random effect and residual variances are
 constrained to be positive. Some of them are close to zero, some are not.
As explained in Section \ref{sec:guidelines}, an alternative is to 
  implement a matrix-valued random sequence $\{\Gamma^n, n\geq 0\}$. 
%
The gradient
and the hessian of the likelihood $\ell(\theta)$ can be approximated
by stochastic approximation using the Louis principle
\citep[see][Chapter 4]{McLachlan:Krishnan:2008}. Let us denote $H_n$
the stochastic approximation of the hessian obtained at iteration $n$
as explained by  \cite{samson2007saem}. Note that no
supplementary random samples are required to obtain this
approximation.  
Along the iterations, each diagonal entry of the matrix
$ {H}^{n} $ converges: this limiting value can be seen as a
simple way to automatically tune a good $\gamma_{\star}$, that is
parameter specific.  The entries
$\Gamma^{n+1}_{ii}$  are then defined by Eq.~(\ref{eq:gammaRandom}). 


\subsection{Simulated data set.}
The convergence of the corresponding algorithms is illustrated on
simulated data.  Data are generated with the model defined by Eq.~\eqref{eq:A1C2} and
$N = 40$, $J = 12$, $D = 300$. The design matrix $X_k$ is defined by
Eq.~\eqref{eq:designX}, with components $(X_{k1}, \ldots, X_{kD})$
drawn from $\mathcal{N}(0,\Gamma)$ with
$\Gamma_{ii'} = 0.5^{\vert i - i' \vert}$ ($i,i'=1,...,300$).
Parameter values are
\begin{equation}
\begin{split}
& [\mu_1, \mu_{1+(D+1)} , \mu_{1+2(D+1)} , \mu_{1+3(D+1)} ,
\mu_{1+4(D+1)} ] \\ &=  [ 6.61 , 6.96, 5.77, 5.42 ,-0.51]; \nonumber
\end{split}
\end{equation}
the other components are set to zero, except $\mu_{4}$ and $\mu_{912}$
that are set to $1$. The matrix $\Omega$ is diagonal with diagonal
elements equal to $(0.16, 0.16, 0.16, 0.04, 0.04)$.

The penalty function is set to
\begin{equation}\label{eq:penaltyNLMEM}
g(\tilde \theta) \eqdef \lambda \sum_{\ell \neq \{1+r(D+1), r=0, \cdots, 4  \}} |\tilde\mu_\ell |,
\end{equation}
only the parameters corresponding to a covariate effect being
penalized.  The optimization problem Eq.~\eqref{eq:problem} with
regularization parameter $\lambda = 190$ is solved on this dataset
with SAEM-pen and SAPG; we run SAPG with the random sequence
$\{\Gamma_n, n \geq 0\}$ as described above (see
\eqref{eq:gammaRandom}) with $n_0 = 9500$. For both algorithms, the
stochastic approximation step size was set to:
  \begin{equation}
    \delta_{n+1}  = \left\{ \begin{array}{ll} 0.5  & \text{ if } n\leq n_0 \\  \frac{0.5}{(n - n_0)^{\beta}} & \text{ if } n>n_0 \end{array} \right.
\end{equation}
We set $\alpha = 0.75$ and $\beta = 0.499$. Figure~\ref{fig:HD_SIM_PK}
  shows the convergence of SAEM-pen and
three parameterizations of SAPG: i) a version with $\gamma^\star=0.005$ for all the components of $\theta$, ii) a version with $\gamma^\star=0.005$ for $\tilde{\mu}$, $\gamma^\star=0.0005$ for $\Sigma$ and $\gamma^\star=0.03$ for $\sigma$, and iii) a version with adaptive random step sizes. For the four algorithms,  all the parameters corresponding to a
covariate effect are estimated to zero except the two components  $\mu_{4}$ and $\mu_{912}$. The version of SAPG with a same $\gamma^\star$ for all the component is the one that converge the most slowly. When the $\gamma^\star$ is tuned differently according the type of parameters, the convergence of SAPG is accelerated. Algorithms SAEM-pen and SAPG with adaptive random step sizes  have a similar fast convergence profile.

Figure~\ref{fig:HD_SIM_PK_StepSize} presents the evolution of four
entries of the matrix $\Gamma^n$ along the iterations of SAPG,
corresponding to the components $\tilde{\mu}_{904}$,
$\tilde{\mu}_{912}$, $\Sigma_{44}$ and $\sigma$. We can notice that
they are not on the same scale. They vary during the first iterations
and converge to limiting values before iteration $n_0=9500$. Then the
step sizes decrease to 0, following the definition given in
Eq.~\eqref{eq:gammaRandom}.

 \begin{figure*}
\begin{center}
\includegraphics[scale = 0.55]{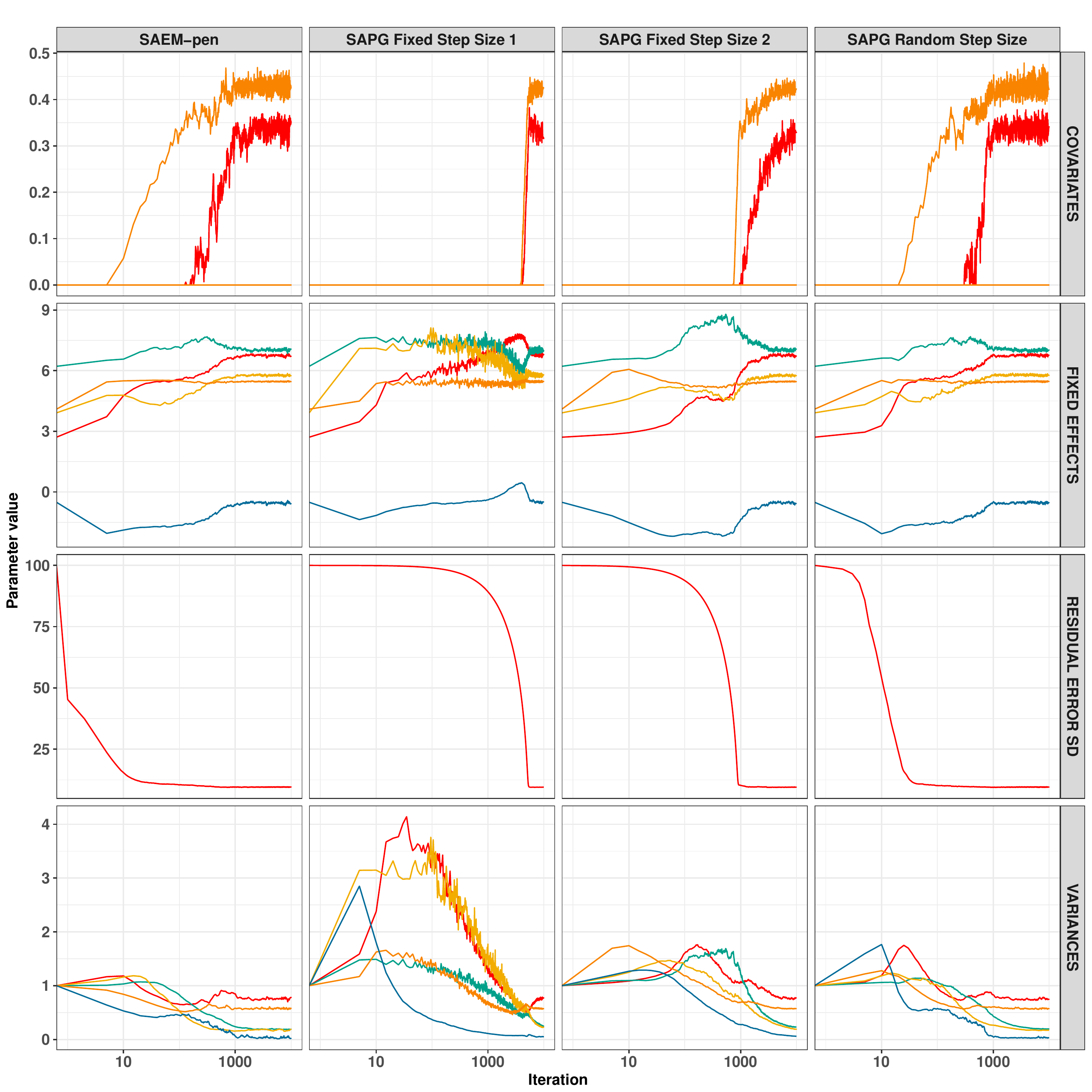}
\caption{Path of a run of SAEM-pen [left column] and three different parameterizations of SAPG : i) with $\gamma^\star=0.005$ for all the components of $\theta$ [middle left column], ii) with $\gamma^\star=0.005$ for $\tilde{\mu}$, $\gamma^\star=0.0005$ for $\Sigma$ and $\gamma^\star=0.03$ for $\sigma$  [middle right column] and iii) with
  a random sequence $\{\Gamma^n,n \geq 0 \}$ [right column].
  For each algorithm, estimation of the standard
    deviation of the residual error $\sigma$ [third row];
  the variances of the $Z^{(k)}$'s,
    $\Omega_{11}, \cdots, \Omega_{RR}$ [fourth row];
  the path of the covariate parameters $\mu_i$ for
    $i \notin \{1, 1+(D+1), \cdots, 1+4(D+1) \}$ [first row];
  the path of the intercept parameters
    $\mu_{i}, i \in \{ 1, 1+(D+1), \cdots, 1+4(D+1)\}$ [second row]. Each color corresponds to a specific parameter: orange line for
  $Cl$, red for $V_c$, blue line for $k_a$, yellow line for $Q$ and green line for
  $V_p$. Note that the path of all the covariate parameters is zero except for two components. $x$-axis is in $\log_{10}$ scale.}
\label{fig:HD_SIM_PK}
\end{center}
\end{figure*}

 \begin{figure*}
\begin{center}
\includegraphics[scale = 0.275]{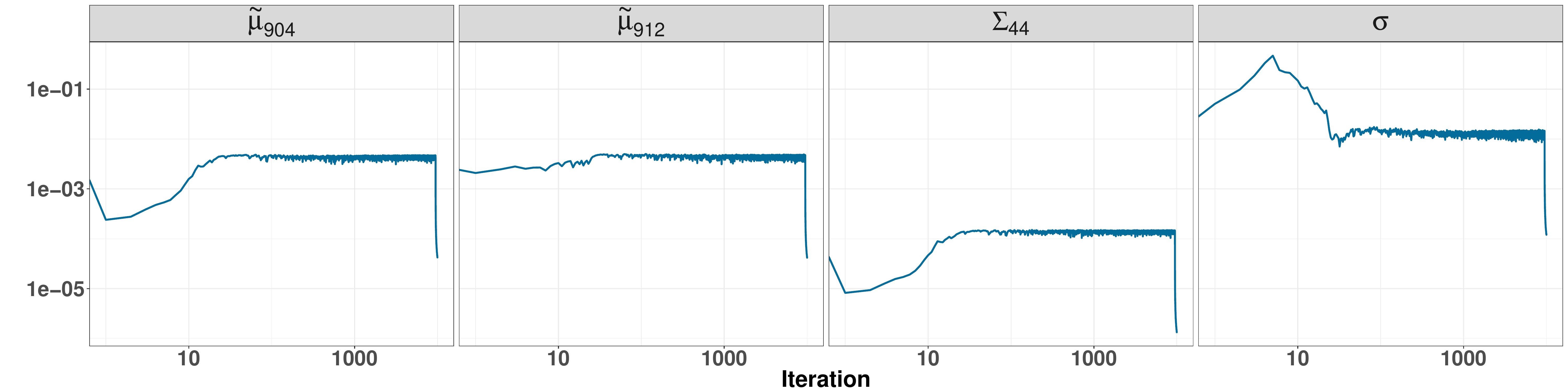}
\caption{Evolution of $\Gamma^n_{ii}$ with iterations $n$ of SAPG, for four
  different values of $i$, corresponding to the components
  $\tilde{\mu}_{904}$ [left]; $\tilde{\mu}_{912}$ [middle left];  
  $\Sigma_{44}$ [middle right]; $\sigma$ [right]. Both $x$-axis and
  $y$-axis are in $\log_{10}$ scale.}
\label{fig:HD_SIM_PK_StepSize}
\end{center}
\end{figure*}

\subsection{Application to real data}

Algorithms SAEM-pen and SAPG with matrix-valued random sequence
$\{\Gamma^n, n\geq 0\}$ are applied to real data of the
pharmacokinetic of dabigatran ($DE$) from two cross over clinical
trials \citep{delavenne2013semi,ollier2015vitro}. These $2$ trials
studied the drug-drug interaction between $DE$ and different
Pgp-inhibitors. From these $2$ trials, the pharmacokinetics of $DE$
are extracted from $15$ subjects with no concomitant treatment with
Pgp-inhibitors. The concentration of dabigatran is measured at $9$
sampling times for each patient. Each subject is genotyped using the
DMET$\textregistered$ microarray from Affymetrix. Single nucleotide
polymorphisms (SNP) showing no variability between subjects are
removed and $264$ SNP are included in the analysis.

Function $f$ of the non-linear mixed model is defined as the two
compartment pharmacokinetic model with first order absorption
previously described (see Eq.~\eqref{eq:A1C2})
\citep{delavenne2013semi}. The penalty function $g$ is defined by
Eq.~\eqref{eq:penaltyNLMEM}.

Because of the limited number of subjects, the influence of genetic
covariates is only studied on $V_c$ and $Cl$ parameters, that
characterize the elimination process and are the most likely to be
influenced by the genetic.  Finally, random effect variances of $Q$
and $V_p$ are set to $0.01$ in accordance with previously published
population pharmacokinetic of dabigatran
\citep{delavenne2013semi}. The other variance parameters are
estimated.  The penalized likelihood problem (Eq.~\ref{eq:ScaledPenLL2})
is solved on the data with the SAEM-pen and SAPG algorithms, for $40$
different values of parameter $\lambda$. SAPG algorithm is run using
the random sequence $\{\Gamma^n, n\geq 0\}$ given in
Eq.~\eqref{eq:gammaRandom}. The best regularization parameter
$\lambda$ is chosen with a data-driven approach based on the EBIC
criteria \citep{chen2008extended}.

\begin{figure}
\begin{center}
\includegraphics[scale = 0.325]{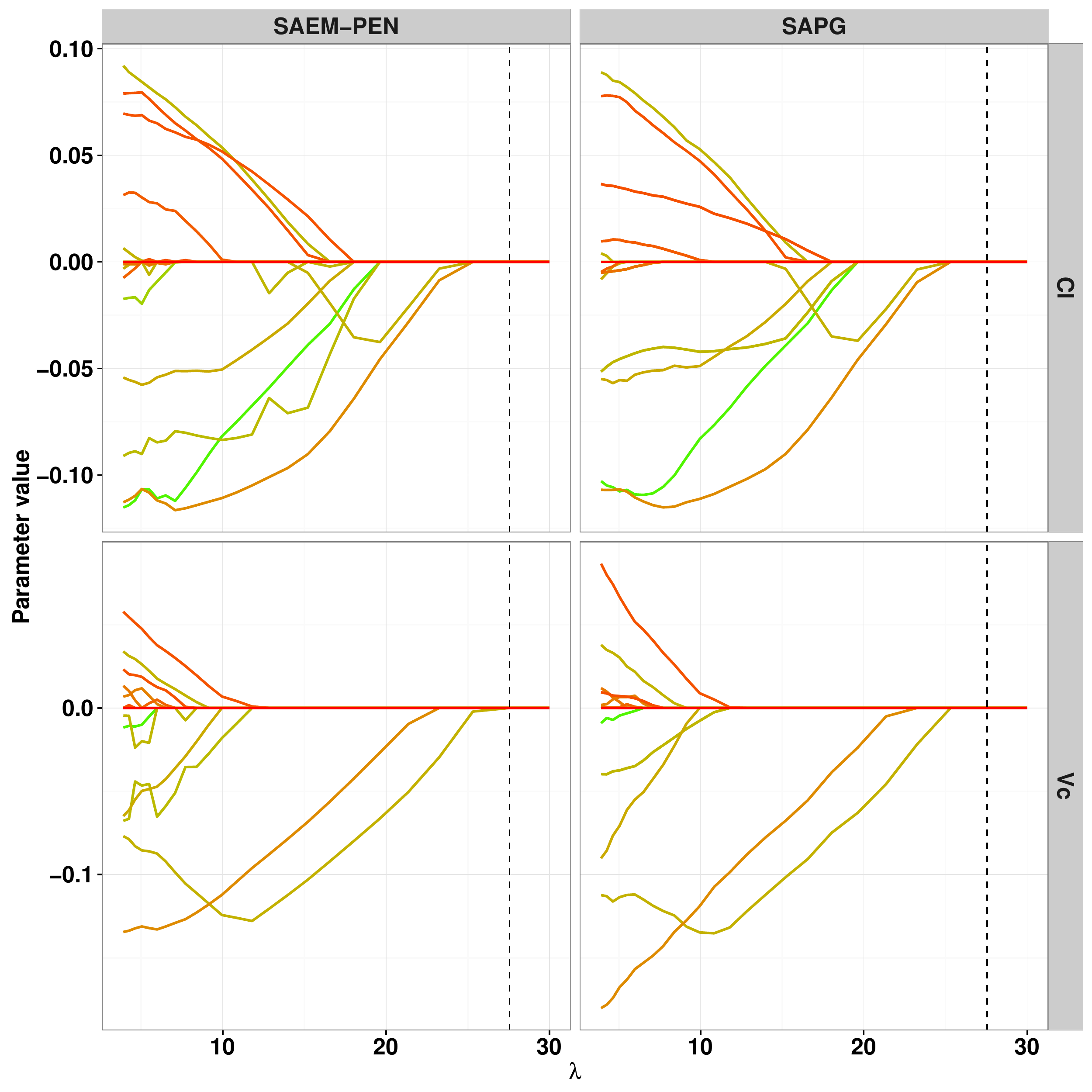}
\caption{ Regularization path of covariate parameters ($Cl$
  parameter on top, $V_c$ parameter on bottom) obtained on dabigatran
  pharmacokinetic data for both SAEM-pen and SAPG algorithms. Black
  vertical dashed line corresponds to the $\lambda$ value selected by
  $EBIC$.}
\label{fig:ScaledRegPathDABI}
\end{center}
\end{figure}

Figure~\ref{fig:ScaledRegPathDABI} shows the results. The
regularization paths of $Cl$ and $V_c$ parameters using both algorithms
correspond to the evolution of covariate coefficient estimates as a
function of the value of $\lambda$. They are reconstructed with low
noise for both algorithms, are very similar for high values of
$\lambda$ but less for lower values of $\lambda$. 

Finally, the selected model has all covariates parameters set to
zero. This means that none of the genetic covariates influence the
distribution of the individual parameters. This result is not
surprising given the low number of subjects and the fact that a large
part of the inter individual variability is due to the dissolution
process of the drug \citep{ollier2015vitro} and is therefore not influenced by genetic
covariates. This lack of relationship between dabigtran's pharmacokinetic parameters and genetic covariates has already been highlighted in an other study \citep{gouin2016inter}.

\section{Conclusion}
In this work, we propose a new Stochastic Proximal-Gradient algorithm
to solve penalized maximum likelihood problems when the likelihood is
intractable: the the gradient is approximated through a stochastic
approximation scheme. We provide a theoretical convergence analysis of
this new algorithm and illustrate these results numerically on a
simulated toy example in the case of a concave likelihood
function. The robustness to the non concave case is explored through a
more challenging application to population pharmacokinetic analysis
relying on penalized inference in non-linear mixed effects models.

\bigskip

\appendix
\textbf{\large{Appendix}}
\section{Proof of Proposition~\ref{prop:monotonicity:EMprox}}
\label{sec:proof:EMprox}
\begin{lemma}\label{lem:Taylor1:Qem}
  Under the assumptions of Proposition~\ref{prop:monotonicity:EMprox}, for any $\gamma \in \ocint{0,1/L}$,
  $s \in \barS(\Theta)$ and any $\theta, \theta' \in \Theta$,
\begin{align}\label{eq:QEM:majoration}
  \Q(\theta \vert \theta') & \geq \Q(\theta' \vert \theta')  \nonumber \\
& - \frac{1}{2\gamma}\|\theta -
  \theta' - \gamma \{ \nabla \phi(\theta') + \jacob{\psi}{\theta'}
  \barS(\theta')\}\|^2  \nonumber \\
& + \frac{\gamma}{2} \|\nabla \phi(\theta') + \jacob{\psi}{\theta'}
  \barS(\theta') \|^2.
\end{align}
\end{lemma}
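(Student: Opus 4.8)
The right-hand side of Eq.~\eqref{eq:QEM:majoration} is nothing but the familiar quadratic minorant of a smooth function with Lipschitz gradient, rewritten by completing the square; so the plan is to (i) identify the gradient of $\theta \mapsto \Q(\theta\vert\theta')$, (ii) derive its Lipschitz property from Eq.~\eqref{eq:HYP}, (iii) apply the quadratic‑minorant form of the descent lemma, and (iv) complete the square. For step (i): since $\Q(\theta\vert\theta') = \phi(\theta) + \pscal{\barS(\theta')}{\psi(\theta)}$ and $\barS(\theta')$ does not depend on $\theta$, its gradient with respect to $\theta$ is $\nabla\phi(\theta) + \jacob{\psi}{\theta}\,\barS(\theta')$. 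For step (ii): applying Eq.~\eqref{eq:HYP} with the particular choice $s = \barS(\theta') \in \barS(\Theta)$ shows precisely that $\theta \mapsto \nabla\phi(\theta) + \jacob{\psi}{\theta}\,\barS(\theta')$ is $L$-Lipschitz on $\Theta$ (note that this is different from the Lipschitz property Eq.~\eqref{eq:H1bis:gdtLip} of $\nabla\ell$, in which $\barS$ also varies with $\theta$).

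For step (iii), I would set $\theta_t \eqdef \theta' + t(\theta-\theta')$ for $t\in\ccint{0,1}$ and combine the integral form of the fundamental theorem of calculus with the bound from step (ii):
\[
\Q(\theta\vert\theta') - \Q(\theta'\vert\theta') - \pscal{\nabla\phi(\theta') + \jacob{\psi}{\theta'}\barS(\theta')}{\theta-\theta'} = \int_0^1 \pscal{\nabla\Q(\theta_t\vert\theta') - \nabla\Q(\theta'\vert\theta')}{\theta-\theta'}\,\rmd t \geq -\frac{L}{2}\|\theta-\theta'\|^2 .
\]
Since $\gamma \in \ocint{0,1/L}$ gives $L \leq \gamma^{-1}$, the inequality still holds after replacing $L$ by $\gamma^{-1}$. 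For step (iv), the algebraic identity
\[
\pscal{v}{u} - \frac{1}{2\gamma}\|u\|^2 = -\frac{1}{2\gamma}\left\| u - \gamma v\right\|^2 + \frac{\gamma}{2}\|v\|^2 ,
\]
applied with $u = \theta - \theta'$ and $v = \nabla\phi(\theta') + \jacob{\psi}{\theta'}\barS(\theta')$, turns the previous display into exactly Eq.~\eqref{eq:QEM:majoration}.

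The only delicate point is the use of the fundamental theorem of calculus along the segment $\ccint{\theta',\theta}$ in step (iii): this needs the segment to lie in $\Theta$, so that $\Q(\cdot\vert\theta')$ is continuously differentiable there and the Lipschitz estimate of step (ii) applies pointwise along it; this is guaranteed in the standing convexity/compactness setting for $\Theta$. Everything else is routine. (The auxiliary variable $s$ quantified in the statement does not appear in Eq.~\eqref{eq:QEM:majoration}; it merely records that the instance of Eq.~\eqref{eq:HYP} used in the proof is the one with $s = \barS(\theta')$.)
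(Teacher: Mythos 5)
Your proof is correct and follows essentially the same route as the paper: identify the gradient of $\theta \mapsto \Q(\theta\vert\theta')$, use Eq.~\eqref{eq:HYP} with $s=\barS(\theta')$ to get its $L$-Lipschitz property, apply the quadratic-minorant (descent) lemma, replace $L$ by $1/\gamma$, and complete the square via $2\pscal{a}{b}-\|a\|^2=\|b\|^2-\|a-b\|^2$. Your remark about the segment $\ccint{\theta',\theta}$ needing to lie in $\Theta$ is a fair (and slightly more careful) observation that the paper leaves implicit.
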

\begin{proof}
  Fix $\theta' \in \Theta$ and $s \in \barS(\Theta)$. The derivative of the
  function $\theta \mapsto \mathcal{L}(\theta) \eqdef \phi(\theta) +
  \pscal{s}{\psi(\theta)}$ is $\nabla \phi(\theta) + \jacob{\psi}{\theta} s$
  and this gradient is $L$-lipschitz.  From a Taylor expansion to order $1$ at
  $\theta'$ and since the gradient is Lipschitz, we have
\[
\mathcal{L}(\theta) \geq \mathcal{L}(\theta') + \pscal{\nabla
  \phi(\theta') + \jacob{\psi}{\theta'} s }{\theta-\theta'} -
\frac{L}{2} \|\theta- \theta'\|^2.
\]
We then choose $s = \barS(\theta')$, use $L \leq 1/\gamma$ and
conclude by the equality
$2 \pscal{a}{b} - \|a\|^2= \|b\|^2-\|a-b\|^2$.
\end{proof}

\paragraph{Proof of Proposition~\ref{prop:monotonicity:EMprox}}
We prove that
$\Q(\theta_{n+1} \vert \theta_n) -g(\theta_{n+1}) \geq \Q(\theta_n
\vert \theta_n) - g(\theta_n)$
so that the sequence $\{\theta_n, n \geq 0 \}$ defined by
Eq.~\eqref{eq:PG:iteration} is a sequence
satisfying Eq.~\eqref{eq:GEM:iteration}.

By Lemma~\ref{lem:Taylor1:Qem}, it holds for any $\theta \in \Theta$ and any
  $\gamma \in \ocint{0, 1/L}$
  \begin{multline*}
    \Q(\theta \vert \theta_n) - g(\theta) \\
    \geq \Q(\theta_n \vert \theta_n) + \frac{\gamma}{2} \|\nabla
    \phi(\theta_n) + \jacob{\psi}{\theta_n} \barS(\theta_n) \|^2  \\
    - \frac{1}{2\gamma} \left\|\theta - \theta_n - \gamma \{ \nabla
      \phi(\theta_n) + \jacob{\psi}{\theta_n} \barS(\theta_n)
      \}\right\|^2 -g(\theta).
  \end{multline*}
  Note that the RHS and the LHS are equal when $\theta = \theta_n$ so
  that for any point $\tau$ which maximizes the RHS, it holds
  $\Q(\tau \vert \theta_n) - g(\tau) \geq \Q(\theta_n \vert \theta_n)
  - g(\theta_n) $.
  This concludes the proof upon noting that such a point $\tau$ is
  unique and equal to $\theta_{n+1}$ given by Eq.~\eqref{eq:PG:iteration}.

\section{Technical lemmas} 
Define
\begin{eqnarray*} \label{eq:Suite:Delta} && \Delta_{k:n} \eqdef
  \prod_{j=k}^n (1-\delta_j), \ \ 0 \leq k \leq n, \quad
                                          \Delta_{n+1:n} =1,  \\
                                        && \D_k \eqdef \sum_{n \geq k}
                                          \Delta_{k:n}.  \label{eq:Suite:Ddroit}
\end{eqnarray*}
\begin{lemma} \label{lem:ResultOnDelta} For any $n \geq 2$,
  $\sum_{j=2}^{n}\Delta_{j+1:n} \, \delta_j = 1 - \Delta_{2:n}$.
\end{lemma}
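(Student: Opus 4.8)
The plan is to recognize the sum as a telescoping sum. The key algebraic observation is that $\delta_j = 1 - (1-\delta_j)$, and that multiplying the product $\Delta_{j+1:n} = \prod_{i=j+1}^n (1-\delta_i)$ by the extra factor $(1-\delta_j)$ simply extends the product down to index $j$, i.e. $\Delta_{j+1:n}(1-\delta_j) = \Delta_{j:n}$. Consequently each summand rewrites as
\[
\Delta_{j+1:n}\,\delta_j = \Delta_{j+1:n} - \Delta_{j+1:n}(1-\delta_j) = \Delta_{j+1:n} - \Delta_{j:n}.
\]

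First I would substitute this identity into the sum and collapse the telescoping series:
\[
\sum_{j=2}^{n}\Delta_{j+1:n}\,\delta_j = \sum_{j=2}^{n}\left(\Delta_{j+1:n} - \Delta_{j:n}\right) = \Delta_{n+1:n} - \Delta_{2:n}.
\]
Then I would invoke the convention $\Delta_{n+1:n} = 1$ stated in the definition to conclude that the right-hand side equals $1 - \Delta_{2:n}$, which is the claim. A proof by induction on $n$ is also available as an alternative: the base case $n=2$ reads $\Delta_{3:2}\,\delta_2 = \delta_2 = 1 - (1-\delta_2) = 1 - \Delta_{2:2}$, and the induction step factors out $(1-\delta_{n+1})$ from the terms $j \le n$ in the sum $\sum_{j=2}^{n+1}\Delta_{j+1:n+1}\delta_j$, but the telescoping argument is shorter and cleaner.

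There is essentially no obstacle here: the only point requiring a moment's care is bookkeeping the index convention $\Delta_{n+1:n}=1$ (the empty product) so that the top endpoint of the telescoping sum is handled correctly, and noting that the sum is well defined since it is finite.
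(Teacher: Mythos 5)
Your proposal is correct and uses exactly the same argument as the paper: the identity $\Delta_{j+1:n}-\Delta_{j:n}=\delta_j\,\Delta_{j+1:n}$ followed by telescoping with the convention $\Delta_{n+1:n}=1$. Nothing to add.
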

\begin{proof}
  For any $j \leq n$, we have
  $\Delta_{j+1:n} - \Delta_{j:n} = \delta_j \Delta_{j+1:n}$ from which
  the result follows.
\end{proof}

\begin{lemma} \label{lem:controle:Dn}
  Let $\beta \in \ooint{0,1}$ and $\delta_\star>0$. Set
  $\delta_n = \delta_\star n^{-\beta}$ for any $n \geq 1$. Then for any $k$ large enough,
\[
\delta_k \D_k \leq 1+ O\left( k^{\beta-1}\right).
\]
Furthermore,
$\left| \delta_{n+1} \D_{n+2} - \delta_n \D_{n+1} \right| =
O(1/ n^{1+(1-\beta) \wedge \beta})$.
\end{lemma}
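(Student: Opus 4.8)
My plan is to reduce both statements to a scalar recursion for $u_k\eqdef\delta_k\D_k$ and to control it, together with its first difference, by an elementary ``quasi-stationary'' estimate. Splitting off the term $n=k$ in $\D_k=\sum_{n\ge k}\Delta_{k:n}$ and using $\Delta_{k:n}=(1-\delta_k)\Delta_{k+1:n}$ yields the exact identity $\D_k=(1-\delta_k)(1+\D_{k+1})$; multiplying by $\delta_k$ and using $\delta_k/\delta_{k+1}=(1+1/k)^\beta$ gives $u_k=\delta_k(1-\delta_k)+(1+1/k)^\beta(1-\delta_k)u_{k+1}$. Setting $u_k=1+v_k$ this becomes the exact recursion
\[
v_k=w_k+\sigma_k v_{k+1},\qquad \sigma_k\eqdef(1+1/k)^\beta(1-\delta_k),\qquad w_k\eqdef\bigl[(1+1/k)^\beta-1\bigr](1-\delta_k)-\delta_k^2 .
\]
From $(1+1/k)^\beta-1\le\beta/k$ (concavity) one gets $w_k\le\beta/k$ for every $k$ with $\delta_k<1$, while the Taylor expansions give $w_k=\beta/k-\delta_\star^2k^{-2\beta}+O(k^{-1-\beta})$ and $\sigma_k=1-\delta_\star k^{-\beta}+\beta/k+O(k^{-1-\beta})$; in particular $0<\sigma_k\le1-\tfrac12\delta_\star k^{-\beta}$ for $k$ large, so $v_k=\sum_{m\ge k}\bigl(\prod_{j=k}^{m-1}\sigma_j\bigr)w_m$ is an absolutely convergent series.

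\textbf{Quasi-stationary estimate and the first bound.} I will use the elementary fact: if $x_k=f_k+r_kx_{k+1}$ with $|f_k|\le Ck^{-p}$ and $0\le r_k\le1-ck^{-\beta}$ for all large $k$ (with $c,C,p>0$, $\beta\in(0,1)$), then $x_k=O(k^{\beta-p})$; indeed $x_k=\sum_{m\ge k}\bigl(\prod_{j=k}^{m-1}r_j\bigr)f_m$ with $|f_m|\le Ck^{-p}$ for $m\ge k$, and $\sum_{m\ge k}\prod_{j=k}^{m-1}r_j\le\sum_{m\ge k}\exp\bigl(-c\sum_{j=k}^{m-1}j^{-\beta}\bigr)=O(k^{\beta})$, the last bound obtained by comparing $\sum_{j=k}^{m-1}j^{-\beta}$ with $\int_k^m x^{-\beta}\,dx$. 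Applied to the $v$-recursion with the bound $w_m\le\beta/m\le\beta/k$ and $\sigma_j\in(0,1)$, this gives $v_k\le\tfrac{\beta}{k}\sum_{m\ge k}\prod_{j=k}^{m-1}\sigma_j=O(k^{\beta-1})$, i.e. $\delta_k\D_k=1+v_k\le1+O(k^{\beta-1})$, the first assertion. The same estimate (now with $|f_m|=|w_m|=O(m^{-(1\wedge2\beta)})$) also gives $|v_k|=O(k^{\beta-1}\vee k^{-\beta})$, which is used below.

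\textbf{The first difference and the second bound.} Since $\delta_n\D_{n+1}=(1+1/n)^\beta u_{n+1}$ and $\delta_{n+1}\D_{n+2}=(1+1/(n+1))^\beta u_{n+2}$, one has $\delta_{n+1}\D_{n+2}-\delta_n\D_{n+1}=(1+O(n^{-1}))\,\Delta v_{n+1}+O(n^{-2})$ with $\Delta v_k\eqdef v_{k+1}-v_k$. Subtracting the $v$-recursion at indices $k$ and $k+1$ gives
\[
\Delta v_k=g_k+\sigma_{k+1}\,\Delta v_{k+1},\qquad g_k\eqdef(w_{k+1}-w_k)+(\sigma_{k+1}-\sigma_k)\,v_{k+1}.
\]
Because $w$ and $\sigma$ are explicit smooth functions of $k$, one has $w_{k+1}-w_k=O(k^{-(2\wedge(1+2\beta))})$ and $\sigma_{k+1}-\sigma_k=O(k^{-1-\beta})$; combined with $|v_{k+1}|=O(k^{\beta-1}\vee k^{-\beta})$ this yields $g_k=O(k^{-(2\wedge(1+2\beta))})$. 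Applying the quasi-stationary estimate once more (with $r_k=\sigma_{k+1}$, $p=2\wedge(1+2\beta)$) gives $\Delta v_k=O(k^{\beta-(2\wedge(1+2\beta))})=O(k^{-1-((1-\beta)\wedge\beta)})$, hence $\delta_{n+1}\D_{n+2}-\delta_n\D_{n+1}=O(n^{-1-((1-\beta)\wedge\beta)})$.

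The delicate point is this last step: the crude bound $|\Delta v_k|\le|v_{k+1}|+|v_k|=O(k^{\beta-1}\vee k^{-\beta})$ is far too large. The gain comes from the leading part of $v_k$ being a smooth, slowly varying function of $k$, so that the forcing term $g_k$ of the increment recursion decays one power of $k$ faster than the forcing term $w_k$ of the $v$-recursion; this extra $k^{-1}$, amplified by the factor $k^{\beta}$ of the quasi-stationary estimate, is exactly what produces the rate $k^{-1-((1-\beta)\wedge\beta)}$. Carrying the Taylor expansions to the right order — so that, for instance, the two competing $\beta/n$ contributions to $\delta_{n+1}\D_{n+2}-\delta_n\D_{n+1}$ cancel — is routine but must be done with care.
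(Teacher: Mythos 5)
Your proof is correct, and it is worth noting that the paper itself omits the argument, saying only that it ``relies on standard Taylor's expansions with explicit formulas for the remainder'' --- i.e.\ presumably a direct asymptotic expansion of $\D_k$ by comparison with integrals. Your route is genuinely different and more structural: you exploit the exact identity $\D_k=(1-\delta_k)(1+\D_{k+1})$ to turn both assertions into forward affine recursions $x_k=f_k+r_kx_{k+1}$ with $r_k\leq 1-ck^{-\beta}$, and then invoke a single ``quasi-stationary'' estimate $x_k=O(k^{\beta-p})$ when $f_k=O(k^{-p})$. This buys you (i) a clean mechanism for the second bound --- differencing the recursion makes the forcing term decay one extra power of $k$, which after the $k^{\beta}$ amplification yields exactly $k^{-1-((1-\beta)\wedge\beta)}$, and all the exponents check out, since $\beta-(2\wedge(1+2\beta))=-1-((1-\beta)\wedge\beta)$ --- and (ii) reuse of the same lemma for both parts. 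One small point to tighten: when you write $v_k=\sum_{m\geq k}\bigl(\prod_{j=k}^{m-1}\sigma_j\bigr)w_m$ (and likewise for $\Delta v_k$), the forward iteration produces a remainder $\bigl(\prod_{j=k}^{M-1}\sigma_j\bigr)v_M$ that must be shown to vanish; this needs the a priori bound $\sup_m|v_m|<\infty$, i.e.\ $\delta_m\D_m=O(1)$, which follows from the standard estimate $\D_m\leq\sum_{n\geq m}\exp\bigl(-\delta_\star\sum_{j=m}^n j^{-\beta}\bigr)=O(m^{\beta})$. Add that line and the argument is complete.
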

The proof of Lemma~\ref{lem:controle:Dn} relies on standard Taylor's
expansions with explicit formulas for the remainder. The proof is
omitted.

\begin{lemma} \label{lem:Vitesse:Sn} Let $\beta \in \coint{0,1}$ and
  $\delta_\star>0$. For any $r$, when $n \to \infty$,
  \[
  \sum_{j=2}^n j^{-r} \prod_{k=j}^n \left(1 -
    \frac{\delta_\star}{k^\beta} \right) = O\left( n^{\beta -r}
  \right).
  \]
 \end{lemma}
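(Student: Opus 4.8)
The plan is to replace the product by a one-step recursion and then conclude by a routine induction. Write $\delta_j \eqdef \delta_\star j^{-\beta}$, keep the notation $\Delta_{j:n} \eqdef \prod_{k=j}^n(1-\delta_k)$ from the beginning of this appendix, and set $u_n \eqdef \sum_{j=2}^n j^{-r}\Delta_{j:n}$. Isolating the term $j=n$, for which $\Delta_{n:n}=1-\delta_n$, and using $\Delta_{j:n}=(1-\delta_n)\Delta_{j:n-1}$ for $j\le n-1$, one gets, for every $n$ large enough that $\delta_n\in\ooint{0,1}$,
\[
u_n = (1-\delta_n)\bigl(u_{n-1}+n^{-r}\bigr).
\]
(Iterating this informally suggests $u_n\sim\delta_\star^{-1}n^{\beta-r}$, which is the right order; only the upper bound is needed.)

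I would then show that there are a finite $M$ and an integer $n_1$ with $u_n\le M\,n^{\beta-r}$ for all $n\ge n_1$, by induction on $n$. Assuming $u_{n-1}\le M(n-1)^{\beta-r}$, substitute into the recursion and use the elementary expansion $(n-1)^{\beta-r}=n^{\beta-r}+O(n^{\beta-r-1})$, whose implied constant depends only on $|\beta-r|$ and not on its sign; expanding the factor $1-\delta_\star n^{-\beta}$ and discarding the negative term of order $n^{-r-\beta}$ yields, for a suitable constant $C_1=C_1(|\beta-r|)$ and all $n$ large,
\[
u_n \le M\,n^{\beta-r} + M C_1\, n^{\beta-r-1} + (1-\delta_\star M)\,n^{-r} + \delta_\star M C_1\, n^{-r-1}.
\]
Dividing the terms other than $M n^{\beta-r}$ by $n^{-r}$, it remains to check $M C_1\bigl(n^{\beta-1}+\delta_\star n^{-1}\bigr)\le \delta_\star M-1$, i.e. $C_1\bigl(n^{\beta-1}+\delta_\star n^{-1}\bigr)\le\delta_\star-1/M$. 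Since $\beta<1$, the left-hand side tends to $0$, so this holds for all $n\ge n_1$ as soon as $M\ge 2/\delta_\star$ and $n_1$ is chosen — using only $C_1,\delta_\star,\beta$ — so that $C_1\bigl(n_1^{\beta-1}+\delta_\star n_1^{-1}\bigr)\le\delta_\star/2$. Finally I would enlarge $M$ to $\max\{2/\delta_\star,\ u_{n_1}\,n_1^{r-\beta}\}$ so that the base case $u_{n_1}\le M n_1^{\beta-r}$ holds; the induction then runs for all $n\ge n_1$, and the finitely many values $u_n$ with $n<n_1$ do not affect the $O(\cdot)$ statement.

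There is no real analytic difficulty; the only care needed is bookkeeping. First, $M$ and $n_1$ must be fixed in this order: $n_1$ depends only on $C_1,\delta_\star,\beta$, and $M$ is enlarged only afterwards to secure the base case, which is harmless since increasing $M$ only weakens the constraint $\delta_\star M-1\ge M C_1(n^{\beta-1}+\delta_\star n^{-1})$. Second, one should note that for $n\ge n_1$ (taken large enough) $\delta_n=\delta_\star n^{-\beta}\in\ooint{0,1}$, so the recursion has a nonnegative source term and $1-\delta_n$ is a genuine contraction; this is automatic when $\beta\in\ooint{0,1}$, and for $\beta=0$ it uses $\delta_\star<1$, which holds in every application of this lemma in the paper. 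As an alternative route one could instead bound $\Delta_{j:n}\le\exp\bigl(-\delta_\star\sum_{k=j}^n k^{-\beta}\bigr)$, compare $\sum_{k=j}^n k^{-\beta}$ with $\int_j^n t^{-\beta}\,\rmd t$, and evaluate the resulting sum by a single integration by parts; this gives the same conclusion but is heavier than the recursion argument.
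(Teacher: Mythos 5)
Your argument is correct, and it takes a genuinely different route from the paper's. The paper bounds each product by $\Delta_{j:n}\le\exp\bigl(-\delta_\star\sum_{k=j}^n k^{-\beta}\bigr)\le\exp\bigl(-\tfrac{\delta_\star}{1-\beta}\{n^{1-\beta}-j^{1-\beta}\}\bigr)$ and then estimates $\sum_{j=2}^n j^{-r}\exp(Dj^{1-\beta})$ by expanding the exponential as a power series and summing $\sum_j j^{-r+q(1-\beta)}$ term by term — essentially the ``alternative route'' you sketch at the end, though carried out by series expansion rather than integration by parts. Your recursion $u_n=(1-\delta_n)(u_{n-1}+n^{-r})$ followed by induction on the ansatz $u_n\le M n^{\beta-r}$ is more elementary and, in my view, cleaner: the only analytic input is $(1-1/n)^{\beta-r}\le 1+C_1/n$, and your bookkeeping of the order in which $C_1$, $n_1$ and $M$ are fixed is sound (enlarging $M$ for the base case only relaxes the constraint $\delta_\star-1/M\ge C_1(n^{\beta-1}+\delta_\star n^{-1})$, as you say). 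Note that the sign of the $\delta_\star MC_1 n^{-r-1}$ term in your expansion is actually negative, so keeping it with a plus sign merely weakens your upper bound harmlessly. What the recursion buys over the paper's route is a self-contained proof with explicit constants and no manipulation of exponential series; what it costs is that it gives only the upper bound (which is all the lemma asserts), whereas the exponential-bound route more readily yields the matching lower-order asymptotics $u_n\sim\delta_\star^{-1}n^{\beta-r}$ you mention in passing. Your observation that one needs $1-\delta_k\ge 0$ (hence $\delta_\star<1$ when $\beta=0$) is pertinent: the paper's bound $1-x\le e^{-x}$ also only controls the absolute value of the product under this same implicit assumption, which holds in every application since $\{\delta_n\}$ is $\ccint{0,1}$-valued by construction.
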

 \begin{proof}
   We have
\begin{align*}
  \prod_{k=j}^n \left(1 - \frac{\delta_\star}{k^\beta} \right) & \leq
                                                                 \exp\left( -\delta_\star \sum_{k=j}^n k^{-\beta} \right) \\
                                                               & \leq
                                                                 \exp\left( - \frac{\delta_\star}{1-\beta} \left\{n^{1-\beta} -
                                                                 j^{1-\beta} \right\} \right).
\end{align*}
Let $q_\star \geq 0$ such that for any $q \geq q_\star$, $q(1-\beta)+1-r >0$.
For any constant $D>0$, there exist constants $C, C'$ (whose value can
change upon each appearance) such that
\begin{align*}
  & \sum_{j=2}^n j^{-r} \exp(D j^{1-\beta}) \\
&  =  \sum_{q \geq 0}  \frac{D^q}{q!}  \sum_{j=2}^n j^{-r+q(1-\beta)}  \\
&\leq C  n^{1-r} \sum_{q \geq q_\star} \frac{D^q}{q!} \frac{n^{q(1-\beta)}}{(q+1)(1-\beta)} \frac{(q+1)(1-\beta)}{q(1-\beta) +1-r}  + C' n \\
&  \leq C n^{\beta-r} \sum_{q \geq q_\star} \frac{D^q}{(q+1)!} n^{(q+1)(1-\beta)} + C' n   \\
&\leq C \exp(D n^{1-\beta}) n^{\beta-r}.
\end{align*}
This concludes the proof.
 \end{proof}

\begin{lemma}
  \label{lem:IterateApproxSto}
  Let $\{\A_n, n\geq 0 \}$ be a sequence of $d' \times q$ matrices and
  $\{\sigma_n, n \geq 0 \}$ be a sequence of $q \times 1$ vectors. Let
  $\{S_n^\sa, n \geq 0 \}$ be given by Eq.~\eqref{eq:SAApprox}. For any
  $n \geq 2$
\begin{align*}
  \A_{n}\left( S_{n}^\sa - \sigma_{n-1} \right) & =  \Delta_{2:n} \A_1 \left( S_1^\sa -\sigma_0 \right)  \\
&+ \sum_{j=2}^{n} \Delta_{j:n} \left( \A_j - \A_{j-1} \right) \left( S_{j-1}^\sa - \sigma_{j-2} \right) \\
  & + \sum_{j=2}^{n} \Delta_{j:n} \A_j \left( \sigma_{j-2} - \sigma_{j-1}
  \right)  \\
& \hspace{-1.5cm} + \sum_{j=2}^{n} \Delta_{j+1:n} \delta_j \A_j \left(m_{j}^{-1} \sum_{k=1}^{m_j} S(\hid_{k,j-1})-
    \sigma_{j-1} \right).
\end{align*}
\end{lemma}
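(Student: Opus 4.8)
The plan is to prove the identity by unrolling the one‑step recursion that defines $S_n^\sa$. Writing Eq.~\eqref{eq:SAApprox} with the index shifted by one, $S_j^\sa = (1-\delta_j) S_{j-1}^\sa + \delta_j\, m_j^{-1}\sum_{k=1}^{m_j} S(\hid_{k,j-1})$, I would subtract $\sigma_{j-1}$ from both sides and then add and subtract $\sigma_{j-2}$ inside the term $(1-\delta_j)(S_{j-1}^\sa - \sigma_{j-1})$, which gives
\[
\begin{aligned}
S_j^\sa - \sigma_{j-1} &= (1-\delta_j)\left(S_{j-1}^\sa - \sigma_{j-2}\right) + (1-\delta_j)(\sigma_{j-2}-\sigma_{j-1}) \\
&\quad + \delta_j\left(m_j^{-1}\sum_{k=1}^{m_j} S(\hid_{k,j-1}) - \sigma_{j-1}\right).
\end{aligned}
\]
Multiplying by $\A_j$ and splitting $\A_j = \A_{j-1} + (\A_j - \A_{j-1})$ only in front of the first term on the right‑hand side produces the recursion $\A_j(S_j^\sa - \sigma_{j-1}) = (1-\delta_j)\,\A_{j-1}(S_{j-1}^\sa - \sigma_{j-2}) + R_j$, where $R_j$ is the sum of the three remainder contributions $(1-\delta_j)(\A_j - \A_{j-1})(S_{j-1}^\sa-\sigma_{j-2})$, $(1-\delta_j)\A_j(\sigma_{j-2}-\sigma_{j-1})$ and $\delta_j \A_j\left(m_j^{-1}\sum_{k=1}^{m_j}S(\hid_{k,j-1})-\sigma_{j-1}\right)$. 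The key point is that the carried‑over quantity on the right is exactly $\A_{j-1}(S_{j-1}^\sa - \sigma_{(j-1)-1})$, so the recursion is of the form $u_j=(1-\delta_j)u_{j-1}+R_j$ with $u_j\eqdef\A_j(S_j^\sa-\sigma_{j-1})$.

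Next I would iterate this recursion from $j=2$ up to $j=n$. Since each application multiplies the carried‑over term by a factor $(1-\delta_i)$, after reaching level $n$ the term $R_j$ is multiplied by $\prod_{i=j+1}^n (1-\delta_i) = \Delta_{j+1:n}$ (using the empty‑product convention $\Delta_{n+1:n}=1$), and $\A_1(S_1^\sa-\sigma_0)$ is multiplied by $\prod_{i=2}^n(1-\delta_i)=\Delta_{2:n}$; hence $\A_n(S_n^\sa-\sigma_{n-1}) = \Delta_{2:n}\A_1(S_1^\sa-\sigma_0) + \sum_{j=2}^n \Delta_{j+1:n} R_j$. Finally, distributing $\Delta_{j+1:n}$ over the three pieces of $R_j$ and using $\Delta_{j+1:n}(1-\delta_j)=\Delta_{j:n}$ on the first two of them, while keeping the third as $\Delta_{j+1:n}\delta_j\A_j(\cdots)$, reproduces verbatim the four sums in the statement. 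Equivalently, one can run a short induction on $n\geq 2$: the base case $n=2$ is a direct substitution — note $\Delta_{3:2}=1$ — and the inductive step is the displayed recursion combined with $\Delta_{2:(n+1)}=(1-\delta_{n+1})\Delta_{2:n}$ and $\Delta_{j+1:(n+1)}(1-\delta_j)=\Delta_{j:(n+1)}$.

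There is no genuine difficulty here: the only thing to be careful about is the bookkeeping of the index conventions for the products $\Delta_{k:n}$ (in particular the empty‑product convention $\Delta_{n+1:n}=1$) and the correct absorption of the factors $(1-\delta_j)$ into them. The identity is purely algebraic and requires no probabilistic or analytic input; the lemma is stated only to package this decomposition for use in the $L_2$ control of $S_{n+1}^\sa-\barS(\theta_n)$ and in the convergence proof of SAPG.
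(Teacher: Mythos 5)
Your proof is correct and follows essentially the same route as the paper: you derive the one-step recursion $\A_j(S_j^\sa-\sigma_{j-1})=(1-\delta_j)\A_{j-1}(S_{j-1}^\sa-\sigma_{j-2})+R_j$ (your $R_j$ is exactly the paper's $B_j$), iterate it to get $\Delta_{2:n}\A_1(S_1^\sa-\sigma_0)+\sum_{j=2}^n\Delta_{j+1:n}R_j$, and absorb the factors $(1-\delta_j)$ into the products via $\Delta_{j+1:n}(1-\delta_j)=\Delta_{j:n}$. The index bookkeeping, including the convention $\Delta_{n+1:n}=1$, is handled correctly.
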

\begin{proof}
  By definition of $S_n^\sa$, it holds
  $\A_n \left( S_n^\sa - \sigma_{n-1} \right) = (1-\delta_n)
  \A_{n-1} \left( S_{n-1}^\sa - \sigma_{n-2}\right) +B_n$ where
\begin{align*}
  B_n & \eqdef (1-\delta_{n}) (\A_n - \A_{n-1}) \left(S_{n-1}^\sa -
        \sigma_{n-2} \right)  \\
      &+ (1-\delta_n) \A_n \left( \sigma_{n-2} -
        \sigma_{n-1} \right)  \\
&+ \delta_n \A_n \left( m_{n}^{-1} \sum_{k=1}^{m_n} S(\hid_{k,n-1}) -
        \sigma_{n-1} \right).
\end{align*}
By iterating, we have
$$\A_n \left( S_n^\sa - \sigma_{n-1}\right) = \Delta_{2:n} \A_1 \left(S_1^\sa
  - \sigma_0 \right) + \sum_{j=2}^n \Delta_{j+1:n} B_j,$$
from which the lemma follows.
\end{proof}

\begin{lemma} \label{lem:IteratesKernels} Assume
  H\ref{A4}\ref{A4:item1}).  Let $\{S_n^\sa, n \geq 0 \}$ be given
  by Eq.~\eqref{eq:SAApprox}. Then
  \begin{align*}
   & \sup_{n \geq 0} \PE\left[ \| m_n^{-1} \sum_{j=1}^{m_n} S(\hid_{j,n-1}) \|^2  \right] < \infty, \\
& \sup_{n \geq 0} \PE\left[   m_n^{-1} \sum_{j=1}^{m_n} W(\hid_{j,n-1})  \right] < \infty, \\
  &  \sup_{n \geq 0} \PE\left[ \|S_n^\sa \|^2 \right] < \infty.
\end{align*}
\end{lemma}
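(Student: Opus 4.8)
The plan is to read off all three bounds from the drift inequality in H\ref{A4}\ref{A4:item1}), using that at each iteration the MCMC samples form a path of a kernel $P_{\theta_k}$ with $\theta_k\in\Theta$, started from the last sample of the previous iteration. First I would record the consequence of the drift condition: since $\sup_{\theta\in\Theta}P_\theta W\le \lambda W+b$ with $\lambda\in\ooint{0,1}$ and $W\ge1$, a straightforward induction gives $\sup_{\theta\in\Theta}P_\theta^k W\le \lambda^k W + b/(1-\lambda)$ for every $k\ge0$; in particular $P_\theta^k W\le W+b/(1-\lambda)$ for all $k$, and $P_\theta^k W\le \lambda W+b/(1-\lambda)$ as soon as $k\ge1$. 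I would also use that $\theta_k$ and the starting point $\hid_{m_k,k-1}$ of the $k$-th batch are $\F_k$-measurable, which is immediate from the recursions defining Algorithm~\ref{algo:SAPG}.

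Next I would control the endpoints of the successive batches. Set $u_k\eqdef \PE[W(\hid_{m_{k+1},k})]$ for $k\ge-1$, with $\hid_{m_0,-1}=z_\star$ so that $u_{-1}=W(z_\star)<\infty$. Conditioning on $\F_k$ and using the Markov property, $\PE[W(\hid_{m_{k+1},k})\mid\F_k]=P_{\theta_k}^{m_{k+1}}W(\hid_{m_k,k-1})\le \lambda W(\hid_{m_k,k-1})+b/(1-\lambda)$, because $m_{k+1}\ge1$. Taking expectations yields the geometric recursion $u_k\le \lambda u_{k-1}+b/(1-\lambda)$, hence $\bar u\eqdef\sup_{k\ge-1}u_k\le W(z_\star)+b/(1-\lambda)^2<\infty$.

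Now the first two displayed bounds follow. For $1\le j\le m_n$, conditioning on $\F_{n-1}$, $\PE[W(\hid_{j,n-1})\mid\F_{n-1}]=P_{\theta_{n-1}}^{j}W(\hid_{m_{n-1},n-2})\le W(\hid_{m_{n-1},n-2})+b/(1-\lambda)$; averaging over $j$ and taking expectations gives $\PE[m_n^{-1}\sum_{j=1}^{m_n}W(\hid_{j,n-1})]\le u_{n-2}+b/(1-\lambda)\le \bar u+b/(1-\lambda)$, uniformly in $n$. Since $\sqrtWnorm{S}<\infty$ we have $\|S(z)\|^2\le\sqrtWnorm{S}^2\,W(z)$, and by convexity of $\|\cdot\|^2$, $\|m_n^{-1}\sum_{j=1}^{m_n}S(\hid_{j,n-1})\|^2\le m_n^{-1}\sum_{j=1}^{m_n}\|S(\hid_{j,n-1})\|^2\le \sqrtWnorm{S}^2\,m_n^{-1}\sum_{j=1}^{m_n}W(\hid_{j,n-1})$; taking expectations gives the first bound with a constant independent of $n$. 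In particular, shifting the iteration index, $M\eqdef\sup_n\bigl(\PE[\|m_{n+1}^{-1}\sum_{j=1}^{m_{n+1}}S(\hid_{j,n})\|^2]\bigr)^{1/2}<\infty$.

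Finally, for the third bound I would argue in $L^2$. Writing $a_n\eqdef(\PE[\|S_n^\sa\|^2])^{1/2}$, one has $a_0=\|s_\star\|<\infty$ by the initialization in Algorithm~\ref{algo:SAPG}; and from Eq.~\eqref{eq:SAApprox}, Minkowski's inequality and $\delta_{n+1}\in\ccint{0,1}$ (so $1-\delta_{n+1}\ge0$), $a_{n+1}\le(1-\delta_{n+1})a_n+\delta_{n+1}M\le a_n\vee M$. Hence $\sup_n a_n\le a_0\vee M<\infty$, which is the claim. The only genuinely load-bearing point, and the main obstacle to a naive proof, is the observation used in the endpoint recursion: because every batch contains at least one sample, $\lambda^{m_{k+1}}\le\lambda<1$, which turns an otherwise merely additive (hence linearly growing) recursion into a contraction, keeping $\PE[W(\hid_{\cdot,\cdot})]$ bounded uniformly in time even though the chain is never restarted from a fixed point.
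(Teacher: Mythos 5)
Your proof is correct. For the first two bounds you follow essentially the same route as the paper: iterate the drift inequality within a batch, then control the batch endpoints $\PE[W(\hid_{m_k,k-1})]$ by a geometric recursion across iterations, and pass from $W$ to $\|S\|^2$ via $\sqrtWnorm{S}<\infty$. Your explicit remark that $m_{k+1}\geq 1$ is what turns the cross-iteration recursion into a contraction is exactly the point the paper buries inside the phrase ``a trivial induction,'' and it is worth making explicit. For the third bound you genuinely diverge: the paper unrolls the stochastic-approximation recursion (Lemma~\ref{lem:IterateApproxSto} with $\A_n=I$, $\sigma_n=0$) into $S_n^\sa=\Delta_{2:n}S_1^\sa+\sum_{j=2}^n\Delta_{j+1:n}\delta_j\,m_j^{-1}\sum_k S(\hid_{k,j-1})$ and then invokes the telescoping identity $\sum_{j=2}^n\Delta_{j+1:n}\delta_j=1-\Delta_{2:n}\leq 1$ (Lemma~\ref{lem:ResultOnDelta}), whereas you apply Minkowski directly to the one-step update to get $a_{n+1}\leq(1-\delta_{n+1})a_n+\delta_{n+1}M\leq a_n\vee M$ and induct. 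The two are equivalent in substance (your convex-combination induction is the one-step shadow of their telescoping sum), but yours is shorter and more self-contained; the paper's version has the side benefit of setting up Lemma~\ref{lem:IterateApproxSto}, which is reused as the backbone of Proposition~\ref{prop:L2norm:DeltaS} and Proposition~\ref{prop:MartingaleReste}.
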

\begin{proof}
  By H\ref{A4}\ref{A4:item1}), there exists a constant $C < \infty$
  such that for any $n \geq 1$ and $1 \leq j \leq m_n$,
  $ \|S(\hid_{j,n-1})\|^2 \leq C \, W(\hid_{j,n-1})$.  In addition, by the
  drift assumption on the kernels $P_\theta$, we have
\begin{align*}
  \PE\left[W(\hid_{j,n-1}) \right] &= \PE\left[P_{\theta_{n-1}} W(\hid_{j-1,n-1}) \right]   \\
  &\leq \lambda \PE\left[ W(\hid_{j-1,n-1}) \right] + b  \\
  &\leq \lambda^j \PE\left[W(\hid_{0,n-1}) \right]+ \frac{b}{1-\lambda}.
\end{align*}
Similarly, by using $\hid_{0,n-1} = \hid_{m_{n-1},n-2}$, we have
\[
\PE\left[W(\hid_{0,n-1}) \right] \leq \lambda^{m_{n-1}}
\PE\left[W(\hid_{0,n-2}) \right] + \frac{b}{1-\lambda}.
\]
A trivial induction shows that
$$\sup_n \sup_{j \leq m_n} \PE\left[W(\hid_{j,n-1}) \right] < \infty,$$
from which the first two results follow.  For the third one: by
Lemma~\ref{lem:IterateApproxSto} applied with $\A_n = I$ (the identity
matrix) and $\sigma_n = 0$, we have for any $n \geq 1$,
\[
S_{n}^\sa = \Delta_{2:n} S_1 + \sum_{j=2}^{n} \Delta_{j+1:n} \,
\delta_{j} \, m_j^{-1} \sum_{k=1}^{m_j} S(\hid_{k,j-1}).
\]
By the Minkowsky inequality and the inequality
$(a+b)^2 \leq 2 a^2 + 2 b^2$, we have 
\begin{multline*}
  \PE\left[\| S_{n}^\sa \|^2 \right] \leq 2 \left( \Delta_{2:n}
  \right)^2 \PE\left[ \|S_1^\sa \|^2 \right]  \\
  + 2 \sup_j \PE\left[ \|m_j^{-1} \sum_{k=1}^{m_j} S(\hid_{k,j-1})
    \|^2 \right] \, \left( \sum_{j=2}^{n} \Delta_{j+1:n} \, \delta_{j}
  \right)^2.
\end{multline*}
By definition, $\Delta_{2:n} \in \ccint{0,1}$ and by
Lemma~\ref{lem:ResultOnDelta},
$$\sup_n \sum_{j=2}^{n} \Delta_{j+1:n} \, \delta_{j} < \infty.$$ Hence,
$\sup_{n \geq 0} \PE\left[ \|S_n^\sa \|^2 \right] < \infty$.
\end{proof}

 Define the Proximal-Gradient operator
\[
T_\gamma(\theta) \eqdef \Prox_{\gamma,g}\left( \theta + \gamma \nabla \ell(\theta) \right).
\]

\begin{lemma}
  \label{lem:DeltaTheta} Assume H\ref{A1}, H\ref{A1bis} and
  H\ref{A3}. Let $\{S_n^\sa, n \geq 0 \}$ be given
  by~\eqref{eq:SAApprox}. Then, for the sequence
  $\{\theta_n,n \geq 0 \}$ given by Algorithm~\ref{algo:SAPG},
  \begin{enumerate}[(i)]
  \item \label{lem:DeltaTheta:item1} There exists
  a constant $C$ such that almost-surely, for any $n \geq 0$,
\[
\| \theta_{n+1} - \theta_n \| \leq C \gamma_{n+1} \left(1 + \| S_{n+1}^\sa - \barS(\theta_n) \| \right).
\]
\item \label{lem:DeltaTheta:item2}  There exists a constant $C'$ such that almost-surely, for any $n \geq 0$,
\begin{multline*}
\| \gamma_{n+1} \Psi(\theta_n) - \gamma_n \Psi(\theta_{n-1}) \| \\
 \leq C' \left( \left| \gamma_{n+1} - \gamma_n
  \right| + \gamma_n^2 (1+ \|S_{n}^\sa - \barS(\theta_{n-1}) \|) \right).
\end{multline*}
\item \label{lem:DeltaTheta:item3}  There exists a constant $C"$ such that almost-surely, for any $n \geq 0$,
\begin{multline*}
  \| \gamma_{n+1} T_{\gamma_{n+1},g}\left( \theta_n \right) - \gamma_n
  T_{\gamma_{n},g}\left( \theta_{n-1} \right) \|  \\
  \leq C'' \left( \left| \gamma_{n+1} - \gamma_n \right| + \gamma_n
    \gamma_{n+1}   \right. \\
   \left. + \gamma_n^2 \left(1 + \|S_{n}^\sa - \barS(\theta_{n-1})\| \right)
  \right).
\end{multline*}
  \end{enumerate}
\end{lemma}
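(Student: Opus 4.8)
The plan is to prove the three estimates in the order stated, since (ii) invokes (i) and (iii) invokes both. Throughout I will use four facts: the non-expansiveness of $\Prox_{\gamma,g}$ for every $\gamma>0$; that H\ref{A3}\ref{A3:item2}-\ref{A3:item3}) make $\nabla\phi,\Psi,\barS$ (hence $\nabla\ell$) bounded on $\Theta$ while $\Psi$ and $\nabla\ell$ are $L$-Lipschitz there; the uniform bound H\ref{A3}\ref{A3:item4}); and the fact that $\Prox_{\gamma,g}$ takes its values in the bounded set $\Theta$ (see \eqref{eq:def:Proximal}), so that $\theta_n\in\Theta$ and $T_\gamma(\theta_n)\in\Theta$ have norms below a universal constant. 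For (i), write $\theta_{n+1}=\Prox_{\gamma_{n+1},g}(u_n)$ with $u_n\eqdef\theta_n+\gamma_{n+1}\{\nabla\phi(\theta_n)+\Psi(\theta_n)S_{n+1}^\sa\}$ and bound $\|\theta_{n+1}-\theta_n\|$ by $\|\Prox_{\gamma_{n+1},g}(u_n)-\Prox_{\gamma_{n+1},g}(\theta_n)\|+\|\Prox_{\gamma_{n+1},g}(\theta_n)-\theta_n\|$; the first term is $\le\|u_n-\theta_n\|=\gamma_{n+1}\|\nabla\phi(\theta_n)+\Psi(\theta_n)S_{n+1}^\sa\|\le C\gamma_{n+1}(1+\|S_{n+1}^\sa-\barS(\theta_n)\|)$ after inserting $\|S_{n+1}^\sa\|\le\|S_{n+1}^\sa-\barS(\theta_n)\|+\|\barS(\theta_n)\|$ and the uniform bounds, while the second is $\le C\gamma_{n+1}$ by H\ref{A3}\ref{A3:item4}).

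For (ii), use $\gamma_{n+1}\Psi(\theta_n)-\gamma_n\Psi(\theta_{n-1})=(\gamma_{n+1}-\gamma_n)\Psi(\theta_n)+\gamma_n(\Psi(\theta_n)-\Psi(\theta_{n-1}))$: the first term is $\le C|\gamma_{n+1}-\gamma_n|$ by boundedness of $\Psi$ on $\Theta$, the second is $\le\gamma_n L\|\theta_n-\theta_{n-1}\|$ by Lipschitzness, and applying (i) with $n$ replaced by $n-1$ turns $\gamma_n\|\theta_n-\theta_{n-1}\|$ into $C\gamma_n^2(1+\|S_n^\sa-\barS(\theta_{n-1})\|)$.

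For (iii), first isolate the stepsize variation: $\gamma_{n+1}T_{\gamma_{n+1}}(\theta_n)-\gamma_nT_{\gamma_n}(\theta_{n-1})=(\gamma_{n+1}-\gamma_n)T_{\gamma_{n+1}}(\theta_n)+\gamma_n(T_{\gamma_{n+1}}(\theta_n)-T_{\gamma_n}(\theta_{n-1}))$, the first term being $\le C|\gamma_{n+1}-\gamma_n|$ since $T_{\gamma_{n+1}}(\theta_n)\in\Theta$. Inside the bracket insert $T_{\gamma_{n+1}}(\theta_{n-1})$: the increment $T_{\gamma_{n+1}}(\theta_n)-T_{\gamma_{n+1}}(\theta_{n-1})$ is, by non-expansiveness and $L$-Lipschitzness of $\nabla\ell$, at most $(1+\gamma_{n+1}L)\|\theta_n-\theta_{n-1}\|\le2\|\theta_n-\theta_{n-1}\|$, so (i) supplies the contribution $C\gamma_n^2(1+\|S_n^\sa-\barS(\theta_{n-1})\|)$ (weaker than, hence implying, the stated bound which also allows a $\gamma_n\gamma_{n+1}$ term). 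The one new ingredient is the increment $T_{\gamma_{n+1}}(\theta_{n-1})-T_{\gamma_n}(\theta_{n-1})$, which changes only the stepsize at the fixed base point $\theta_{n-1}$: I plan to use the proximal identity $\Prox_{\gamma',g}(x)=\Prox_{\gamma,g}(p+\tfrac{\gamma}{\gamma'}(x-p))$ with $p\eqdef\Prox_{\gamma',g}(x)$ (which follows from the first-order optimality condition defining $\Prox_{\gamma',g}(x)$), whence by non-expansiveness $\|\Prox_{\gamma',g}(x)-\Prox_{\gamma,g}(x)\|\le\tfrac{|\gamma'-\gamma|}{\gamma'}\|\Prox_{\gamma',g}(x)-x\|$. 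Applying this at $x=\theta_{n-1}+\gamma_{n+1}\nabla\ell(\theta_{n-1})$ with $(\gamma',\gamma)=(\gamma_{n+1},\gamma_n)$, adding one triangle inequality to pass from $\Prox_{\gamma_n,g}(x)$ to $T_{\gamma_n}(\theta_{n-1})$, and using $\|\Prox_{\gamma_{n+1},g}(x)-x\|=\|T_{\gamma_{n+1}}(\theta_{n-1})-\theta_{n-1}-\gamma_{n+1}\nabla\ell(\theta_{n-1})\|\le C\gamma_{n+1}$ (from H\ref{A3}\ref{A3:item4}), boundedness of $\nabla\ell$, and non-expansiveness), the factor $\gamma_{n+1}^{-1}$ cancels and leaves $\|T_{\gamma_{n+1}}(\theta_{n-1})-T_{\gamma_n}(\theta_{n-1})\|\le C|\gamma_{n+1}-\gamma_n|$. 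Multiplying by $\gamma_n\le1/L$ and collecting all pieces yields (iii).

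The main obstacle is (iii), and within it the control of the stepsize-dependence of $T_\gamma$: the stepsizes are assumed only to lie in $\ocint{0,1/L}$, with no monotonicity or slow-variation, so a careless argument produces an uncontrolled ratio $\gamma_n/\gamma_{n+1}$, and the proximal identity above is precisely what cancels the offending $1/\gamma'$ factor. Everything else is routine bookkeeping with the boundedness of $\Theta$ and of $\nabla\phi,\Psi,\barS$, their Lipschitz constants, and H\ref{A3}\ref{A3:item4}).
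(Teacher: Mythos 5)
Your proof is correct, and for parts (i) and (ii) it follows essentially the same route as the paper: (ii) is verbatim the paper's decomposition $(\gamma_{n+1}-\gamma_n)\Psi(\theta_n)+\gamma_n(\Psi(\theta_n)-\Psi(\theta_{n-1}))$, and (i) — which the paper omits, citing an external lemma of \cite{atchade2014stochastic} — is exactly the intended argument (non-expansiveness of $\Prox_{\gamma_{n+1},g}$ plus H\ref{A3}\ref{A3:item4}) at the base point $\theta_n$, with $\|S_{n+1}^\sa\|$ split through $\barS(\theta_n)$).

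The genuine difference is in (iii). The paper uses the same outer decomposition as you, $(\gamma_{n+1}-\gamma_n)T_{\gamma_{n+1}}(\theta_n)+\gamma_n\bigl(T_{\gamma_{n+1}}(\theta_n)-T_{\gamma_n}(\theta_{n-1})\bigr)$, but then invokes a cited result (\cite[Proposition 12]{atchade2014stochastic}) giving the joint estimate $\|T_{\gamma,g}(\theta)-T_{\gamma',g}(\theta')\|\leq C(\gamma+\gamma'+\|\theta-\theta'\|)$, which after multiplication by $\gamma_n$ produces precisely the $\gamma_n\gamma_{n+1}+\gamma_n^2$ terms in the statement. You instead prove the $\gamma$-dependence from scratch via the rescaling identity $\Prox_{\gamma',g}(x)=\Prox_{\gamma,g}\bigl(p+\tfrac{\gamma}{\gamma'}(x-p)\bigr)$ with $p=\Prox_{\gamma',g}(x)$, which yields the sharper bound $\|T_{\gamma_{n+1}}(\theta_{n-1})-T_{\gamma_n}(\theta_{n-1})\|\leq C|\gamma_{n+1}-\gamma_n|$ and folds into the $|\gamma_{n+1}-\gamma_n|$ term of the statement rather than the $\gamma_n\gamma_{n+1}$ term. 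This is valid: the identity follows from the optimality condition $\tfrac{1}{\gamma'}(x-p)\in\partial(g+\iota_\Theta)(p)$ (the paper's $\Prox$ is the argmin restricted to $\Theta$, i.e.\ the usual prox of $g+\iota_\Theta$, still convex and lower semi-continuous under H\ref{A1} and H\ref{A3}\ref{A3:item2})), and the factor $\|p-x\|\leq C\gamma_{n+1}$ you extract from H\ref{A3}\ref{A3:item4}) and the boundedness of $\nabla\ell$ on $\Theta$ cancels the $1/\gamma_{n+1}$ exactly as you say. What your route buys is self-containedness and a slightly stronger inequality; what the paper's route buys is brevity, since the cited proposition is proved once and reused elsewhere. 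Both establish the stated bound.
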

\begin{proof}
  The proof of \textit{(\ref{lem:DeltaTheta:item1})} is on the same
  lines as the proof of~\cite[Lemma 15]{atchade2014stochastic}, and is
  omitted.  For \textit{(\ref{lem:DeltaTheta:item2})},  we write by using H\ref{A3}\ref{A3:item2}) and H\ref{A3}\ref{A3:item3}),
  \begin{align*}
&    \| \gamma_{n+1} \Psi(\theta_n) - \gamma_n \Psi(\theta_{n-1}) \| \\
 &  \quad \leq | \gamma_{n+1} - \gamma_n | \, \| \Psi(\theta_n) \| + \gamma_n \| \Psi(\theta_n)-\Psi(\theta_{n-1})\| \\
                                                                    &  \quad \leq | \gamma_{n+1} - \gamma_n | \,  \sup_\Theta \| \Psi \| +  L \, \gamma_n \| \theta_n - \theta_{n-1} \|.
  \end{align*}
We then conclude by \textit{(\ref{lem:DeltaTheta:item1})}. The LHS in \textit{(\ref{lem:DeltaTheta:item3})} is upper bounded by
\begin{multline*}
  \left| \gamma_{n+1} - \gamma_n \right| \sup_{\gamma \in
    \ocint{0,1/L}}
  \sup_{\theta \in \Theta} \| T_{\gamma,g}\left(\theta \right) \|  \\
  + \gamma_n \|T_{\gamma_{n+1},g}\left(\theta_n \right) -
  T_{\gamma_n,g}\left(\theta_{n-1}\right) \|.
\end{multline*}
Under H\ref{A3}, there exists a constant $C$ such that for all
$\gamma, \gamma' \in \ocint{0,1/L}$ and $\theta, \theta' \in \Theta$
(see \cite[Proposition 12]{atchade2014stochastic})
\begin{align*}
& \sup_{\gamma \in \ocint{0,1/L}} \sup_{\theta \in \Theta} \|
T_{\gamma,g}\left(\theta\right) \| < \infty,  \\
& \|
T_{\gamma,g}\left(\theta \right) - T_{\gamma',g}\left(\theta' \right)
\| \leq C \left(\gamma + \gamma' + \|\theta - \theta'\| \right).
\end{align*}
  We then conclude by \textit{(\ref{lem:DeltaTheta:item1})}.
\end{proof}

\begin{lemma}
  \label{lem:PoissonSolution}
  Assume H\ref{A4}. For any $\theta \in \Theta$, there exists a
  function $\hatS_\theta:\Xset \to \rset^q$ such that
  $S - \barS(\theta) = \hatS_\theta - P_\theta \hatS_\theta$ and
  $\sup_{\theta \in \Theta} \sqrtWnorm{ \hatS_\theta } < \infty$.  In
  addition, there exists a constant $C$ such that for any
  $\theta, \theta' \in \Theta$,
\[
\sqrtWnorm{ P_\theta \hatS_\theta - P_{\theta'} \hatS_{\theta'}} \leq
C \, \|\theta - \theta' \|.
\]
\end{lemma}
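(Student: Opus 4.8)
The plan is to take for $\hatS_\theta$ the canonical solution of the Poisson equation,
\[
\hatS_\theta \eqdef \sum_{k \geq 0} \left( P_\theta^k S - \barS(\theta) \right),
\]
and to check the three assertions. The starting observation is that the $W$-norm ergodicity H\ref{A4}\ref{A4:item2}) automatically upgrades to the $\sqrt{W}$-norm: for any signed measure $\mu$ that is the difference of two probability measures, Cauchy--Schwarz in $L^2(|\mu|)$ together with the identities $\WnormM{\mu}=\int W\,\rmd|\mu|$, $\sqrtWnormM{\mu}=\int\sqrt{W}\,\rmd|\mu|$ and $|\mu|(\Xset)\leq 2$ gives $\sqrtWnormM{\mu}\leq\sqrt{2\,\WnormM{\mu}}$; applied to $\mu=P_\theta^n(z,\cdot)-\pi_\theta$, this yields $\sup_{\theta\in\Theta}\sqrtWnormM{P_\theta^n(z,\cdot)-\pi_\theta}\leq\sqrt{2C}\,\rho^{n/2}\,\sqrt{W}(z)$. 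Since $\sqrtWnorm{S}<\infty$, it follows that $\|P_\theta^k S(z)-\barS(\theta)\|=\|(P_\theta^k(z,\cdot)-\pi_\theta)(S)\|\leq C'\,\rho^{k/2}\,\sqrt{W}(z)$ (a dimensional factor absorbed in $C'$), hence the series defining $\hatS_\theta$ converges uniformly in $\sqrt{W}$-norm and $\sup_{\theta\in\Theta}\sqrtWnorm{\hatS_\theta}<\infty$: this is the first assertion.

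The Poisson identity is then a telescoping: $P_\theta$ fixes the constant $\barS(\theta)$, so $P_\theta\hatS_\theta=\sum_{k\geq 1}(P_\theta^k S-\barS(\theta))$ and therefore $\hatS_\theta-P_\theta\hatS_\theta=S-\barS(\theta)$. The same term-by-term integration yields the normalisation $\pi_\theta(\hatS_\theta)=\sum_{k\geq 0}(\pi_\theta(P_\theta^k S)-\barS(\theta))=0$, which will be used below.

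For the Lipschitz bound I would subtract the Poisson equations at $\theta$ and $\theta'$: $(I-P_\theta)(\hatS_\theta-\hatS_{\theta'})=R_{\theta,\theta'}$ with $R_{\theta,\theta'}\eqdef(\barS(\theta')-\barS(\theta))+(P_\theta-P_{\theta'})\hatS_{\theta'}$. Two properties of $R_{\theta,\theta'}$ are needed: (i) $\sqrtWnorm{R_{\theta,\theta'}}\leq C\|\theta-\theta'\|$, which follows from H\ref{A4}\ref{A4:item3}) — controlling both $\sqrtWnormM{\pi_\theta-\pi_{\theta'}}$, which bounds $\|\barS(\theta)-\barS(\theta')\|=\|(\pi_\theta-\pi_{\theta'})(S)\|$, and the one-step term $(P_\theta-P_{\theta'})\hatS_{\theta'}$ via the uniform bound $\sup_\theta\sqrtWnorm{\hatS_\theta}<\infty$ already proved; and (ii) $R_{\theta,\theta'}$ is $\pi_\theta$-centred, since $\pi_\theta\!\left((P_\theta-P_{\theta'})\hatS_{\theta'}\right)=\pi_\theta\!\left((I-P_{\theta'})\hatS_{\theta'}\right)=\pi_\theta(S)-\barS(\theta')=\barS(\theta)-\barS(\theta')$, cancelling the first term of $R_{\theta,\theta'}$. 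As $R_{\theta,\theta'}$ is $\pi_\theta$-centred, $\sum_{k\geq 0}P_\theta^k R_{\theta,\theta'}$ converges geometrically in $\sqrt{W}$-norm, and inverting $(I-P_\theta)$ by telescoping — the remainder $P_\theta^N(\hatS_\theta-\hatS_{\theta'})$ tends to $\pi_\theta(\hatS_\theta-\hatS_{\theta'})$ because $\hatS_\theta-\hatS_{\theta'}$ is $\sqrt{W}$-bounded — gives $\hatS_\theta-\hatS_{\theta'}=\sum_{k\geq 0}P_\theta^k R_{\theta,\theta'}+\pi_\theta(\hatS_\theta-\hatS_{\theta'})$. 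By the normalisation $\pi_\theta(\hatS_\theta)=\pi_{\theta'}(\hatS_{\theta'})=0$ the additive constant is $-(\pi_\theta-\pi_{\theta'})(\hatS_{\theta'})$, which is $O(\|\theta-\theta'\|)$ by H\ref{A4}\ref{A4:item3}); hence $\sqrtWnorm{\hatS_\theta-\hatS_{\theta'}}\leq C\|\theta-\theta'\|$. Finally $P_\theta\hatS_\theta-P_{\theta'}\hatS_{\theta'}=P_\theta(\hatS_\theta-\hatS_{\theta'})+(P_\theta-P_{\theta'})\hatS_{\theta'}$: the first summand has $\sqrt{W}$-norm $\leq c_\star\sqrtWnorm{\hatS_\theta-\hatS_{\theta'}}$ (using $\sup_\theta P_\theta\sqrt{W}\leq c_\star\sqrt{W}$, itself a consequence of the drift in H\ref{A4}\ref{A4:item1}) via Jensen) and the second is $O(\|\theta-\theta'\|)$ as already noted in (i); this is the stated inequality.

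The step I expect to require the most care is the Lipschitz bound, and in particular the bookkeeping of the additive constant that appears when one inverts $I-P_\theta$ on $\hatS_\theta-\hatS_{\theta'}$: this solution is pinned down only up to a constant vector, and showing that constant is itself $O(\|\theta-\theta'\|)$ is exactly where the normalisation $\pi_\theta(\hatS_\theta)=0$ and the $\sqrt{W}$-continuity of $\theta\mapsto\pi_\theta$ from H\ref{A4}\ref{A4:item3}) come in. By contrast the $\sqrt{W}$-ergodicity upgrade, the convergence of the Poisson series, and the Poisson identity itself are all short.
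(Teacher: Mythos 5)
Your construction and your proofs of the first two assertions coincide with the paper's: the paper also sets $\hatS_\theta \eqdef \sum_{n\geq 0}(P_\theta^n S - \barS(\theta))$, derives the geometric bound $\sum_n \|P_\theta^n S(z)-\barS(\theta)\| \leq C\,\sqrtWnorm{S}\,(\sum_n\rho^n)\sqrt{W}(z)$ from H4(a)--(b), and reads off existence, the uniform $\sqrt{W}$-bound, and the Poisson identity. The one place you genuinely diverge is the Lipschitz estimate: the paper does not prove it but delegates it to Lemma~4.2 of Fort--Moulines--Priouret (2011), whereas you supply a complete in-house argument by subtracting the two Poisson equations, checking that the right-hand side $R_{\theta,\theta'}=(\barS(\theta')-\barS(\theta))+(P_\theta-P_{\theta'})\hatS_{\theta'}$ is both $O(\|\theta-\theta'\|)$ in $\sqrt{W}$-norm (via H4(c)) and $\pi_\theta$-centred, inverting $I-P_\theta$ on the centred part, and pinning down the residual additive constant through the normalisation $\pi_\theta(\hatS_\theta)=0$ together with the $\sqrt{W}$-continuity of $\theta\mapsto\pi_\theta$. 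I checked the key identities --- the cancellation $\pi_\theta\bigl((P_\theta-P_{\theta'})\hatS_{\theta'}\bigr)=\barS(\theta)-\barS(\theta')$, the harmonicity argument identifying the constant as $-(\pi_\theta-\pi_{\theta'})(\hatS_{\theta'})$, and the Jensen step $P_\theta\sqrt{W}\leq\sqrt{\lambda}\sqrt{W}+\sqrt{b}$ --- and they are all correct; this is in substance the standard perturbation-of-Poisson-solutions argument that the cited reference carries out, so your proof makes the paper's appeal to the literature self-contained. What your version buys is transparency about exactly which parts of H4 are used where (in particular that H4(c) enters only through the one-step kernel difference and the continuity of $\theta\mapsto\pi_\theta$); what the paper's citation buys is brevity.
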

\begin{proof}
  Set
  $\hatS_\theta(z) \eqdef \sum_{n \geq 0} \left( P_\theta^n S(z) -
    \barS(\theta) \right)$.
  Observe that, when exists, this function satisfies
  $S - \barS(\theta) = \hatS_\theta - P_\theta \hatS_\theta$.  Note
  that under H\ref{A4}\ref{A4:item1})-H\ref{A4}\ref{A4:item2}), there
  exist $C$ and $\rho \in \ooint{0,1}$ such that for any
  $\theta \in \Theta$,
\[
\sum_{n \geq 0} \left\| P_\theta^n S(z) -  \barS(\theta)
\right\| \leq C \, \sqrtWnorm{S} \, \left( \sum_{n \geq 0}
\rho^n \right) \sqrt{W}(z);
\]
the RHS is finite, thus showing that $\hatS_\theta$ exists. This
inequality also proves that
$\sup_\theta \sqrtWnorm{\hatS_\theta}< \infty$. The Lipschitz property
is established in ~\cite[Lemma 4.2.]{fort:moulines:priouret:2011} and
its proof uses H\ref{A4}\ref{A4:item3}).
\end{proof}

\section{Proof of Proposition~\ref{prop:L2norm:DeltaS}}
\label{sec:proof:prop:Lnorm}
Throughout this section, set
$\|U\|_{L_2} \eqdef \PE\left[\|U\|^2 \right]^{1/2}$. By
Lemma~\ref{lem:IterateApproxSto},
$\| S_{n}^\sa - \barS(\theta_{n-1}) \|_{L_2} \leq \sum_{i=1}^3
\mathcal{T}_{i,n}$ with
\begin{align*}
  & \mathcal{T}_{1,n} \eqdef  \Delta_{2:n} \| S_1^\sa - \barS(\theta_0) \|_{L_2},   \\
  & \mathcal{T}_{2,n} \eqdef  \sum_{j=2}^n \Delta_{j:n} \| \barS(\theta_{j-1}) - \barS(\theta_{j-2}) \|_{L_2}, \\
  & \mathcal{T}_{3,n} \eqdef \left\| \sum_{j=2}^n \Delta_{j+1:n} \delta_j \left(m_j^{-1} \sum_{k=1}^{m_j} S(\hid_{k,j-1}) - \barS(\theta_{j-1}) \right) \right\|_{L_2}.
\end{align*}
Since $\Delta_{2:n} \leq \exp(-\delta_\star \sum_{j=2}^n j^{-\beta})$,
then
$$\mathcal{T}_{1,n} = O\left(\exp(-\delta_\star(1-\beta)^{-1} n^{1 -
    \beta}) \right).$$
By H\ref{A3}\ref{A3:item3}), Lemma~\ref{lem:IteratesKernels} and
Lemma~\ref{lem:DeltaTheta}, there exists a constant $C$ such that
$\mathcal{T}_{2,n} \leq C \, \sum_{j=2}^n \Delta_{j:n} \gamma_{j-1}$.
By Lemma~\ref{lem:Vitesse:Sn}, this yields
$\mathcal{T}_{2,n} = O(n^{\beta- \alpha})$. For the last term, we use
a martingale decomposition.

By Lemma~\ref{lem:PoissonSolution}, there exists a function
$\hatS_\theta$ such that
\[
S(\hid_{k,j-1}) - \barS(\theta_{j-1}) = \hatS_{\theta_{j-1}}(\hid_{k,j-1}) -
P_{\theta_{j-1}} \hatS_{\theta_{j-1}}(\hid_{k,j-1}),
\]
and $\sup_{\theta \in \Theta} \sqrtWnorm{ \hatS_\theta }< \infty$.
Hence, we write
\[
m_j^{-1} \sum_{k=1}^{m_j} S(\hid_{k,j-1}) - \barS(\theta_{j-1})
= \partial M_j + R_{j,1} + R_{j,2}
\]
with
\begin{align*}
  & \partial M_j \eqdef m_j^{-1} \sum_{k=1}^{m_j}  \left\{ \hatS_{\theta_{j-1}}(\hid_{k,j-1}) - P_{\theta_{j-1}} \hatS_{\theta_{j-1}}(\hid_{k-1,j-1})  \right\},  \\
  & R_{j,1} \eqdef  m_j^{-1} \left\{  P_{\theta_{j-1}} \hatS_{\theta_{j-1}}(\hid_{0,j-1})  -  P_{\theta_{j}} \hatS_{\theta_{j}}(\hid_{0,j})  \right\},  \\
  & R_{j,2} \eqdef m_j^{-1} \left\{ P_{\theta_{j}} \hatS_{\theta_{j}}(\hid_{0,j})  - P_{\theta_{j-1}} \hatS_{\theta_{j-1}}(\hid_{0,j})  \right\};
\end{align*}
we used that $\hid_{0,j} = \hid_{m_j,j-1}$. Upon noting that
$\partial M_j$ is a martingale-increment, and
$$\hatS_{\theta_{j-1}}(\hid_{k,j-1}) - P_{\theta_{j-1}}
\hatS_{\theta_{j-1}}(\hid_{k-1,j-1}) $$
is a martingale-increment, we have by two successive applications of
\cite[Theorem 2.10]{hall:heyde:1980}: 
\[
\left\| \sum_{j=2}^n \Delta_{j+1:n} \delta_j \partial M_j
\right\|_{L_2} \leq C \left(\sum_{j=2}^n \Delta_{j+1:n}^2
  \frac{\delta_j^2}{m_j} \right)^{1/2}.
\]
By Lemma~\ref{lem:Vitesse:Sn}, this term is $O(n^{-(\beta+c)/2})$. For the second term, we write
\begin{multline*}
  \sum_{j=2}^n \Delta_{j+1:n} \delta_j R_{j,1}  \\
  = \Delta_{3:n} \frac{\delta_2}{m_2} P_{\theta_{1}}
  \hatS_{\theta_{1}}(\hid_{0,1}) - \frac{\delta_n}{m_n} P_{\theta_{n}}
  \hatS_{\theta_{n}}(\hid_{0,n})  \\
  + \sum_{j=2}^{n-1} \left( \Delta_{j+2:n}
    \frac{\delta_{j+1}}{m_{j+1}} - \Delta_{j+1:n} \frac{\delta_j}{m_j}
  \right) P_{\theta_{j}} \hatS_{\theta_{j}}(\hid_{0,j}).
\end{multline*}
By Lemma~\ref{lem:IteratesKernels} and
Lemma~\ref{lem:PoissonSolution}, the RHS is
$O(n^{-(\beta+c)} + n^{-(1+c)})$ so that this second term is
$O(n^{-(\beta+c)})$. Finally, for the third term, by using
Lemma~\ref{lem:IteratesKernels}, Lemma~\ref{lem:DeltaTheta} and
Lemma~\ref{lem:PoissonSolution}, we write
\[
\left\| \sum_{j=2}^n \Delta_{j+1:n} \delta_j R_{j,2} \right\|_{L_2}
\leq \sum_{j=2}^n \Delta_{j+1:n} \frac{\delta_j}{m_j} \gamma_j.
\]
Again by Lemma~\ref{lem:Vitesse:Sn}, this last term is
$O(n^{-(\alpha+c)})$.  Therefore, $\mathcal{T}_{3,n} = O(n^{-(\beta+c)/2})$.

\section{Proof of Theorem~\ref{theo:Cvg:SAEMprox}}
\label{sec:proof:SAEMprox}
Throughout the proof, we will write $S_{n+1}$ instead of
$S_{n+1}^\sa$. 

\paragraph{Proof of Theorem~\ref{theo:Cvg:SAEMprox}}
We prove the almost-sure convergence of the three random sums given in
Theorem~\ref{theo:Cvg:perturbedEMprox}.  The third one is finite
almost-surely since its expectation is finite (see
Proposition~\ref{prop:SumSquare}). The first two ones are of the
form $\sum_n \A_{n+1} \left(S_{n+1} - \barS(\theta_n) \right)$ where $\A_{n+1}$ is respectively
\[
\A_{n+1} = \gamma_{n+1}  \left( T_{\gamma_{n+1},g}\left( \theta_n \right)  \right)', \qquad
\A_{n+1} = \gamma_{n+1} \Psi(\theta_n).
\]
Note that $\A_{n+1} \in \F_n$ (the filtration is defined by
Eq.~\eqref{eq:filtration}). By Lemma~\ref{lem:DeltaTheta} and
H\ref{A3}\ref{A3:item2}-\ref{A3:item3}), for both cases, there exists
a constant $C$ such that almost-surely, for any $n \geq 0$,
\begin{align*}
  & \| \A_{n+1} - \A_n \| \leq C \left( \left|\gamma_{n+1} - \gamma_n
    \right| + \gamma_{n}^2 + \gamma_n \gamma_{n+1}\right) \ \cdots \\
  & \qquad \qquad \times \left(1+
    \|S_n - \barS(\theta_{n-1})\| \right),  \\
  &\|\A_{n+1} \| \leq C \gamma_{n+1}.
\end{align*}
We then conclude by Proposition~\ref{prop:MartingaleReste}.
\begin{proposition} \label{prop:SumSquare} Assume
  H\ref{A4}\ref{A4:item1}) and
  $$\sup_{\theta \in \Theta} \left( \| \Psi(\theta)\| + \| \barS(\theta)\| \right) < \infty.$$
  Then there exists a constant $C$ such that
\[
\sum_n \gamma_{n+1}^2 \PE\left[ \left\| \Psi(\theta_n) \left(
    S_{n+1} - \barS(\theta_n) \right) \right\|^2 \right] \leq C \, \sum_n \gamma_{n+1}^2.
\]
\end{proposition}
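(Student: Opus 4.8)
The plan is to reduce the whole statement to the uniform-in-$n$ second-moment bound on the stochastic-approximation iterates $S_k^\sa$ provided by Lemma~\ref{lem:IteratesKernels}. Recall that, as in the proof of Theorem~\ref{theo:Cvg:SAEMprox}, $S_{n+1}$ stands for $S_{n+1}^\sa$, and that $\theta_n \in \Theta$ for every $n$ by construction of the proximal map.

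First I would use sub-multiplicativity of the operator norm together with the elementary inequality $(a+b)^2 \le 2a^2 + 2b^2$ to obtain, almost surely and for every $n$,
\[
\left\| \Psi(\theta_n) \left( S_{n+1} - \barS(\theta_n) \right) \right\|^2 \le \| \Psi(\theta_n) \|^2 \, \| S_{n+1} - \barS(\theta_n) \|^2 \le 2 \left( \sup_{\theta \in \Theta} \| \Psi(\theta) \|^2 \right) \left( \| S_{n+1}^\sa \|^2 + \sup_{\theta \in \Theta} \| \barS(\theta) \|^2 \right),
\]
where both suprema are finite by the hypothesis of the proposition. Taking expectations gives, for a finite constant $C_1$ depending only on $\sup_\Theta \| \Psi \|$ and $\sup_\Theta \| \barS \|$,
\[
\PE\left[ \left\| \Psi(\theta_n) \left( S_{n+1} - \barS(\theta_n) \right) \right\|^2 \right] \le C_1 \left( \sup_{k \ge 0} \PE\left[ \| S_k^\sa \|^2 \right] + 1 \right).
\]
Then I would invoke Lemma~\ref{lem:IteratesKernels}, whose sole hypothesis is H\ref{A4}\ref{A4:item1}), to get $\sup_{k \ge 0} \PE[\| S_k^\sa \|^2] =: C_2 < \infty$; hence the right-hand side is bounded by a constant $C := C_1(C_2+1)$ independent of $n$. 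Multiplying by $\gamma_{n+1}^2$ and summing over $n$ yields
\[
\sum_n \gamma_{n+1}^2 \, \PE\left[ \left\| \Psi(\theta_n) \left( S_{n+1} - \barS(\theta_n) \right) \right\|^2 \right] \le C \sum_n \gamma_{n+1}^2,
\]
which is the claim.

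There is essentially no hard step here: the only substantive input is the uniform $L^2$-bound on $S_k^\sa$, and that is where H\ref{A4}\ref{A4:item1}) does the real work — through $\sqrtWnorm{S} < \infty$ combined with the uniform drift inequality $\sup_\theta P_\theta W \le \lambda W + b$, which together control $\PE[\| S(\hid_{j,k-1}) \|^2]$ along the path produced by the algorithm and, after the telescoping/Minkowski argument of Lemma~\ref{lem:IteratesKernels}, the second moment of $S_k^\sa$ itself. Note that this proposition is the one place in the convergence argument where boundedness of the error $S_{n+1} - \barS(\theta_n)$ in $L^2$ (rather than its decay) is exploited, which is precisely why the summability of $\sum_n \gamma_{n+1}^2$ imposed in H\ref{A5}\ref{A5:item1}) is needed to make the third random sum of Theorem~\ref{theo:Cvg:perturbedEMprox} almost-surely finite.
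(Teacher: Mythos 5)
Your proof is correct and follows essentially the same route as the paper's: bound the operator norm of $\Psi$ and the norm of $\barS$ uniformly over $\Theta$ via the stated hypothesis, apply $(a+b)^2 \le 2a^2+2b^2$, and invoke the uniform second-moment bound $\sup_k \PE[\|S_k^\sa\|^2]<\infty$ from Lemma~\ref{lem:IteratesKernels} under H\ref{A4}\ref{A4:item1}). No gaps.
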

\begin{proof}
  We write
\begin{multline*}
  \PE\left[ \left\| \Psi(\theta_n) \left( S_{n+1} -
        \barS(\theta_n) \right) \right\|^2 \right]  \\
  \leq 2 \sup_{\Theta} \| \Psi(\theta) \|^2 \left( \sup_n \PE\left[
      \|S_{n} \|^2 \right]+ \sup_{\Theta} \| \barS \|^2 \right),
\end{multline*}
and conclude by Lemma~\ref{lem:IteratesKernels}.
\end{proof}

\begin{proposition}
  \label{prop:MartingaleReste} Let $\{\theta_n, n\geq 0\}$ be given by
  Algorithm~\ref{algo:SAPG}.  Assume H\ref{A1}, H\ref{A3},
  H\ref{A4}\ref{A4:item1}-\ref{A4:item2}) and
  H\ref{A5}\ref{A5:item1}). In the biased case, assume also
  H\ref{A4}\ref{A4:item3}) and H\ref{A5}\ref{A5:item2}).  Let
  $\{\A_n, n \geq 0 \}$ be a sequence of $d' \times q$ random matrices
  such that for any $n \geq 0$, $\A_{n+1} \in \F_n$, and there exists
  a constant $C_\star$ such that almost-surely
\begin{eqnarray}
&&  \|\A_{n+1} \| \leq C_\star \gamma_{n+1},  \label{eq:HypSurMatriceA:bound1}\\
&& \|\A_{n+1} - \A_n \| \leq C_\star  a_{n+1} \left(1 + \|S_n - \barS(\theta_{n-1})\| \right); \label{eq:HypSurMatriceA:bound2}
\end{eqnarray}
here
$a_{n+1} \eqdef \gamma_n \gamma_{n+1} + \gamma_n^2 + | \gamma_{n+1} -
\gamma_n |$.
Then, almost-surely, the series
$\sum_n \A_{n+1} \left( S_{n+1} - \barS(\theta_n) \right)$ converges.
\end{proposition}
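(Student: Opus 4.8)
The plan is to use the closed-form expansion of weighted averages in Lemma~\ref{lem:IterateApproxSto} to split each of the two target series into four families of simpler series, then treat three of them by absolute convergence and the fourth by a martingale argument. Throughout, set $w_n \eqdef S_n^\sa - \barS(\theta_{n-1})$ and recall the facts we may invoke: $\sup_n \PE[\|w_n\|^2] < \infty$ (Lemma~\ref{lem:IteratesKernels} and H\ref{A3}); $\|\theta_n - \theta_{n-1}\| \leq C\gamma_n (1+\|w_n\|)$ a.s. (Lemma~\ref{lem:DeltaTheta}), hence $\|\barS(\theta_{n-1}) - \barS(\theta_{n-2})\| \leq C\gamma_{n-1}(1+\|w_{n-1}\|)$ by H\ref{A3}\ref{A3:item3}); and $\D_n < \infty$ for every $n$ (implicit in H\ref{A5}, cf. the remarks following it).

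First I would apply Lemma~\ref{lem:IterateApproxSto}, at index $n+1$, with the random matrix sequence $(\A_n)$ and with $\sigma_n \eqdef \barS(\theta_n)$, which gives
\begin{multline*}
  \A_{n+1} w_{n+1} = \Delta_{2:n+1}\,\A_1 w_1 + \sum_{j=2}^{n+1}\Delta_{j:n+1}\,(\A_j-\A_{j-1})\,w_{j-1} \\
  + \sum_{j=2}^{n+1}\Delta_{j:n+1}\,\A_j\bigl(\barS(\theta_{j-2})-\barS(\theta_{j-1})\bigr) + \sum_{j=2}^{n+1}\Delta_{j+1:n+1}\,\delta_j\,\A_j\,\bar e_j ,
\end{multline*}
with $\bar e_j \eqdef m_j^{-1}\sum_{k=1}^{m_j} S(\hid_{k,j-1}) - \barS(\theta_{j-1})$. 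Summing over $n=0,\dots,N$ and interchanging the two finite sums, the weight attached to index $j$ becomes $\sum_{n\geq j-1}\Delta_{j:n+1}\leq \D_j$ in the second and third families, and $\omega_j^{(N)}\eqdef\sum_{n=j-1}^{N}\Delta_{j+1:n+1}\leq 1+\D_{j+1}$ in the last one, with $\omega_j^{(N)}\uparrow 1+\D_{j+1}$ as $N\to\infty$; also $\sum_n\Delta_{2:n+1}\leq 1+\D_2<\infty$.

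For the first three families I bound norms and take expectations. Using $\|\A_{n+1}\|\leq C\gamma_{n+1}$, the hypothesis $\|\A_j-\A_{j-1}\|\leq C\,a_j(1+\|w_{j-1}\|)$ with $a_j=\gamma_{j-1}\gamma_j+\gamma_{j-1}^2+|\gamma_j-\gamma_{j-1}|$, the Lipschitz bound on $\barS$ recalled above and $\sup_n\PE[\|w_n\|^2]<\infty$, the sum over $n$ of the expected norms of these three contributions is dominated by a constant times $\sum_j\bigl(\gamma_{j-1}\gamma_j+\gamma_{j-1}^2+|\gamma_j-\gamma_{j-1}|\bigr)\D_j$, finite by H\ref{A5}\ref{A5:item1}). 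A nonnegative random series with finite expectation being a.s. finite, these three series converge absolutely a.s.; the boundary series $\sum_n\Delta_{2:n+1}\A_1 w_1$ is a.s. summable since $\A_1 w_1$ is $\F_1$-measurable and a.s. finite.

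It remains to treat $\sum_j \omega_j^{(N)}\,\delta_j\,\A_j\,\bar e_j$. I would decompose $\bar e_j = \partial M_j + R_{j,1} + R_{j,2}$ via the Poisson equation of Lemma~\ref{lem:PoissonSolution}, exactly as in the proof of Proposition~\ref{prop:L2norm:DeltaS} (in the unbiased case $R_{j,1}=R_{j,2}=0$ and $\bar e_j=\partial M_j$). With the $N$-independent weight $c_j\eqdef(1+\D_{j+1})\delta_j\A_j$, which is $\F_{j-1}$-measurable, $\{c_j\partial M_j\}_j$ is a sequence of martingale increments (each $\partial M_j$ being itself a sum of martingale increments over $k$), with $\PE\|c_j\partial M_j\|^2\leq C\,\gamma_j^2\delta_j^2(1+\D_{j+1})^2 m_j^{-1}$ by Lemma~\ref{lem:PoissonSolution} (which bounds $\sqrtWnorm{\hatS_\theta}$ uniformly) and Lemma~\ref{lem:IteratesKernels} (which bounds the relevant $W$-moments); by H\ref{A5}\ref{A5:item1}) the partial sums then form an $L^2$-bounded martingale and converge a.s. The term $\sum_j\omega_j^{(N)}\delta_j\A_j R_{j,1}$ is handled by an Abel summation in $j$: the boundary terms are negligible and the difference term has summable absolute expectation by the first two conditions of H\ref{A5}\ref{A5:item2}) (controlling the variations of the scalar weights $\delta_j m_j^{-1}$ and $\D_{j+1}\delta_j m_j^{-1}$) together with its third condition (controlling the part involving $\|\A_j-\A_{j-1}\|$), using $\sup_\theta\sqrtWnorm{P_\theta\hatS_\theta}<\infty$ and the Lipschitz-in-$\theta$ property of $\theta\mapsto P_\theta\hatS_\theta$ from Lemma~\ref{lem:PoissonSolution}, which invokes H\ref{A4}\ref{A4:item3}). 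Finally $\|R_{j,2}\|\leq C m_j^{-1}\|\theta_j-\theta_{j-1}\|\,\sqrt{W}(\hid_{0,j})\leq C m_j^{-1}\gamma_j(1+\|w_j\|)\,\sqrt{W}(\hid_{0,j})$ by the same Lipschitz property, so $\sum_j \omega_j^{(N)}\delta_j\A_j R_{j,2}$ converges absolutely a.s. by the third condition of H\ref{A5}\ref{A5:item2}). The one genuinely delicate point is that the weights $\omega_j^{(N)}$ produced by the interchange depend on the truncation level $N$: to reach a true limit one runs the martingale argument with the $N$-independent weights $1+\D_{j+1}$ and then checks that the truncation remainder $\D_{N+2}\sum_{j\leq N+1}\Delta_{j+1:N+1}\,\delta_j\A_j\bar e_j$ tends to $0$ (a routine if tedious $L^2$-estimate from H\ref{A5}). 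Collecting the four contributions yields the a.s. convergence of $\sum_n\A_{n+1}\bigl(S_{n+1}-\barS(\theta_n)\bigr)$.
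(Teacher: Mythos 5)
Your proposal is correct and follows essentially the same route as the paper's proof: the same expansion via Lemma~\ref{lem:IterateApproxSto} with $\sigma_n=\barS(\theta_n)$, absolute convergence for the first three families using $\sum_j a_j\D_j<\infty$ from H\ref{A5}\ref{A5:item1}), and for the fourth the Poisson-equation split into a square-summable martingale, an Abel-summed $R_{j,1}$ term controlled by H\ref{A5}\ref{A5:item2}), and an absolutely convergent $R_{j,2}$ term. Your extra care about the $N$-dependence of the weights $\omega_j^{(N)}$ after interchanging the sums is a point the paper treats implicitly, and your fix (working with the limiting weights $1+\D_{j+1}$ and estimating the truncation remainder) is sound.
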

By Lemma~\ref{lem:IterateApproxSto} applied with $\sigma_n = \barS(\theta_n)$, we
decompose this sum into four terms:
\begin{align*}
  \mathcal{T}_{1} & \eqdef \sum_{n \geq 2}  \Delta_{2:n} \A_{1} \left( S_1 - \barS(\theta_0) \right)  \\
                  & = \D_2 \A_{1} \left( S_1 - \barS(\theta_0) \right), \\
  \mathcal{T}_{2} & \eqdef \sum_{n \geq 2} \sum_{j=2}^{n} \Delta_{j:n} \left(\A_j - \A_{j-1} \right) \left(S_{j-1} - \barS(\theta_{j-2}) \right)  \\
                  & = \sum_{j \geq 2} \D_j  \left(\A_j - \A_{j-1} \right) \left(S_{j-1} - \barS(\theta_{j-2}) \right), \\
  \mathcal{T}_3 & \eqdef \sum_{n \geq 2} \sum_{j=2}^{n} \Delta_{j:n} \A_j \left( \barS(\theta_{j-2}) - \barS(\theta_{j-1}) \right) \\
                  &  = \sum_{j \geq 2}  \D_j \A_j \left( \barS(\theta_{j-2}) - \barS(\theta_{j-1}) \right), \\
  \mathcal{T}_4 & \eqdef \sum_{n \geq 2} \sum_{j=2}^{n} \delta_j  \Delta_{j+1:n} \A_j \left( m_j^{-1} \sum_{k=1}^{m_j} S(\hid_{k,j-1})- \barS(\theta_{j-1}) \right) \\
                  & = \sum_{j \geq 2} \delta_j (1+\D_{j+1}) \A_j \left(m_j^{-1} \sum_{k=1}^{m_j} S(\hid_{k,j-1})- \barS(\theta_{j-1}) \right).
\end{align*}
We have by using Eq.~\eqref{eq:HypSurMatriceA:bound1},
\[
\D_2 \| \A_{1} \left( S_1 - \barS(\theta_0) \right) \| \leq \D_2 C_\star
\gamma_1 \left(  \|S_1 \| + \sup_\Theta \| \barS \| \right).
\]
By H\ref{A5}\ref{A5:item1}), $\D_2 < \infty$ so the RHS is
finite thus implying that $\mathcal{T}_1$ is finite almost-surely.

Using Eq.\eqref{eq:HypSurMatriceA:bound2}, there exists a constant $C$
such that
\begin{multline*}
  \PE\left[ \sum_{j \geq 2 }  \D_j \| \left(\A_j - \A_{j-1} \right) \left(S_{j-1} - \barS(\theta_{j-2}) \right) \| \right]  \\
  \leq C \left(1 + \sup_n \PE\left[ \|S_n \|^2\right] + \sup_\Theta
    \|\barS \| \right) \sum_{j \geq 2} a_j \D_j.
\end{multline*}
By H\ref{A3}\ref{A3:item2})-H\ref{A3}\ref{A3:item3}),
H\ref{A5}\ref{A5:item1}) and Lemma~\ref{lem:IteratesKernels}, the RHS
is finite thus implying that $\mathcal{T}_2$ is finite almost-surely.

Similarly, there exists a constant $C$ such that
\begin{multline*}
  \PE\left[ \sum_{j \geq 2} \D_j \| \A_j \left( \barS(\theta_{j-2}) -
      \barS(\theta_{j-1}) \right) \| \right]  \\
  \leq C \, \sum_{j \geq 2} \gamma_j \D_j \PE\left[ \|
    \barS(\theta_{j-1}) - \barS(\theta_{j-2}) \|\right].
\end{multline*}
By H\ref{A3}\ref{A3:item3}), the RHS is bounded (up to a
multiplicative constant) by
$\sum_{j \geq 2} \gamma_j \D_j \PE\left[ \| \theta_{j-1} -
  \theta_{j-2} \|\right]$;
and by H\ref{A5}\ref{A5:item1}) and Lemmas~\ref{lem:IteratesKernels} and
\ref{lem:DeltaTheta}, this sum is finite. Hence $\mathcal{T}_3$ is
finite almost-surely.

We give the proof of the convergence of the last term in the biased
case:
$\PE\left[S(\hid_{k,n}) \vert \F_n \right] \neq \barS(\theta_n)$.  The
proof in the unbiased case corresponds to the following lines with
$R_{j,1} = R_{j,2} =0$ and $\hatS_\theta = S$.  Set
$\overline{\D}_j \eqdef \delta_j (1+\D_{j+1})$.  By
Lemma~\ref{lem:PoissonSolution}, there exists $\hatS_\theta$ such that
\begin{multline*}
  S(\hid_{k,j-1}) - \barS(\theta_{j-1}) \\
  = \hatS_{\theta_{j-1}}(\hid_{k,j-1}) - P_{\theta_{j-1}}
  \hatS_{\theta_{j-1}}(\hid_{k,j-1}),
\end{multline*}
and $\sup_{\theta \in \Theta} \sqrtWnorm{ \hatS_\theta }< \infty$.
Hence, we have
$$\mathcal{T}_4 = \sum_{j \geq 2} \overline{\D}_j \A_j \left( \partial M_j +
  R_{j,1} + R_{j,2} \right),$$ where
\begin{align*}
  & \partial M_j \eqdef m_j^{-1} \sum_{k=1}^{m_j} \left( \hatS_{\theta_{j-1}}(\hid_{k,j-1}) -
    P_{\theta_{j-1}} \hatS_{\theta_{j-1}}(\hid_{k-1,j-1}) \right), \\
  &  R_{j,1} \eqdef m_j^{-1} \left( P_{\theta_{j-1}} \hatS_{\theta_{j-1}}(\hid_{0,j-1}) - P_{\theta_{j}} \hatS_{\theta_{j}}(\hid_{0,j}) \right),  \\
  & R_{j,2} \eqdef m_j^{-1} \left( P_{\theta_{j}} \hatS_{\theta_{j}}(\hid_{0,j}) - P_{\theta_{j-1}} \hatS_{\theta_{j-1}}(\hid_{0,j}) \right).
\end{align*}
Upon noting that
$\PE\left[ \A_j \partial M_j \vert \F_{j-1} \right] = 0$, the
almost-sure convergence of the series
$\sum_j \overline{\D}_j \A_j \partial M_j$ is proved by checking
criteria for the almost-sure convergence of a martingale. By
\eqref{eq:HypSurMatriceA:bound1}, there exists a constant $C$ such that
\begin{multline*}
\sum_j \overline{\D}_j^2  \PE\left[ \| \A_j  \partial M_j \|^2  \right] \leq C \, \sum_j \frac{\gamma_j^2}{m_j^2} \overline{\D}_j^2 \cdots \\
\quad \times   \PE\left[ \left\| \sum_{k=1}^{m_j} \left( \hatS_{\theta_{j-1}}(\hid_{k,j-1}) -
    P_{\theta_{j-1}} \hatS_{\theta_{j-1}}(\hid_{k-1,j-1})\right)  \right\| ^2 \right].
\end{multline*}
By H\ref{A5}\ref{A5:item1}), Lemma~\ref{lem:IteratesKernels} and
\cite[Theorem 2.10]{hall:heyde:1980}, the RHS is finite. \cite[Theorem
2.17]{hall:heyde:1980} implies that
$\sum_j \overline{\D}_j \A_j \, \partial M_j$ is finite
almost-surely. For the second term, we write
\begin{multline*}
  \sum_{j \geq 2} \overline{\D}_j \A_j R_{j,1} = m_2^{-1}
  \overline{\D}_2 \A_2
  P_{\theta_1} \hatS_{\theta_1}(\hid_{0,1})  \\
  + \sum_{j \geq 2} \left( m_{j+1}^{-1} \overline{\D}_{j+1} \A_{j+1} -
    m_j^{-1} \overline{\D}_j \A_j \right) P_{\theta_{j}}
  \hatS_{\theta_{j}}(\hid_{0,j}),
\end{multline*}
so that, by Lemmas~\ref{lem:IteratesKernels} and
\ref{lem:PoissonSolution}, this series is finite almost-surely if
$\sum_j \PE\left[ \| m_{j+1}^{-1} \overline{\D}_{j+1} \A_{j+1} -
  m_j^{-1} \overline{\D}_j \A_j \| \right] < \infty$.
From Eq.~\eqref{eq:HypSurMatriceA:bound1} and Eq.~\eqref{eq:HypSurMatriceA:bound2}, there exists a constant $C$ such that
\begin{multline*}
  \| m_{j+1}^{-1} \overline{\D}_{j+1} \A_{j+1} - m_j^{-1} \overline{\D}_j \A_j \|  \\
  \leq C \gamma_{j+1} \left|m_{j+1}^{-1} \overline{\D}_{j+1} -
    m_j^{-1} \overline{\D}_j \right|  \\
  + m_j^{-1} \overline{\D}_j a_{j+1} \left(1 + \|S_j
    -\barS(\theta_{j-1}) \| \right).
\end{multline*}
H\ref{A5} and Lemma~\ref{lem:IteratesKernels} imply that
$$\sum_j \PE\left[ \| m_{j+1}^{-1} \overline{\D}_{j+1} \A_{j+1} -
  m_j^{-1} \overline{\D}_j \A_j \| \right] < \infty.$$

Finally, by \eqref{eq:HypSurMatriceA:bound1}, Lemmas~\ref{lem:IteratesKernels} to \ref{lem:PoissonSolution}, there exists a constant $C$ such that
\[
\sum_{j \geq 2} \PE\left[  \overline{\D}_j \, \| \A_j R_{j,2} \| \right]  \leq  C\, \sum_{j \geq 2} \gamma_j^2 m_j^{-1} \overline{\D}_j.
\]
The RHS is finite by H\ref{A5} thus implying that
$\sum_j \overline{\D}_j \A_j R_{j,2}$ is finite almost-surely.

\bibliographystyle{plainnat}
\bibliography{BIB}

\end{document}